\newtheorem{theorem}{Theorem}[section]
\newtheorem{lemma}[theorem]{Lemma}
\newtheorem{corollary}[theorem]{Corollary}
\theoremstyle{definition}
\newtheorem{definition}[theorem]{Definition}
\theoremstyle{remark}
\newtheorem{remark}[theorem]{Remark}
\definecolor{darkgreen}{rgb}{0,0.5,0}
\definecolor{darkblue}{rgb}{0,0,0.8}
\crefname{theorem}{Theorem}{Theorems}
\Crefname{lemma}{Lemma}{Lemmas}
\Crefname{observation}{Observation}{Observations}
\Crefname{corollary}{Corollary}{Corollaries}
\Crefname{remark}{Remark}{Remarks}
\newcommand{\hide}[1]{}
\newcommand{\Tso}{T_{\mathsf{so}}}
\newcommand{\LOCAL}{\mathsf{LOCAL}}
\newcommand{\PRAM}{\mathsf{PRAM}}
\newcommand{\eps}{\varepsilon}
\newcommand{\logStar}[1]{\log^{*} #1}
\DeclareMathOperator{\polylog}{polylog}
\newcommand{\R}{\mathbb{R}}
\newcommand{\bigO}{O}
\newcommand{\discr}{\delta}
\newcommand{\lftwo}{\left\lfloor}
\newcommand{\rftwo}{\right\rfloor_*}
\begin{document}

\begin{flushleft}

\vspace*{10mm}
{\huge\bf Improved Distributed Degree Splitting\\and Edge
  Coloring\footnote{A preliminary version of this paper appeared in
    the 31$\mathrm{st}$ International Symposium on Distributed
    Computing (DISC 2017) \cite{DISC17}.}\par}
\vspace{10mm}

\newcommand{\auth}[3]{\textbf{#1}\par#2\par#3\par\medskip}

\auth{Mohsen Ghaffari}
{ETH Zurich}
{ghaffari@inf.ethz.ch}
\auth{Juho Hirvonen\footnote{Supported by Ulla Tuominen Foundation.}}
{Aalto University and IRIF, CNRS, and University Paris Diderot}
{juho.hirvonen@aalto.fi}
\auth{Fabian Kuhn\footnote{Supported by ERC Grant No.\ 336495 (ACDC).}}
{University of Freiburg}
{kuhn@cs.uni-freiburg.de}
\auth{Yannic Maus\footnotemark[3]}
{University of Freiburg}
{yannic.maus@cs.uni-freiburg.de}
\auth{Jukka Suomela}
{Aalto University}
{jukka.suomela@aalto.fi}
\auth{Jara Uitto}
{ETH Zurich and University of Freiburg}
{jara.uitto@inf.ethz.ch}
\end{flushleft}

\vspace{5mm}

\paragraph{Abstract.}
The degree splitting problem requires coloring the edges of a graph red or blue such that each node has almost the same number of edges in each color, up to a small additive discrepancy. The directed variant of the problem requires orienting the edges such that each node has almost the same number of incoming and outgoing edges, again up to a small additive discrepancy.

We present deterministic distributed algorithms for both variants, which improve on their counterparts presented by Ghaffari and Su [SODA'17]: our algorithms are significantly simpler and faster, and have a much smaller discrepancy. This also leads to a faster and simpler deterministic algorithm for $(2+o(1))\Delta$-edge-coloring, improving on that of Ghaffari and Su.

\clearpage


\section{Introduction and Related Work}

In this work, we present improved distributed ($\LOCAL$ model) algorithms for the \emph{degree splitting problem}, and also use them to provide simpler and faster deterministic distributed algorithms for the classic and well-studied problem of \emph{edge coloring}.

\paragraph{\boldmath $\LOCAL$ Model.}
In the standard $\LOCAL$ model of distributed computing\cite{linial1987LOCAL, Peleg:2000}, the network is abstracted as an $n$-node undirected graph $G=(V, E)$, and each node is labeled with a unique $O(\log n)$-bit identifier. Communication happens in synchronous rounds of \emph{message passing}, where in each round each node can send a message to each of its neighbors. At the end of the algorithm each node should output its own part of the solution, e.g., the colors of its incident edges in the edge coloring problem. The time complexity of an algorithm is the number of synchronous rounds.

\paragraph{Degree Splitting Problems.}
The \emph{undirected degree splitting} problem seeks a partitioning of the graph edges $E$ into two parts so that the partition looks almost balanced around each node. Concretely, we should color each edge red or blue such that for each node, the difference between its number of red and blue edges is at most some small \emph{discrepancy} value $\kappa$. In other words, we want an assignment $q\colon E\rightarrow \{+1, -1\}$ such that for each node $v \in V$, we have
\[\textstyle\bigl|\sum_{e\in E(v)} q(e)\bigr|\leq \kappa,\]
where $E(v)$ denotes the edges incident on $v$. We want $\kappa$ to be as small as possible.

In the \emph{directed} variant of the {degree splitting} problem, we should orient all the edges such that for each node, the difference between its number of incoming and outgoing edges is at most a small discrepancy value $\kappa$.

\paragraph{Why Should One Care About Distributed Degree Splittings?}
On the one hand, degree splittings are natural tools for solving other problems with a \emph{divide-and-conquer} approach. For instance, consider the well-studied problem of edge coloring, and suppose that we are able to solve degree splitting efficiently with discrepancy $\kappa=O(1)$. We can then compute an edge coloring with ${(2+\eps)\Delta}$ colors, for any constant $\eps>0$; as usual, $\Delta$ is the maximum degree of the input graph $G=(V,E)$. For that, we recursively apply the degree splittings on $G$, each time reapplying it on each of the new colors, for a recursion of height $h=O(\log \eps\Delta)$. This way we partition $G$ in $2^{h}$ edge-disjoint graphs, each with maximum degree at most
\[\Delta'=\frac{\Delta}{2^{h}} + \sum_{i=1}^{h}\frac{\kappa}{2^{i}} \leq \frac{\Delta}{2^{h}} + \kappa = O(1/\eps).\]
We can then edge color each of these graphs with $2\Delta'-1$ colors, using standard algorithms (simultaneously in parallel for all graphs and with a separate color palette for each graph), hence obtaining an overall coloring for $G$ with $2^{h} \cdot (2\Delta'-1) \leq 2\Delta + 2^{h}\kappa = (2+\eps)\Delta$ colors. We explain the details of this relation, and the particular edge coloring algorithm that we obtain using our degree splitting algorithm, later in \Cref{crl:edgeColoring}.

On the other hand, degree splitting problems are interesting also on their own: they seem to be an elementary locally checkable labeling (LCL) problem\cite{naor1993can}, and yet, even on bounded degree graphs, their distributed complexity is highly non-trivial. In fact, they exhibit characteristics that are intrinsically different from those of the classic problems of the area, including maximal independent set, maximal matching, $\Delta+1$ vertex coloring, and $2\Delta-1$ edge coloring. All of these classic problems admit trivial sequential greedy algorithms, and they can also be solved very fast distributedly on bounded degree graphs, in $\Theta(\log^* n)$ rounds\cite{linial1987LOCAL}. In contrast, degree splittings constitute a middle ground in the complexity: even on bounded degree graphs, deterministic degree splitting requires $\Omega(\log n)$ rounds, as shown by Chang et al.~\cite{chang2016exponential}, and randomized degree splitting requires $\Omega(\log\log n)$ rounds, as shown by Brandt et al.~\cite{brandt2016lower}. These two lower bounds were presented for the \emph{sinkless orientation} problem, introduced by Brandt et al.~\cite{brandt2016lower}, which can be viewed as a very special case of directed degree splitting: In sinkless orientation, we should orient the edges so that each node of degree at least $d$, for some large enough constant $d$, has at least one outgoing edge. For this special case, both lower bounds are tight\cite{ghaffari17}.

\paragraph{What is Known?}
First, we discuss the existence of low-discrepancy degree splittings. Any graph admits an undirected degree splitting with discrepancy at most $2$. This is the best possible, as can be seen on a triangle. This low-discrepancy degree splitting can be viewed as a special case of a beautiful area called \emph{discrepancy theory} (see e.g. \cite{chazelle2000discrepancy} for a textbook coverage), which studies coloring the elements of a ground set red/blue so that each of a collection of given subsets has almost the same number of red and blue elements, up to a small additive discrepancy. For instance, by a seminal result of Beck and Fiala from 1981\cite{beck1981integer}, any hypergraph of rank $t$ (each hyperedge has at most $t$ vertices) admits a red/blue edge coloring with per-node discrepancy at most $2t-2$. See \cite{bukh2016improvement, bednarchak1997note} for some slightly stronger bounds, for large $t$. In the case of standard graphs, where $t=2$, the existence proof is straightforward: Add a dummy vertex and connect it to all odd-degree vertices.  Then, take an Eulerian tour, and color its edges red and blue in an alternating manner. In directed splitting, a discrepancy of $\kappa=1$ suffices, using the same Eulerian tour approach and orienting the edges along a traversal of this tour.

In the algorithmic world, Israeli and Shiloach~\cite{israeli1986improved} were the first to consider degree splittings. They used it to provide an efficient parallel ($\PRAM$ model) algorithm for maximal matching. This, and many other works in the $\PRAM$ model which later used degree splittings (e.g., \cite{karloff1987efficient}) relied on computing Eulerian tours, following the above scheme. Unfortunately, this idea cannot be used efficiently in the distributed setting, as an Eulerian tour is a non-local structure: finding and alternately coloring it needs $\Omega(n)$ rounds on a simple cycle.

Inspired by Israeli and Shiloach's method~\cite{israeli1986improved}, Hanckowiak et al.~\cite{hanckowiak01} were the first to study degree splittings in distributed algorithms. They used it to present the breakthrough result of a $\polylog(n)$-round deterministic distributed maximal matching, which was the first efficient deterministic algorithm for one of the classic problems. However, for that, they ended up having to relax the degree splitting problem in one crucial manner: they allowed a $\delta=1/\polylog n$ fraction of nodes to have arbitrary splits, with no guarantee on their balance. As explained by Czygrinow et al.~\cite{czygrinow2001coloring}, this relaxation ends up being quite harmful for edge coloring; without fixing that issue, it seems that one can get at best an $O(\Delta\log n)$-edge coloring.

Very recently, Ghaffari and Su\cite{ghaffari17} presented solutions for degree splitting without sacrificing any nodes, and used this to obtain the first $\polylog n$ round algorithm for ${(2+o(1))\Delta}$-edge coloring, improving on prior $\polylog(n)$-round algorithms that used more colors: the algorithm of Barenboim and Elkin~\cite{Barenboim:edge-coloring} for ${\Delta\cdot \exp(O(\frac{\log \Delta}{\log\log \Delta}))}$ colors, and the algorithm of Czygrinow et al.~\cite{czygrinow2001coloring} for $O(\Delta\log n)$ colors. The degree splitting algorithm of Ghaffari and Su\cite{ghaffari17} obtains a discrepancy $\kappa=\eps\Delta$ in $O((\Delta^2 \log^5 n)/\eps)$ rounds. Their method is based on iterations of flipping augmenting paths (somewhat similar in style to blocking flows in classic algorithms for the maximum flow problem\cite{dinitz2006dinitz}) but the process of deterministically and distributedly finding enough disjoint augmenting paths is quite complex. Furthermore, that part imposes a crucial limitation on the method: it cannot obtain a discrepancy better than $\Theta(\log n)$. As such, this algorithm does not provide any meaningful solution in graphs with degree $o(\log n)$.

\paragraph{Our Contributions.}
Our main result is a deterministic distributed algorithm for degree splitting that improves on the corresponding result of \cite{ghaffari17}. The new algorithm is (1) simpler, (2) faster, and (3) it gives a splitting with a much lower discrepancy.

\pagebreak
\begin{restatable}{theorem}{thmMainSplitting}\label{thm:mainSplitting}
For every $\eps>0$, there are deterministic $O\big(\eps^{-1}\cdot\log\eps^{-1}\cdot\big(\log\log\eps^{-1}\big)^{1.71}\cdot \log n\big)$-round distributed algorithms for computing directed and undirected degree splittings with the following properties:
  \begin{enumerate}[label=(\alph*)]
  \item For directed degree splitting, the discrepancy at each node $v$ of degree $d(v)$ is at most $\eps \cdot d(v) + 1$ if $d(v)$ is odd and at most $\eps\cdot  d(v) + 2$ if $d(v)$ is even.
  \item For undirected degree splitting, the discrepancy at each node $v$ of degree $d(v)$ is at most $\eps\cdot d(v) + 4$.
  \end{enumerate}
\end{restatable}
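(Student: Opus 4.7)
The plan is to reduce both variants of the problem to a single distributed primitive: local pairing of edges at each node combined with consistent orientation of the resulting auxiliary graph. At each node $v$, I would pair its incident edges into $\lfloor d(v)/2 \rfloor$ pairs, with at most one leftover when $d(v)$ is odd. These pairings define an auxiliary multigraph $H$ on vertex set $E(G)$, in which each $e \in E(G)$ has degree exactly $2$ unless it is an unpaired leftover at one endpoint. Hence $H$ is a disjoint union of cycles and a few dangling paths. A consistent orientation of each component of $H$ as a directed walk induces a directed splitting of $G$: within each pair at $v$, one edge becomes incoming and the other outgoing, contributing $0$ to the discrepancy at $v$. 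The only contributions then come from unpaired leftovers at odd-degree vertices, which accounts for the $+1$ term in part (a).

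The main algorithmic obstacle is to orient the components of $H$ quickly. Since $H$ may contain arbitrarily long cycles, a naive approach needs $\Omega(n)$ rounds. My plan is to deliberately weaken the pairing: at each $G$-node $v$, drop up to $\eps \cdot d(v)/2$ pairs so that the remaining auxiliary graph $H'$ consists of paths of length $\poly(\eps^{-1})$, which can then be oriented deterministically in $O(\log n)$ rounds by standard tools on low-degree graphs (path/cycle coloring, rake-and-compress, or $(\Delta+1)$-coloring on $H'$). Each dropped pair contributes at most $2$ to the discrepancy at its $G$-vertex, so the dropping overhead per vertex is at most $\eps \cdot d(v)$. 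Combined with the leftover term, this gives the directed bound $\eps d(v) + 1$ for odd $d(v)$ and $\eps d(v) + 2$ for even $d(v)$ in part (a).

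For the undirected variant (b), the same pairing machinery applies, but the orientation step is replaced by an alternating $2$-coloring (red/blue) along each component of $H$. Even cycles and paths admit perfect alternation, so each internal pair contributes $0$ to the $G$-discrepancy. Odd cycles are the only obstruction: each such cycle forces one monochromatic pair, contributing at most $2$ to the discrepancy at its $G$-vertex. I would fold odd-cycle defects into the dropped-pair budget so that their aggregate contribution at $v$ is again bounded by $\eps \cdot d(v)$, plus the leftover term and a small constant arising from rounding and boundary effects at the break points; a careful accounting should yield the claimed $+4$.

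The most delicate step will be the distributed computation of the drop set: choosing at most $\eps \cdot d(v)/2$ pairs at each $G$-vertex so that every surviving component of $H$ has diameter $\poly(\eps^{-1})$, all within the stated round complexity. This will require a deterministic distributed primitive on $H$ (which, viewed as a low-degree graph, admits path/cycle $(\Delta+1)$-coloring) combined with a load-balancing argument that spreads breakpoints proportionally to $d(v)$ rather than uniformly. I expect this step to dominate the round count, contributing the $\eps^{-1}\cdot \log \eps^{-1}\cdot(\log\log \eps^{-1})^{1.71}$ factor; once the drop set is in hand, the remaining orientation or $2$-coloring of short paths is a straightforward $O(\log n)$-round postprocessing, and the final discrepancy bounds follow from the per-source accounting sketched above.
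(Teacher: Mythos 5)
Your approach starts from an Euler-style pairing at every vertex (forming the auxiliary graph $H$ on $E(G)$ whose components are cycles and paths) and then tries to repair it by dropping a budgeted number of pairs per vertex to cut the components short. The central difficulty is precisely this repair step, and your sketch leaves it open: it is not enough to compute a ruling set on the components of $H$ with spacing $\poly(\eps^{-1})$, because the breakpoints must land at $G$-vertices in a way that respects the per-vertex budget $\eps\,d(v)/2$. A low-degree vertex $v$ can appear along a single long cycle of $H$ exactly at the breakpoint positions of a naively chosen ruling set, in which case $v$ absorbs $\Theta(d(v))$ cuts rather than $\eps\,d(v)/2$ of them. Getting this alignment right is a genuine discrepancy-type load-balancing problem, and you do not propose a mechanism for it; you flag it as ``the most delicate step'' and defer it. In fact the paper explicitly cites the fact that Euler-tour decompositions cannot be found/repaired locally as the reason a different approach is needed.

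The paper avoids ever forming long cycles in the first place, which is the idea your plan is missing. Rather than pairing all edges at once and cutting afterwards, it builds the low-degree endpoint structure incrementally: compute a weak $k(v)$-orientation (starting from sinkless orientations, so each node has many outgoing edges), have each node contract pairs of its \emph{outgoing} edges (so contractions are automatically conflict-free across adjacent nodes and path lengths only double per round), and recurse on the contracted multigraph. Crucially, the quality of the weak-orientation algorithm is itself \emph{amplified} via the contracted graph (\Cref{lemma:weakTransformation}, \Cref{thm:orientationAlgorithms}): an orientation with outdegree a $(1/2-\eps_1)$ fraction of degree is bootstrapped into one with outdegree a $(1/2-\eps_2)$ fraction for much smaller $\eps_2$, which is what yields the $\eps^{-1}\log\eps^{-1}(\log\log\eps^{-1})^{1.71}$ factor and the small additive constants. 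Your sketch contains neither the orientation-guided contraction (which guarantees compatibility and bounded path growth) nor the recursive amplification (which gives the running time), so the current proposal does not constitute a proof.
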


An important corollary of this splitting result is a faster and simpler algorithm for $(2+o(1))\Delta$-edge coloring, which improves on the corresponding result from \cite{ghaffari17}. The related proof is deferred to \Cref{sec:edgeColoring}.

\begin{restatable}{corollary}{crlEdgeColoring}\label{crl:edgeColoring}
For every $\eps>1/\log \Delta$, there is a deterministic distributed algorithm that computes a $(2+\eps)\Delta$-edge coloring in $\bigO\big(\log^2\Delta \cdot \eps^{-1} \cdot \log\log \Delta \cdot (\log\log\log\Delta)^{1.71} \cdot \log n\big)$ rounds.
\end{restatable}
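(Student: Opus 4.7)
The plan is to implement the recursive splitting scheme sketched just before \Cref{thm:mainSplitting}, using the undirected splitting algorithm of \Cref{thm:mainSplitting}(b). I pick a per-splitting discrepancy parameter $\eps'$ and a recursion depth $h$, and recursively apply the splitting: at level $i$, each of the $2^i$ current monochromatic subgraphs is split in parallel on edge-disjoint subgraphs. After $h$ levels I obtain $2^h$ edge-disjoint subgraphs; on each of them I run in parallel any standard deterministic $O(\Delta_h + \logStar n)$-round algorithm for $(2\Delta_h - 1)$-edge-coloring, assigning disjoint color palettes of size $2\Delta_h - 1$ to different subgraphs.

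The main estimate is the recurrence for the maximum degree $\Delta_i$ after $i$ recursive splittings. By \Cref{thm:mainSplitting}(b), from a subgraph of maximum degree $\Delta_i$ each half has maximum degree at most $(\Delta_i + \eps'\Delta_i + 4)/2$, giving
\[
\Delta_{i+1} \le \frac{1+\eps'}{2}\,\Delta_i + 2,
\]
which solves to $\Delta_h \le (1+\eps')^h\Delta/2^h + O(1)$. Choosing $h = \lceil \log_2(\eps\Delta) \rceil$ and $\eps' = \Theta(\eps/\log\Delta)$ keeps $(1+\eps')^h \le 1 + O(\eps)$ while yielding $\Delta_h = O(1/\eps)$. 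Summing over the $2^h$ disjoint palettes, the total number of colors is $2^h(2\Delta_h - 1) \le 2(1+\eps')^h\Delta + O(2^h) = (2+O(\eps))\Delta$, so after rescaling $\eps$ by a constant factor we obtain exactly $(2+\eps)\Delta$ colors.

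For the round complexity, each of the $h = O(\log\Delta)$ levels costs $O\bigl(\eps'^{-1}\cdot\log\eps'^{-1}\cdot(\log\log\eps'^{-1})^{1.71}\cdot\log n\bigr)$ rounds by \Cref{thm:mainSplitting}. Plugging in $\eps'^{-1} = O(\log\Delta/\eps)$ and using the hypothesis $\eps > 1/\log\Delta$ so that $\log\eps'^{-1} = O(\log\log\Delta)$ and $\log\log\eps'^{-1} = O(\log\log\log\Delta)$, the per-level cost becomes $O\bigl((\log\Delta/\eps)\cdot\log\log\Delta\cdot(\log\log\log\Delta)^{1.71}\cdot\log n\bigr)$. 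Multiplying by $h$ gives the claimed $O\bigl(\log^2\Delta\cdot\eps^{-1}\cdot\log\log\Delta\cdot(\log\log\log\Delta)^{1.71}\cdot\log n\bigr)$ bound; the final $(2\Delta_h-1)$-edge-coloring step costs only $O(1/\eps + \logStar n)$ rounds and is absorbed.

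The main obstacle is the coupled tension between $\eps'$ and $h$: making $\eps'$ too large lets $(1+\eps')^h$ inflate the color count past $2+\eps$, while making $\eps'$ too small blows up each level's running time. The choice $\eps' = \Theta(\eps/\log\Delta)$ threads the needle, and the hypothesis $\eps > 1/\log\Delta$ in the corollary is precisely what guarantees $\eps' = \Omega(1/\log^2\Delta)$ so the nested logarithmic factors from \Cref{thm:mainSplitting} collapse to the clean form stated. A secondary point to verify is that the additive $+4$ per splitting does not accumulate across levels; this is automatic because it enters the recurrence with coefficient $\tfrac12$, yielding a geometric series bounded by $O(1)$ in $\Delta_h$.
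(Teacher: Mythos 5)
Your proposal is correct and follows essentially the same route as the paper's proof: recursive application of the undirected splitting from \Cref{thm:mainSplitting}(b) with per-level parameter $\Theta(\eps/\log\Delta)$ for $O(\log\Delta)$ levels, the same degree recurrence, the same bookkeeping to show the additive error stays $O(1)$ and the color count stays $(2+O(\eps))\Delta$, and the same use of the hypothesis $\eps > 1/\log\Delta$ to collapse the nested logarithms in the runtime. The only cosmetic difference is that you defer constant-fixing to a final rescaling of $\eps$, whereas the paper hard-codes $\gamma = \eps/(20\log\Delta)$ and $h = \log(\eps\Delta/18)$ so that the bound $(2+\eps)\Delta$ comes out exactly.
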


This is significantly faster than the $\bigO(\log^{11} n/\eps^3)$-round algorithm of \cite{ghaffari17}. Subsequent and partly also in parallel to the work on the conference version of this paper, there has been further significant progress in the development of deterministic distributed edge coloring algorithms. This in particular includes the 
 first polylogarithmic-time deterministic $(2\Delta-1)$-edge coloring algorithm in \cite{FOCS17} by Fischer, Ghaffari, and Kuhn, which requires $\bigO(\log^7 n)$ rounds. This has afterwards been improved to $\bigO(\log^6 n)$ rounds by Ghaffari, Harris and Kuhn in \cite{FOCS18} and to $\bigO(\log^4 n)$ rounds by Harris in \cite{HarrisDerandomization}. In \cite{GKMU17}, Ghaffari, Kuhn, Maus and Uitto even go below the threshold of $2\Delta-1$ colors and provide deterministic polylogarithmic-time algorithms for $(1+\eps)\Delta$-edge coloring. The splitting result of the current paper plays an important role in the latter result: This splitting brings down the degree to a small value, with a negligible $(1+o(1))$ factor loss, and then those small degree graphs are colored efficiently.

\Cref{thm:mainSplitting} has another fascinating consequence. Assume that we have a graph in which all nodes have an odd degree. If $\eps<1/\Delta$, we get a directed degree splitting in which each node $v$ has outdegree either $\lfloor d(v)/2 \rfloor$ or $\lceil d(v)/2 \rceil$. Note that the number of nodes for which the outdegree is $\lfloor d(v)/2 \rfloor$ has to be exactly $n/2$. We therefore get an efficient distributed algorithm to exactly divide the number of nodes into two parts of equal size in any odd-degree graph. For bounded-degree graphs, the algorithm even runs in time $\bigO(\log n)$.

\paragraph{Our Method in a Nutshell.}
The main technical contribution is a distributed algorithm that partitions the edge set of a given graph in \emph{edge-disjoint short paths} such that each node is the start or end of at most $\delta$ paths. We call such a partition a \emph{path decomposition} and $\delta$ its \emph{degree} (cf. \Cref{fig:pathDecomp} for an illustration of a path decomposition). Now if we orient each path of a path decomposition with degree $\delta$ consistently, we obtain an orientation of discrepancy at most $\delta$. Moreover, such an orientation can be computed in time which is linear in the maximum path length. 

To study path decompositions in graph $G$, it is helpful to consider an auxiliary graph $H$ in which each edge $\{u,v\}$ represents a path from $u$ to $v$ in $G$; now $\delta$ is the maximum degree of graph~$H$. To construct a  low-degree path decomposition where $\delta$ is small, we can start with a trivial decomposition $H = G$, and then repeatedly join pairs of paths: we can replace the edges $\{u,v_1\}$ and $\{u,v_2\}$ in graph $H$ with an edge $\{v_1,v_2\}$, and hence make the degree of $u$ lower, at a cost of increasing the path lengths---this operation is called a \emph{contraction} here.

If each node $u$ simply picked arbitrarily some edges $\{u,v_1\}$ and $\{u,v_2\}$ to contract, this might result in long paths or cycles. The key idea is that we can use a \emph{high-outdegree orientation} to select a good set of edges to contract: Assume that we have an orientation in $H$ such that all nodes have outdegree at least $2k$. Then each node could select $k$ pairs of outgoing edges to contract; this would reduce the maximum degree of $H$ from $\delta$ to $\delta - 2k$ and only double the maximum length of a path. Also see the illustrations of this contracting process in \Cref{fig:contract,fig:contractOne}.

In essence, this idea makes it possible to \emph{amplify} the quality of an orientation algorithm: Given an algorithm $A$ that finds an orientation with a large (but not optimal) outdegree, we can apply $A$ repeatedly to reduce the maximum degree of $H$. This will result in a low-degree path decomposition of $G$, and hence also provide us with a well-balanced orientation in $G$.

\paragraph{Structure.}
The roadmap for this paper is as follows:
\begin{itemize}[noitemsep]
    \item \Cref{sec:shortPathsDecompositions}: Partitioning graphs in edge-disjoint short paths (main technical contribution).
    \item \Cref{sec:basecase}: Finding orientations in $3$-regular graphs (used in \Cref{sec:outdegtwo} and in \Cref{sec:mainsplitting}).
    \item \Cref{sec:outdegtwo}: Finding orientations in $5$-regular graphs (used in \Cref{sec:shortPathsDecompositions}).
    \item \Cref{sec:mainsplitting}: Proof of \Cref{thm:mainSplitting}.
    \item \Cref{sec:edgeColoring}: Proof of \Cref{crl:edgeColoring}.
    \item \Cref{sec:weak2orientation-lb}: A lower bound for orientations in even-degree graphs.
\end{itemize}
Here \Cref{sec:shortPathsDecompositions} is the most interesting part; \Cref{sec:basecase} and \Cref{sec:outdegtwo} deal with some corner cases that are needed in order to have tight constants for odd-degree graphs.


\section{Short Path Decompositions}\label[section]{sec:shortPathsDecompositions}

The basic building block of our approach is to find consistently oriented and short (length $\bigO(\Delta)$) paths in an oriented graph.
The first crucial observation is that an oriented path going through a node $v$ is ``good'' from the perspective of $v$ in the sense that it provides exactly one incoming and one outgoing edge to $v$.
Another important feature is that flipping a consistently oriented path does not increase the discrepancy between incoming and outgoing edges for any non-endpoint node along the path.
Following these observations, we recursively decompose a graph into a set of short paths, and merge the paths to ensure that every node is at the end of only a few paths. If a node $v$ is at the end of $\delta(v)$ paths an arbitrary orientation of these paths will provide a split with discrepancy at most $\delta(v)$ for $v$.

The recursive graph operations may turn graphs into multigraphs with self-loops. Thus throughout the chapter a multigraph is allowed to have self-loops and the nodes of a path $v_1, \ldots, v_k$ do not need to be distinct; however, a path can contain each edge at most once. A self-loop at a node $v$ contributes two to the degree of $v$.

\subsection{Orientations and Edge Contractions}

The core concept to merge many paths in parallel in one step of the aforementioned recursion is given by the concept of weak $k(v)$-orientations. We begin by extending and adapting prior work \cite{ghaffari17} on weak orientations to our needs.
\begin{definition}\label{def:sinkless}
A \emph{weak $k(v)$-orientation} of a multigraph $G=(V,E)$ is an orientation of the edges $E$ such that each node $v\in V$ has outdegree at least $k(v)$.
\end{definition}
Note that a weak $1$-orientation is a \emph{sinkless orientation}. By earlier work, it is known that a weak $1$-orientation can be found in time $\bigO(\log n)$ in simple graphs of minimum degree at least three.
\begin{lemma}[Sinkless Orientation, \cite{ghaffari17}]
\label[lemma]{lemma:weak1}
A weak $1$-orientation can be computed by a deterministic algorithm in $\bigO(\log n)$ rounds in simple graphs with minimum degree $3$ (and by a randomized algorithm in $\bigO(\log \log n)$ rounds in the same setting).
\end{lemma}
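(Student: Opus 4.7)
The plan is to combine the existence of a sinkless orientation with a distributed augmenting-path procedure. Existence is easy: any minimum-degree-three graph admits one (take an orientation that minimises the number of sinks; a remaining sink would admit an augmenting reversal, contradicting minimality). For the algorithm, call $v$ a \emph{sink} if all edges incident to $v$ point into it, and call $u$ a \emph{surplus} node if its outdegree is at least two. If we can find a directed path from $v$ to $u$ and reverse it, then $v$ becomes non-sink, $u$ loses one outgoing edge but stays non-sink, and every internal node exchanges one incoming for one outgoing edge, so its outdegree is unchanged. The structural fact I would use is that the backward BFS from a sink $v$, restricted to sink-only vertices, must hit a surplus node within $O(\log n)$ hops; otherwise, every node at the growing frontier would be a sink with at least three in-edges coming from outside the explored set, and a standard boundary-expansion argument in min-degree-three graphs forbids this for too many levels.

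In each phase of the distributed algorithm, every current sink launches such a backward BFS of depth $O(\log n)$ in parallel and collects candidate augmenting paths; a deterministic symmetry-breaking step on unique identifiers then picks a maximal vertex-disjoint family among these candidates. All chosen paths are flipped simultaneously, and vertex-disjointness guarantees that the local correctness argument applies independently to each path. The main obstacle is bounding the overall round complexity. I would argue via a potential function counting surviving sinks, showing that each phase removes at least a constant fraction of them, so that $O(\log n)$ phases suffice. A naive bound of $O(\log n)$ per phase for the BFS would give $O(\log^2 n)$; tightening it to the claimed $O(\log n)$ is the delicate part, and is obtained by amortising the BFS work across phases (reusing the already-explored portions of the backward trees), as carried out in \cite{ghaffari17}.

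For the randomized $O(\log \log n)$-round bound I would invoke the matching algorithm of \cite{ghaffari17}: each node independently proposes a random outgoing edge, local conflicts are resolved in a constant number of additional rounds, and the analysis shows that the density of remaining sinks decays doubly exponentially, yielding $O(\log \log n)$ rounds overall. This matches the $\Omega(\log \log n)$ lower bound of Brandt et al.~\cite{brandt2016lower}, so no further improvement is possible in the $\LOCAL$ model.
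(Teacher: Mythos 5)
This lemma is quoted as a black box from Ghaffari and Su~\cite{ghaffari17}; the present paper provides no proof, so there is nothing in-paper to compare your argument against. Your reconstruction, however, does not match the route taken in \cite{ghaffari17} and has concrete gaps of its own. The augmenting-path framework you sketch is essentially the engine of Ghaffari--Su's \emph{degree-splitting} algorithm---which the introduction here singles out as slow precisely because ``the process of deterministically and distributedly finding enough disjoint augmenting paths is quite complex''---not of their sinkless-orientation algorithm. For sinkless orientation the argument is far more direct and hinges on a structural fact you do not use: in a graph of minimum degree three, a breadth-first search of depth $\Theta(\log n)$ from any node must meet a non-tree edge, so every node lies within $O(\log n)$ hops of a short cycle; one can orient these cycles consistently and route each remaining node toward a nearby cycle, all in one $O(\log n)$-round pass, with no phases and no progress measure.

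Beyond the mismatch in strategy, the sketch has errors. (i)~The path direction is reversed: a sink $v$ has outdegree $0$, so there is no directed path $v\to\cdots\to u$; you need a path from the surplus node $u$ to $v$ (which is what your backward BFS actually finds, so the later sentences are fine but the earlier one is not). (ii)~The frontier nodes in the backward BFS are not sinks---each has an outgoing edge toward the previous layer---and with minimum degree three and outdegree one they have at least \emph{two}, not three, in-edges, some of which may lie inside the already-explored set; the expansion claim can be repaired (a frontier node contributing many edges into the explored set would itself be surplus), but as written the justification is wrong. (iii)~The assertion that a maximal vertex-disjoint family of candidate augmenting paths flips a constant fraction of sinks per phase is unsupported, and is exactly the difficult combinatorial step that costs Ghaffari--Su dearly in their degree-splitting result. (iv)~Even granting~(iii), $O(\log n)$ phases each needing an $O(\log n)$-depth BFS gives $O(\log^2 n)$, and the appeal to ``amortising the BFS work across phases'' is not an argument: flipping paths changes the orientation and hence the backward-BFS trees, so nothing is obviously reusable from one phase to the next.
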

In our proofs, we may face multigraphs with multiple self-loops and with nodes of degree less than three and thus, we need a slightly modified version of this result.
\begin{corollary}[Sinkless Orientation, \cite{ghaffari17}]
	\label[corollary]{lemma:weakmulti}
	Let $G = (V, E)$ be a multigraph and $W \subseteq V$ a subset of nodes with degree at least three. Then, there is a deterministic algorithm that finds an orientation of the edges such that every node in $W$ has outdegree of at least one and runs in $\bigO(\log n)$ rounds (and a randomized algorithm that runs in $\bigO(\log \log n)$ rounds).
\end{corollary}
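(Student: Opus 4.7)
The plan is to reduce to \Cref{lemma:weak1} via a few rounds of local preprocessing that strip away self-loops, parallel edges, and edges leaving $W$, after which we augment the remaining simple subgraph with constant-size gadgets to meet the minimum-degree-three hypothesis. First, I would run three independent one-round phases that already satisfy many $W$-vertices: (i) orient every self-loop arbitrarily, which grants outdegree one to its endpoint; (ii) for every pair of vertices carrying $k \ge 2$ parallel edges, orient $\lfloor k/2 \rfloor$ copies in each direction, satisfying both endpoints; and (iii) orient every edge between $W$ and $V \setminus W$ out of $W$, satisfying its $W$-endpoint (with the convention that (iii) takes precedence over (ii) when one endpoint lies outside $W$, since non-$W$ vertices carry no outdegree requirement).

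Let $U \subseteq W$ denote the vertices still unsatisfied after phases (i)--(iii). The contrapositive of each phase implies that every $v \in U$ has no self-loop, no parallel edges, and no neighbor outside $W$, so all of its $\deg_G(v) \ge 3$ incident edges are simple edges into $W$. Let $H$ be the simple graph on $W$ formed by the as-yet unoriented simple edges between $W$-vertices; then $\deg_H(v) \ge 3$ for every $v \in U$, though vertices in $W \setminus U$ may have smaller $H$-degree. To apply \Cref{lemma:weak1} I would form an auxiliary simple graph $H'$ by attaching, to each $v \in W \setminus U$ with $\deg_H(v) < 3$, a constant-size gadget (for example, a fresh triangle each of whose three vertices is also joined to $v$) so that $v$ and every helper vertex end up with degree at least three. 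Helper vertices are simulated locally by $v$, and crucially no gadget is attached to any $v \in U$, which means every $H'$-edge incident to a $U$-vertex is an honest $H$-edge of $G$.

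Finally, invoke \Cref{lemma:weak1} on $H'$ in $\bigO(\log n)$ rounds (respectively, the randomized version in $\bigO(\log \log n)$ rounds) to obtain a sinkless orientation of $H'$. Every $v \in U$ gets outdegree at least one in $H'$, and because all of its $H'$-edges are genuine $G$-edges, this outgoing edge satisfies $v$ in $G$ as well. Orienting the remaining unoriented edges (inside $V \setminus W$, an odd leftover parallel edge, or helper gadget edges not belonging to $G$) arbitrarily completes the construction, and every node of $W$ now has outdegree at least one. The main subtlety I would be careful about is that an augmentation must never ``steal'' the outdegree of a $U$-vertex, which is precisely why gadgets are attached only to $W \setminus U$: those vertices are already satisfied by the preprocessing phases and hence can freely absorb whatever outgoing edge the sinkless orientation of $H'$ assigns them inside their gadget.
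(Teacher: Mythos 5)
Your proof is correct and follows essentially the same strategy as the paper: first orient self-loops and parallel edges so that their endpoints are immediately satisfied, then pass to the remaining simple graph, pad low-degree nodes with constant-size gadgets, and invoke \Cref{lemma:weak1}. Your phase (iii) is an unnecessary complication: the paper leaves the $W$-to-$(V\setminus W)$ edges inside the simple graph $H$ and attaches gadgets to \emph{every} node of $H$-degree below three, observing that a node of degree at least three in $G$ with no self-loop or multi-edge already has $H$-degree at least three and hence never receives a gadget, so its guaranteed outgoing edge is always a genuine edge of $G$.
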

\begin{proof}
	For every multi-edge, both endpoints pick one edge and orient it outwards, ties broken arbitrarily. For every self-loop, the node will orient it arbitrarily. This way, every node with an incident multi-edge or self-loop will have an outgoing edge.

	From here on, let us ignore the multi-edges and self-loops and focus on the simple graph $H$ remaining after removing the multi-edges.
	For every node $v$ with degree at most two in $H$, we connect $v$ to $3 - d(v)$ copies of the following gadget $U$.
	The set of nodes of $U = \{ u_1, u_2, u_3, u_4, u_5 \}$ is connected as a cycle.
	Furthermore, we add edges $\{u_2, u_4\}$ and $\{u_3, u_5\}$ to the gadget and connect $u_1$ to $v$.
	This way, the gadget is $3$-regular.

	In the simple graph constructed by adding these gadgets, we run the algorithm of \Cref{lemma:weak1}.
	Thus, any node of degree at least three in the original graph that was not initially adjacent to a multi-edge or self-loop gets an outgoing edge.
	Since we know that every node incident to a multi-edge or self-loop in $G$ also has an outgoing edge, the claim follows.
\end{proof}

The sinkless orientation algorithm from \Cref{lemma:weakmulti} immediately leads to an algorithm which finds a weak $\lfloor d(v)/3\rfloor$-orientation in multigraphs in time $\bigO(\log n)$.
\begin{lemma}[Weak $\lfloor d(v) /3\rfloor$-Orientation]
\label[lemma]{lemma:weakDelta3}
There is a deterministic algorithm that finds a weak $\lfloor d(v) /3\rfloor$-orientation in time $\bigO(\log n)$ in multigraphs.
\end{lemma}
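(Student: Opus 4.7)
The plan is to reduce the problem to a sinkless orientation instance in an auxiliary $3$-regular multigraph and then invoke \Cref{lemma:weakmulti}. The intuition is that if we can split each node $v$ into $\lfloor d(v)/3 \rfloor$ independent ``virtual copies'' each of degree exactly three, then guaranteeing one outgoing edge at every virtual copy automatically delivers $\lfloor d(v)/3 \rfloor$ outgoing edges at $v$.

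First, every node $v$ locally partitions its $d(v)$ edge-incidences into $\lfloor d(v)/3\rfloor$ \emph{full} groups of exactly three incidences plus one \emph{residual} group of $d(v)\bmod 3\in\{0,1,2\}$ incidences, always keeping both incidences of a self-loop in the same group so that no edge is split. In one round each endpoint of an edge tells the other which of its groups the edge belongs to. This information defines an auxiliary multigraph $H$: each $v\in V$ is replaced by $\lceil d(v)/3\rceil$ virtual copies (one per group), and every edge of $G$ becomes an edge of $H$ between the virtual copies chosen at its two endpoints. Self-loops and multi-edges may arise in $H$, but every full virtual copy has degree exactly three by construction.

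Next I apply \Cref{lemma:weakmulti} to $H$ with $W$ equal to the set of all full virtual copies. In $O(\log n)$ rounds this yields an orientation of $E(H)$ in which every node of $W$ has outdegree at least one. Pulling the orientation back to $G$ (each edge of $G$ corresponds to exactly one edge of $H$), the outdegree of $v$ in $G$ equals the sum of the outdegrees of its virtual copies in $H$, which is at least the number of full copies of $v$, namely $\lfloor d(v)/3\rfloor$, as required.

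The graph $H$ is simulated on $G$ with no asymptotic slowdown: each virtual copy is hosted at its real node, adjacent virtual copies in $H$ live either at the same host or at neighbours in $G$, and $O(\log n)$-bit identifiers are obtained by concatenating the host's identifier with the group index, so the total running time is $O(\log n)$. The only thing to verify carefully is that \Cref{lemma:weakmulti} tolerates exactly the features of $H$ we actually produce, namely self-loops, parallel edges, and residual copies of degree less than three; and this is precisely what that corollary was designed to handle.
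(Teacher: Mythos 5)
Your overall reduction is exactly the one the paper uses: split each node into $\lceil d(v)/3\rceil$ virtual copies, $\lfloor d(v)/3\rfloor$ of which get degree exactly three, apply \Cref{lemma:weakmulti} with no outdegree requirement on the residual copy, and pull the orientation back. That part is correct and the runtime analysis is fine.

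There is, however, one genuine flaw in your handling of self-loops. You stipulate that the partition into groups should be done ``always keeping both incidences of a self-loop in the same group so that no edge is split.'' This constraint is not always satisfiable. For instance, if $v$ has degree $6$ consisting of three self-loops, you must form two full groups of size three and no residual group; each self-loop occupies two slots, so no matter how you assign them, some self-loop will have its two incidences in different groups. More generally, whenever the self-loop incidences at a node cannot be packed into groups of size three in pairs, your constraint fails. Fortunately the constraint is also unnecessary: if the two incidences of a self-loop land in different virtual copies, the self-loop simply becomes an ordinary edge of $H$ between those copies, and \Cref{lemma:weakmulti} handles that case without difficulty. The paper explicitly notes this possibility (``the partitioning of a node into several nodes may cause self-loops to go between two different copies of the same node'') rather than trying to avoid it. Drop the constraint, allow self-loops to be split across copies, and the argument goes through exactly as in the paper.
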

\begin{proof}
Partition node $v$ into $\lceil d(v)/3\rceil$ nodes and split its adjacent edges among them such that $\lfloor d(v)/3\rfloor$ nodes have exactly three adjacent edges each and the remaining node, if any, has $d(v) \bmod 3$ adjacent edges. Note that the partitioning of a node into several nodes may cause self-loops to go between two different copies of the same node. Then, use the algorithm from \Cref{lemma:weakmulti} to compute a weak $1$-orientation of the resulting multigraph where degree two or degree one nodes do not have any outdegree requirements. If we undo the partition but keep the orientation of the edges we have a weak $\lfloor d(v)/3 \rfloor$-orientation of the original multigraph.
\end{proof}

The concept of weak orientations can be extended to both indegrees and outdegrees.
\begin{definition}
A \emph{strong $k(v)$-orientation} of a multigraph $G=(V,E)$ is an orientation of the edges $E$ such that each node $v\in V$ has both indegree and outdegree at least $k(v)$.
\end{definition}

The techniques in this section need orientations in which nodes have at least two outgoing edges. \Cref{lemma:weakDelta3} provides such orientations for nodes of degree at least six; but for nodes of smaller degree it guarantees only one outgoing edge. It is impossible to improve this for nodes with degree smaller than five in time $o(n)$ (cf. \Cref{thm:weak2orientation-lb}). But we obtain the following result for the nodes with degree five. Its proof relies on different techniques than the techniques in this section, and therefore it is deferred to \Cref{sec:outdegtwo}.

\begin{restatable}[Outdegree 2]{lemma}{lemmaFirstoutdegtwo}\label{lemma:firstoutdeg-2}
The following problem can be solved in time $\bigO(\log n)$ with deterministic algorithms and $\bigO(\log \log n)$ with randomized algorithms:
given any multigraph, find an orientation such that all nodes of degree at least $5$ have outdegree at least $2$.
\end{restatable}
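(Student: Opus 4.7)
The plan is to reduce this to the sinkless-orientation primitive of \Cref{lemma:weakmulti}, so that the round complexity inherits its $O(\log n)$ deterministic and $O(\log\log n)$ randomized guarantees. Observe first that \Cref{lemma:weakDelta3} already supplies outdegree at least $\lfloor d(v)/3\rfloor\geq 2$ at every node with $d(v)\geq 6$. The only nodes that are not yet handled are those with $d(v)=5$, for which \Cref{lemma:weakDelta3} gives outdegree only $1$, and the remaining task becomes to boost their outdegree from $1$ to $2$ in the same $O(\log n)/O(\log\log n)$ round budget.

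For each such degree-$5$ node $v$, I would split $v$ locally into two virtual copies $v_1$ and $v_2$, assigning $3$ of the original incident edges to $v_1$ and the remaining $2$ to $v_2$. Placing $v_1$ into the set $W$ of \Cref{lemma:weakmulti} immediately provides an outgoing real edge at $v_1$. To secure an outgoing real edge at $v_2$, whose degree in the modified graph is only $2$, I would attach a small auxiliary gadget --- a constant-size, locally $3$-regular construction in the spirit of the gadget $U$ from the proof of \Cref{lemma:weakmulti} --- that raises $v_2$'s degree to at least $3$ and allows $v_2$ to be admitted into $W$. After running sinkless orientation on the whole augmented multigraph and reassembling $v_1$ and $v_2$ into $v$, one recovers real outdegree at least $2$ at $v$, provided the outgoing edge guaranteed at each virtual copy is a real edge rather than an auxiliary one.

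The chief obstacle is precisely this last caveat. \Cref{lemma:weakmulti} commits only to \emph{some} outgoing edge at each node in $W$, so without further care the outgoing edge at $v_2$ could turn out to be the auxiliary edge into the gadget, still leaving $v$ with real outdegree $1$. My plan to surmount this is to engineer the gadget so that the sinkless-orientation demands of its own auxiliary vertices are already discharged by purely internal auxiliary edges, forcing every external auxiliary edge incident to a $v_2$ to be oriented \emph{into} $v_2$; the $3$-regular base case of \Cref{sec:basecase} is the natural tool for pre-committing the orientation of each gadget's interior before handing off the rest of the graph to \Cref{lemma:weakmulti}. Since everything happens locally around each degree-$5$ node and the overall algorithm consists of only a constant number of parallel invocations of sinkless orientation on a multigraph of size $O(n)$, the round complexity matches the bounds claimed in the statement.
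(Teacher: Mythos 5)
Your approach is genuinely different from the paper's, which reduces to the $5$-regular case, computes a \emph{maximal matching}, and builds an explicit half-path decomposition of length at most $8$: matched nodes take one outgoing edge from the matching and a second from a sinkless orientation of the matching-free residual graph, while unmatched nodes (an independent set) are absorbed via a split-and-contract trick, and then the half-path decomposition is turned into a weak $2$-orientation using the sinkless-and-sourceless primitive. You correctly identify the chief obstacle in your plan, but the proposed fix does not close it.

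The gap is that you cannot simultaneously (i) pre-commit the gadget-to-$v_2$ edge to point into $v_2$ and (ii) still invoke \Cref{lemma:weakmulti} as a black box with $v_2 \in W$. If the gadget edge is pre-oriented, then in the residual instance $v_2$ has only its $2$ real unoriented edges left, below the degree-$3$ threshold that \Cref{lemma:weakmulti} requires, and no algorithm can promise $v_2$ an outgoing edge among them in $o(n)$ rounds; the $2$-regular sub-instance that arises when many such $v_2$'s are chained together is exactly sinkless orientation on cycles, a global $\Omega(n)$ problem, and \Cref{thm:weak2orientation-lb} codifies precisely this wall. If instead you hand the whole augmented graph (gadget edges unoriented) to \Cref{lemma:weakmulti}, the algorithm is free to orient the gadget edge \emph{out of} $v_2$: any finite $3$-regular gadget admits sinkless (even sinkless-and-sourceless) orientations with either direction on its pendant edge, so "discharging the gadget interior's sinkless demands internally" only \emph{permits} the external edge to point into $v_2$ — it never \emph{forces} it. This is the structural reason the paper does not rely purely on orientations here: a maximal-matching edge is an edge each endpoint \emph{owns} outright, a qualitatively stronger guarantee than "some outgoing edge," and that is what lets the argument cross the degree-$4$ impossibility frontier that your split into a degree-$3$ and a degree-$2$ part keeps running into.
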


\subsection{Path Decompositions}

We now introduce the concept of a path decomposition.
The decomposition proves to be a strong tool due to the fact that it can be turned into a strong orientation (cf.~\Cref{lemma:fromPathDecompToStrongOrient}).
\begin{definition}[Path Decomposition]
Given a multigraph $G=(V,E)$, a positive integer~$\lambda$, and a function $\delta\colon V\rightarrow \R_{\geq 0}$, we call a partition $\mathcal{P}$ of the edges $E$ into edge-disjoint paths $P_1,\ldots, P_{\rho}$ a \emph{$(\delta,\lambda)$-path decomposition} if
\begin{itemize}[noitemsep]
\item for every $v\in V$ there are at most $\delta(v)$ paths that start or end in $v$,
\item each path $P_i$ is of length at most $\lambda$.
\end{itemize}
For each path decomposition $\mathcal{P}$, we define the multigraph $G(\mathcal{P})$ as follows: the vertex set of $G(\mathcal{P})$ is $V$, and there is an edge between two nodes $u,v\in V$ if $\mathcal{P}$ has a path which starts at $u$ and ends at $v$ or vice versa.
The \emph{degree of $v$ in $\mathcal{P}$} is defined to be its degree in $G(\mathcal{P})$ and the \emph{maximum degree  of the path decomposition $\mathcal{P}$} is the maximum degree of $G(\mathcal{P})$.
\end{definition}
\begin{figure}
\centering
\includegraphics[width=0.6\textwidth]{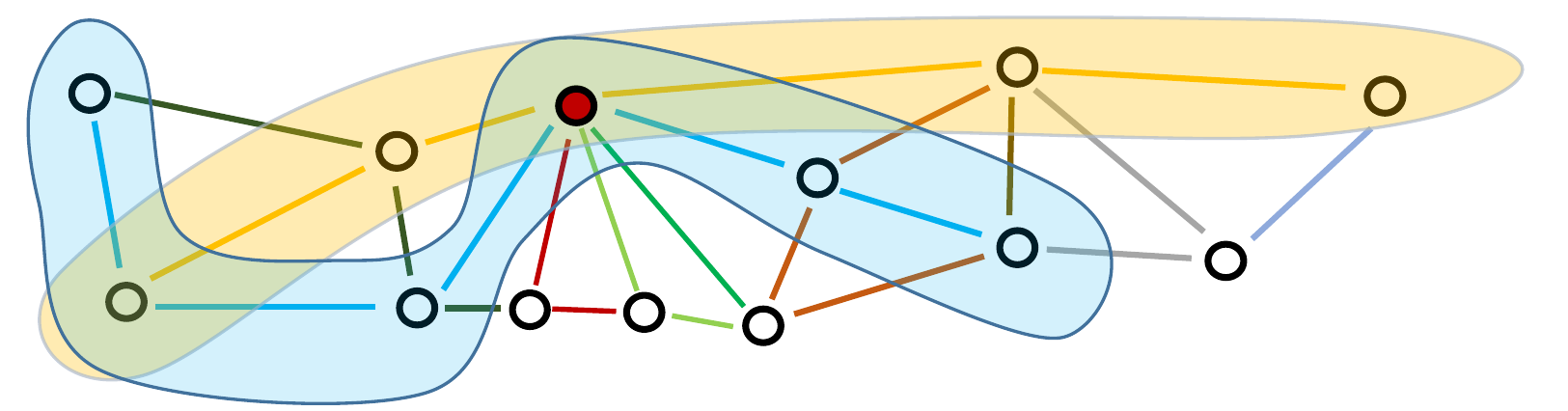}
\caption{The longest path in the given path decomposition has length five and there are three paths that start/end at the red node.}
\label{fig:pathDecomp}
\end{figure}

Notice that $\delta(v)$ is an upper bound on the degree of $v$ in $\mathcal{P}$ and $\max_{v\in V}\delta(v)$ is an upper bound on the maximum degree of the path decomposition. Note that $d_G(v)-d_{G(\mathcal{P})}(v)$ is always even. If the function in some path decomposition $(\delta,\lambda)$-path decomposition satisfies $\delta(v)=a$ for some $a$ we also speak of a $(a,\lambda)$-path decomposition.  See \Cref{fig:pathDecomp} for an illustration of a path decomposition and its parameters.
To make proofs more to the point instead of getting lost in notation, we often identify $G(\mathcal{P})$ with $\mathcal{P}$ and vice versa.
A distributed algorithm has computed a path decomposition $\mathcal{P}$ if every node knows the paths of $\mathcal{P}$ it belongs to.
Note that it is trivial to compute a $(d(v),1)$-path decomposition in 1 round, because every edge can form a separate path.

Let $\lftwo \cdot \rftwo$ denote the function which rounds down to the previous even integer, that is, $\lftwo x \rftwo = 2 \lfloor x/2 \rfloor$.
The following virtual graph transformation, which we call \emph{edge contraction}, is the core technical construction in this section.

\paragraph{Disjoint Edge Contraction.}
The basic idea behind edge contraction is to turn two incident edges $\{v, u\}$ and $\{v, w\}$ into a single edge $\{u, w\}$ by removing the edges $\{v, u\}$ and $\{v, w\}$ and adding a new edge $\{u, w\}$.
We say that node $v$ contracts when an edge contraction is performed on some pair of edges $\{v, u\}$ and $\{v, w\}$.
When node $v$ performs a contraction of edges $\{v, u\}$ and $\{v, w\}$, its degree $d(v)$ is reduced by two while maintaining the degrees of $u$ and $w$. 
Notice that adjacent nodes can only contract edge-disjoint pairs of edges  in parallel and a contraction may also produce isolated nodes, multi-edges and self-loops. If a self-loop $\{v,v\}$ is selected to be contracted with any other edge $\{v,w\}$ it simply results in a new edge $\{v,w\}$ as if the self-loop was any other edge. Such a contraction still reduces the degree of $v$ by two as the self-loop was considered as both -- an incoming and an outgoing edge of $v$.
See \Cref{fig:contract} for an illustration.

Edge contractions can be used to compute path decompositions, e.g.,  an edge which is created through a contraction of two edges can be seen as a path of length two. If an edge $\{u,v\}$ represents a path from $u$ to $v$ in $G$, e.g.,  when recursively applying edge contractions on the graph $G(\mathcal{P})$ for some given path decomposition $\mathcal{P}$, each contraction merges two paths of $\mathcal{P}$. If each node simply picked arbitrarily some edges to contract, this
might result in long paths or cycles.
The key idea is to use orientations of the edges to find large sets of edges which can be contracted in parallel. If every node only contracts outgoing edges of a given orientation all contractions of all nodes can be performed in parallel.

If we start with a trivial decomposition, i.e., each edge is its own path, and perform $k$ iterations of parallel contraction, where, in each iteration, each node contracts two edges, we obtain a $(d(v) - 2k, 2^k)$-path decomposition.
If we want the degrees $d(v)-2k$ to be constant we have to choose $k$, i.e., the number of iterations, in the order of $\Delta$ which implies exponentially long paths and runtime as the path lengths (might) double with each contraction.

The technical challenge to avoid exponential runtime is to achieve a lot of parallelism while at the same time reducing the degrees quickly. We achieve this with the help of weak orientation algorithms: An outdegree of $f(v)$ at node $v$ allows the node to contract $\lftwo f(v)\rftwo$ edges at the same time and in parallel with all other nodes. If $f(v)$ is a constant fraction of $d(v)$ this implies that $\bigO(\log \Delta)$ iterations are sufficient to reach a small degree.
As the runtime is exponential in the number of iterations and the constant in the $\bigO$-notation might be large, this is still not enough to ensure a runtime which is linear in $\Delta$, up to polylogarithmic terms.
Instead, we begin with the weak orientation algorithm from the previous section and iterate it until a path decomposition with a small (but not optimal!) degree is obtained. Then we use it to construct a better orientation algorithm. Then, we use this better orientation to compute an even better one and so on. Recursing with the correct choice of parameters leads to a runtime which is linear in $\Delta$, up to polylogarithmic terms.
 We take the liberty to use the terms recursion and iteration interchangeably depending on which term is more suitable in the respective context.
Refer to \Cref{fig:contractOne} for an illustration of the edge contraction technique with a given orientation.

\begin{figure}
	\centering
	\begin{tabular}{@{}c@{\hspace{10mm}}|@{\hspace{10mm}}c@{}}
		\includegraphics[scale=0.8]{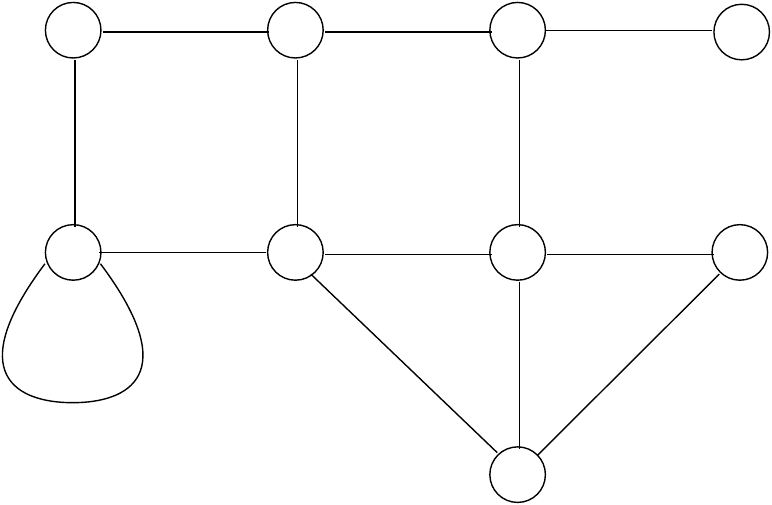} &
		\includegraphics[scale=0.8]{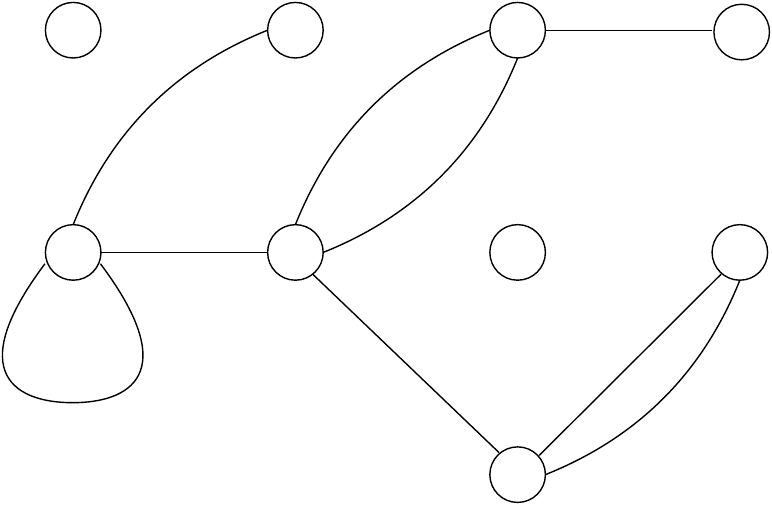} \\[10mm]
		\includegraphics[scale=0.8]{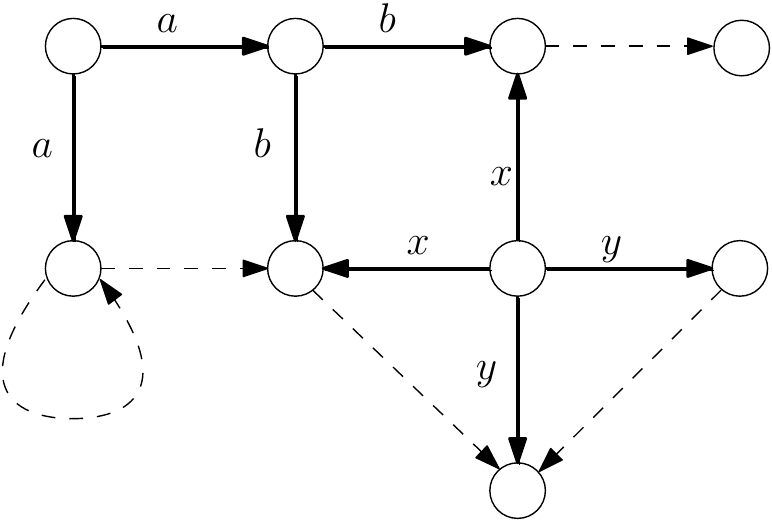} &
		\includegraphics[scale=0.8]{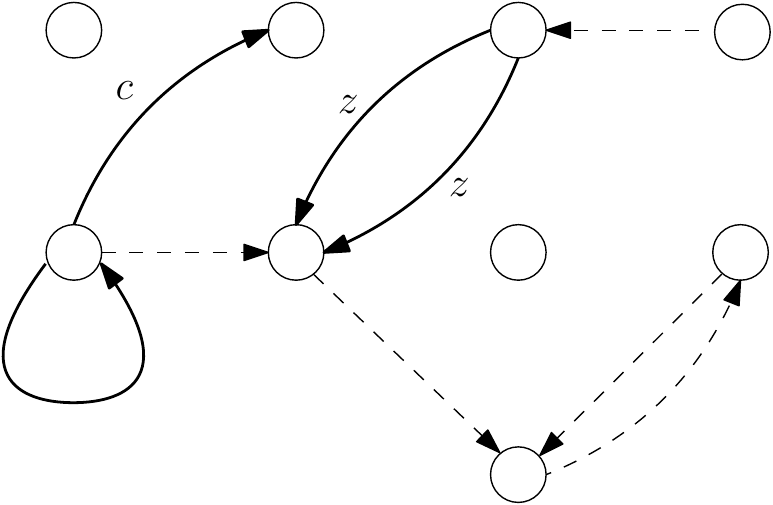} \\[10mm]
		\includegraphics[scale=0.8]{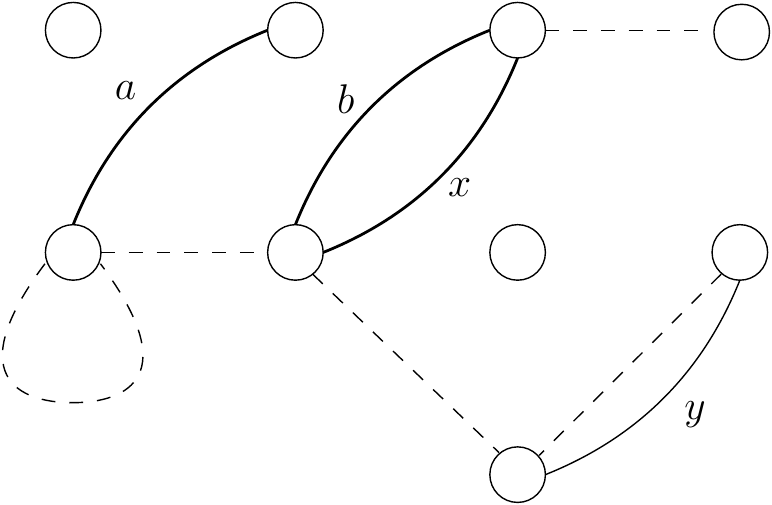} &
		\includegraphics[scale=0.8]{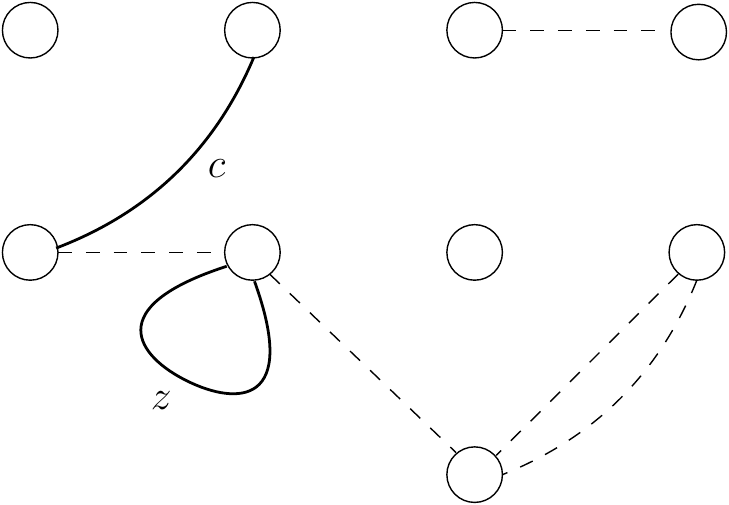}
	\end{tabular}
	\caption{In two sequences of three illustrations this figure depicts two sets of contractions. In each column the first illustration is the situation before the contraction, the second one depicts the orientation and the selected outgoing edges which will be contracted in parallel and the third illustration shows the situation after the contraction where new edges are highlighted.\\
	\hspace*{2.5ex}A contraction may produce isolated nodes, multi-edges and self-loops. If a self-loop $\{v,v\}$ is selected to be contracted with any other edge $\{v,w\}$ it simply results in a new edge $\{v,w\}$ as if the self-loop was any other edge. Such a contraction still reduces the degree of $v$ by two.\\
	\hspace*{2.5ex}Note that we used a graph with small node degrees for illustration purposes. We cannot quickly compute an orientation with large outdegree for nodes with degree	less than five.}
	\label[figure]{fig:contract}
\end{figure}

\begin{figure}
	\centering
	\includegraphics[scale=0.4]{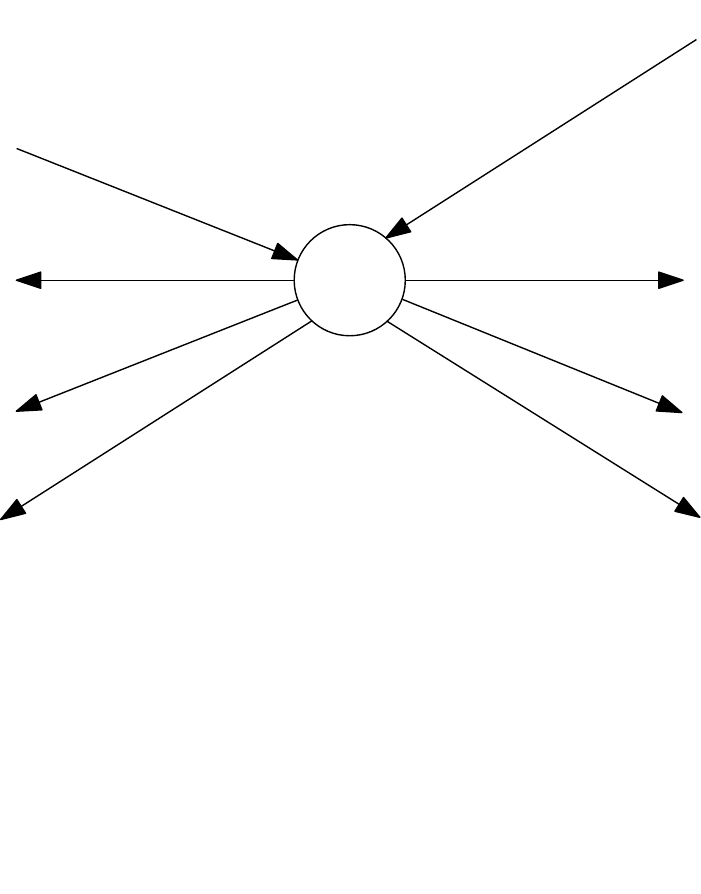}\label[figure]{fig:contractOne1}
	\hspace{\stretch{1}}
	\includegraphics[scale=0.4]{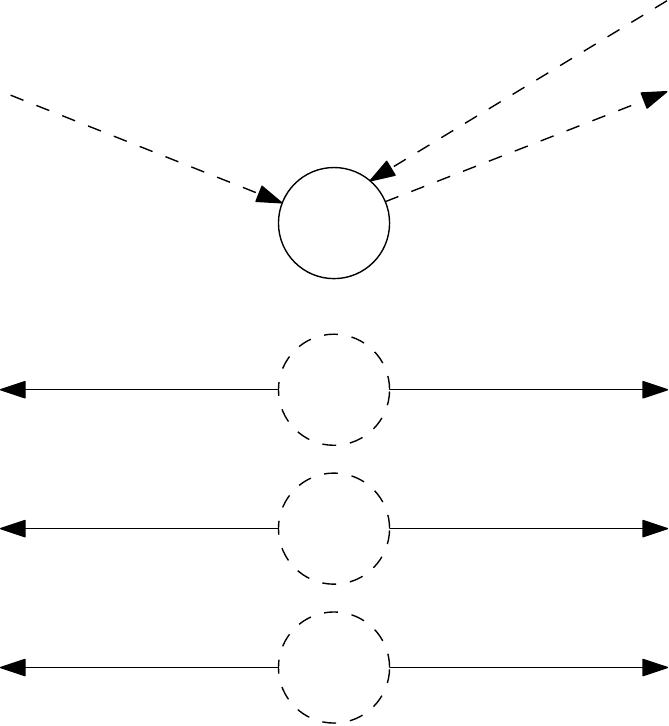}\label[figure]{fig:contractOne2}
	\hspace{\stretch{1}}
	\includegraphics[scale=0.4]{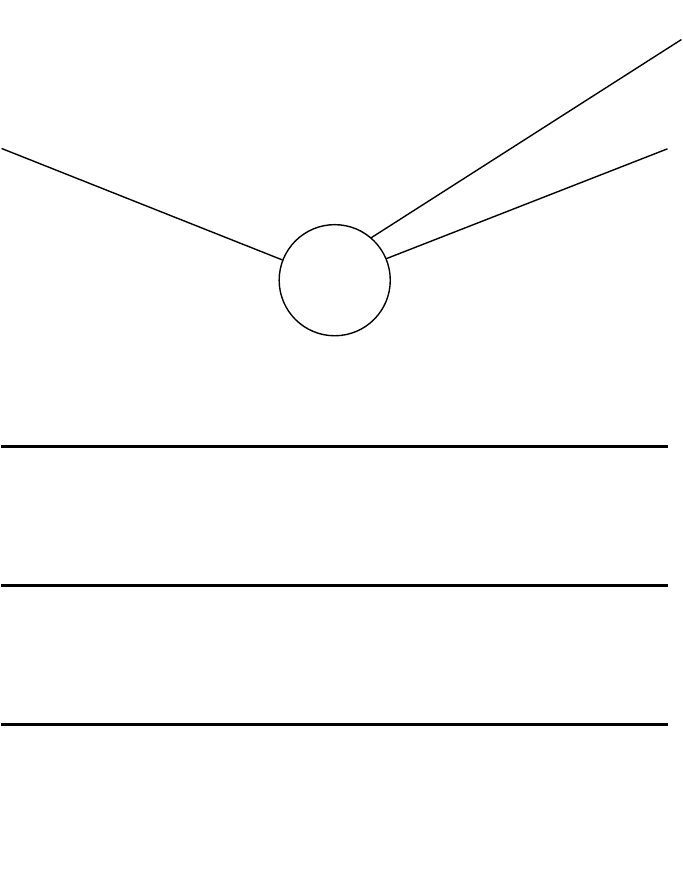}\label[figure]{fig:contractOne35}
	\hspace{\stretch{1}}
	\includegraphics[scale=0.4]{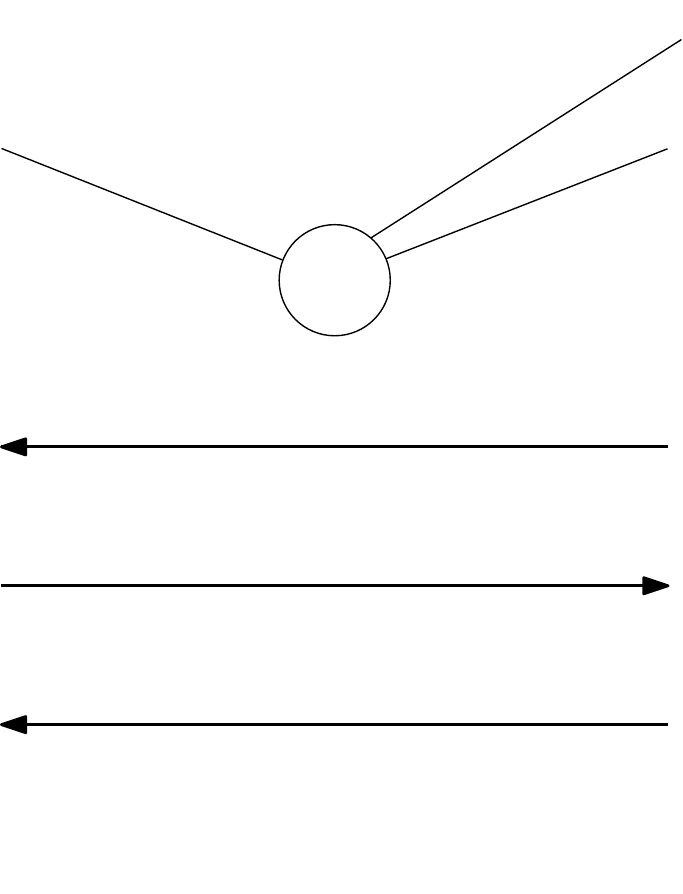}\label[figure]{fig:contractOne3}
	\hspace{\stretch{1}}
	\includegraphics[scale=0.4]{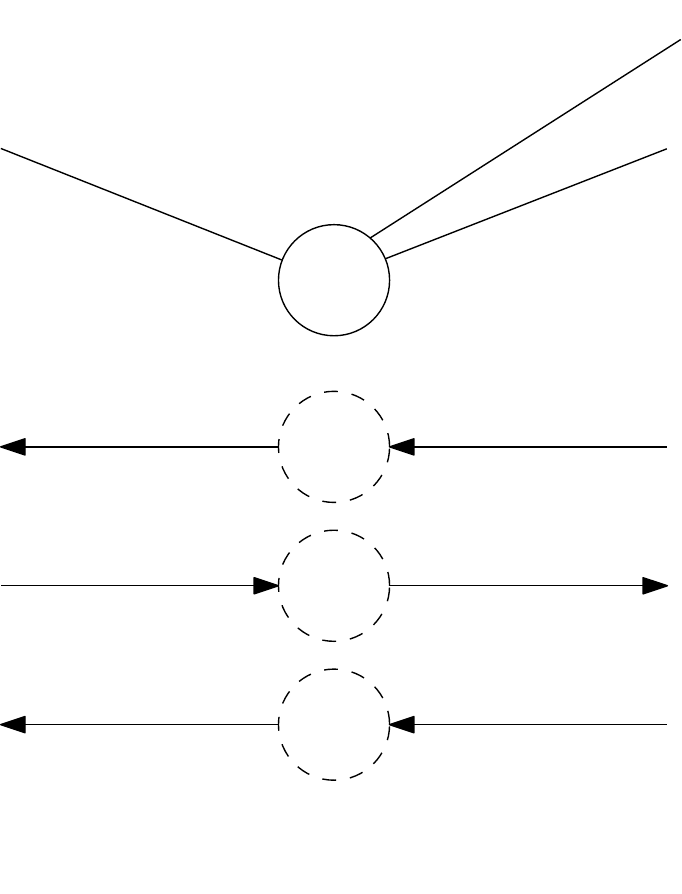}\label[figure]{fig:contractOne4}
	\caption{The first two illustrations show that selecting the outgoing edges for a contraction can be seen as dividing the node into a set of virtual nodes, each incident to two outgoing edges. Then, in the third illustration, the contraction is obtained by removing the virtual nodes but keeping the connection alive. The last two illustrations show how an orientation on contracted edges is used to orient the edges of the original graph such that virtual nodes obtain  an equal split (and such that the original node obtains a good split).}
	\label[figure]{fig:contractOne}
\end{figure}

We will now apply a simple version of our contraction technique to obtain a fast and precise path decomposition algorithm in $\Delta$-regular graphs for $\Delta = \bigO(1)$. The result can also be formulated for non-regular graphs, but here we choose regular graphs to focus on the proof idea which is the key theme throughout most proofs of this section.

\begin{theorem}[$(\Delta-2k,2^k)$-Path Decomposition]
\label[theorem]{thm:decompExpDeltaRuntime}
Let $G=(V,E)$ be a $\Delta$-regular multigraph. For any positive integer $k \leq \Delta/2-2$ there is a deterministic distributed algorithm that computes a $(\Delta-2k,2^k)$-path decomposition in time $\bigO(2^k \log n)$.
\end{theorem}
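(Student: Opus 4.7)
The plan is to prove the theorem by induction on $k$, constructing a sequence of path decompositions $\mathcal{P}_0, \mathcal{P}_1, \ldots, \mathcal{P}_k$ such that $\mathcal{P}_i$ is a $(\Delta-2i,2^i)$-path decomposition. We initialize with the trivial decomposition $\mathcal{P}_0$ in which every edge of $G$ is its own path of length $1$; then $G(\mathcal{P}_0)=G$ is $\Delta$-regular. Given $\mathcal{P}_i$, the key step is to obtain $\mathcal{P}_{i+1}$ by a single parallel round of edge contractions in the virtual multigraph $G(\mathcal{P}_i)$: we invoke \Cref{lemma:weakDelta3} on $G(\mathcal{P}_i)$, which (since $i \leq k-1 \leq \Delta/2-3$ implies $\Delta - 2i \geq 6$) yields an orientation with outdegree at least $\lfloor(\Delta-2i)/3\rfloor \geq 2$ at every node; each node then picks any two of its outgoing edges and performs the contraction described in the preceding subsection, which merges the two corresponding paths of $\mathcal{P}_i$ into one.

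The correctness of this step rests on three simple observations that I would formalize in turn. First, the contractions can be carried out in parallel without conflicts: an edge $e$ of $G(\mathcal{P}_i)$ is selected by a node $v$ only if $e$ is outgoing at $v$; but then $e$ is incoming at its other endpoint, which therefore does not select it, so the pairs chosen at different nodes are edge-disjoint. Second, each contraction at a node $v$ merges two paths of $\mathcal{P}_i$ that both have $v$ as an endpoint into one path whose endpoints are the two other endpoints of the merged paths; the merged path has length at most $2\cdot 2^i = 2^{i+1}$ and contains each edge of $G$ at most once because the two original paths are edge-disjoint. Third, the contraction decreases $v$'s degree in the virtual multigraph by exactly $2$ while leaving every other node's virtual degree unchanged, so $G(\mathcal{P}_{i+1})$ is $(\Delta-2(i+1))$-regular. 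Thus $\mathcal{P}_{i+1}$ is a valid $(\Delta-2(i+1),2^{i+1})$-path decomposition, as required.

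For the runtime analysis, the main subtlety is that the orientation algorithm of \Cref{lemma:weakDelta3} runs on the virtual graph $G(\mathcal{P}_i)$, not on $G$ itself. One round of communication along a virtual edge $\{u,w\}$ of $G(\mathcal{P}_i)$ corresponds to a message traveling along the underlying path in $G$, which has length at most $2^i$; every node along such a path knows the path (because it was constructed during previous iterations) and can serve as a relay. Hence simulating one round of $G(\mathcal{P}_i)$ takes $O(2^i)$ rounds in $G$, and the $O(\log n)$ virtual rounds required by \Cref{lemma:weakDelta3} cost $O(2^i \log n)$ rounds in $G$. Summing over $i=0,1,\ldots,k-1$ gives a total of $\sum_{i=0}^{k-1} O(2^i \log n) = O(2^k \log n)$ rounds, matching the claimed bound. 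Finally, each node learns the paths of $\mathcal{P}_k$ it belongs to incrementally: whenever a contraction merges two of its adjacent paths, the two endpoints exchange (via the relay through $v$) the identities of the new path and its new endpoint, which again fits into the $O(2^i)$ simulation budget.

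I expect the main obstacle to be the bookkeeping in the simulation argument, specifically ensuring that after several rounds of contraction the internal nodes of the paths in $\mathcal{P}_i$ still retain enough local knowledge to relay messages between the (virtual) endpoints. This is handled by maintaining, throughout the iterations, the invariant that every node on every path knows its two path-neighbors and the identities of the two path endpoints; each contraction step updates this information along the two merged paths in $O(2^i)$ rounds, which is absorbed by the round count computed above. With these invariants in place, the inductive construction delivers $\mathcal{P}_k$, completing the proof.
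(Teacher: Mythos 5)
Your proposal matches the paper's proof essentially step for step: initialize with the trivial decomposition, repeatedly apply \Cref{lemma:weakDelta3} on the current virtual graph $G(\mathcal{P}_i)$ (valid since $\Delta-2i\geq 6$ under the given bound on $k$), have each node contract two outgoing edges, and simulate each virtual round in $O(2^i)$ rounds of $G$ to get the geometric sum $O(2^k\log n)$. The only differences are cosmetic (indexing from $0$ rather than $1$, and spelling out the disjointness and bookkeeping points that the paper states more tersely).
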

\begin{proof}
We recursively compute $k$ multigraphs $H_1, \ldots, H_k$ where $H_k$ corresponds to the resulting path decomposition.
To obtain $H_1$, we begin by computing a weak $2$-orientation $\pi$ of $G$ with the algorithm from \Cref{lemma:weakDelta3} (note that by assumption we have $k \ge 1$ and therefore $\Delta \ge 6$).
Then, every node contracts a pair of outgoing incident edges.
Notice that contractions of adjacent nodes are always disjoint.
The degree of of every node is reduced to $\Delta - 2$ and each edge in the resulting multigraph $H_1$ consists of a path in $G$ of length at most two.

Applying this method recursively with recursion depth $k$ yields multigraphs $H_1,\ldots,H_k$ where the maximum degree of $H_i$ is $\Delta-2i$ and each edge in $H_i$ corresponds to a path in $G$ of length at most $2^i$. Thus, $H_k$ corresponds to a $(\Delta-2k,2^k)$-path decomposition. Note that there is one execution of \Cref{lemma:weakDelta3} in each recursion level and it provides a weak $2$-orientation of the respective graph because the degree of each node is at least six due to $i\leq k\leq \Delta/2-2$.

One communication round in recursion level $i$ can be simulated in $2^i$ rounds in the original graph.
Thus, the runtime is dominated by the application of \Cref{lemma:weakDelta3} in recursion level $k$ which yields a time complexity of  $\bigO(2^k \log n)$.
\end{proof}

Next, we show how to turn a $(\delta,\lambda)$-path decomposition efficiently into a strong orientation.
The strong orientation obtained this way has $\delta(v)$ as an upper bound on the discrepancy between in- and outdegree of node $v$.

\begin{lemma}
\label[lemma]{lemma:fromPathDecompToStrongOrient}
Let $G=(V,E)$ be a multigraph with a given $(\delta,\lambda)$-path decomposition $\mathcal{P}$. There is a deterministic algorithm that computes a strong $\frac{1}{2}(d(v)-\delta(v))$-orientation of $G$ in $\bigO(\lambda)$ rounds.
\end{lemma}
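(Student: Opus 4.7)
The plan is to orient each path of $\mathcal{P}$ consistently along its length and observe that, essentially by definition of a path decomposition, this already produces the required orientation. Fix a node $v$ and, for each path $P \in \mathcal{P}$, let $\alpha(v,P)$ denote the number of times $v$ occurs as an internal vertex of $P$, let $s(v,P) \in \{0,1\}$ indicate that exactly one of the two endpoints of $P$ equals $v$, and let $c(v,P) \in \{0,1\}$ indicate that both endpoints of $P$ equal $v$ (a closed walk at $v$, which includes the self-loop case). Writing $A(v)$, $S(v)$, $C(v)$ for the sums of these quantities over $P \in \mathcal{P}$, the edges of $G$ incident to $v$ are partitioned by the paths they belong to, so
\[ d(v) \;=\; 2A(v) + S(v) + 2C(v), \]
while the definition of a $(\delta,\lambda)$-path decomposition gives $S(v) + C(v) \le \delta(v)$.

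Next I would analyze an arbitrary consistent orientation of each path. An internal passage of a path through $v$ contributes exactly one incoming and one outgoing edge, so it adds $1$ to both the indegree and the outdegree of $v$. A closed-walk endpoint occurrence contributes $1$ to each side independently of the chosen direction of the path. A single-endpoint occurrence contributes $1$ to exactly one of the two sides. Consequently, both the indegree and the outdegree at $v$ are at least
\[ A(v) + C(v) \;=\; \tfrac{1}{2}\bigl(d(v) - S(v)\bigr) \;\ge\; \tfrac{1}{2}\bigl(d(v) - \delta(v)\bigr), \]
which is precisely the strong $\tfrac{1}{2}(d(v)-\delta(v))$-orientation property.

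For the distributed implementation, I use that each path contains at most $\lambda$ edges, so in $O(\lambda)$ rounds every edge of a path can learn the identifiers of both endpoints of its path and the full sequence of edges along it. For a path with two distinct endpoints, orient from the lower-ID endpoint to the higher; for a closed walk, use the smallest-ID edge of the path as a seed to pick a canonical direction and propagate. All paths are processed in parallel, and each node locally decides the orientation of each of its incident path-edges, so the total round complexity is $O(\lambda)$.

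The main obstacle is only bookkeeping: since $G$ is a multigraph and paths may be self-loops, revisit interior vertices, or be closed walks---all of which arise naturally from the iterative contraction procedure used later---the counting has to be set up carefully. The classification into internal, single-endpoint, and closed-endpoint occurrences handles these uniformly, and the observation that a closed-walk endpoint is automatically balanced under any consistent orientation is what yields the tight bound $\tfrac{1}{2}(d(v)-\delta(v))$ rather than a weaker one.
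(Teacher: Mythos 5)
Your proof is correct and takes essentially the same approach as the paper: orient each path consistently, and observe that a passage through an internal vertex contributes one incoming and one outgoing edge, so only endpoint occurrences can create discrepancy. The paper phrases this by fixing an arbitrary orientation $\pi_H$ of $H=G(\mathcal{P})$ and bounding the discrepancy of $\pi_G$ at $v$ by that of $\pi_H$, which is at most $\delta(v)$; your explicit half-edge accounting (separating internal, single-endpoint, and closed-walk occurrences) is the same argument with the multigraph corner cases made more transparent.
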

\begin{proof}
Let $H=G(\mathcal{P})$ be the virtual graph that corresponds to $\mathcal{P}$ and let $\pi_H$ be an arbitrary orientation of the edges of $H$.
Let $(u, v)$ be an edge of $H$ oriented according to $\pi_H$ and let $P = u_1, \ldots, u_k$, where $u_1 = u$ and $u_k = v$, be the path in the original graph $G$ that corresponds to edge $(u, v)$ in $H$.
Now, we orient the path $P$ in a consistent way according to the orientation of $(u, v)$, i.e., edge $\{u_i, u_{i + 1}\}$ is directed from $u_i$ to $u_{i + 1}$ for all $1 \leq i \leq k$.
Since every edge in $G$ belongs to exactly one path in the decomposition, performing this operation for every edge in $H$ provides a unique orientation for every edge in $G$.
Let us denote the orientation obtained this way by $\pi_G$.

Consider some node $v$ and observe that orienting any path that contains $v$ but where $v$ is not either the start or the endpoint adds exactly one incoming edge and one outgoing edge for~$v$.
Therefore, the discrepancy of the indegrees and outdegrees of $v$ in $\pi_G$ is bounded from above by the discrepancy in $\pi_H$, which is at most $\delta(v)$ by the definition of a $(\delta,\lambda)$-path decomposition. It follows that $\pi_G$ is a strong $\frac{1}{2}(d(v)-\delta(v))$-orientation.

Finally, since the length of any path in $\mathcal{P}$ is bounded above by $\lambda$, consistently orienting the paths takes $\lambda$ communication rounds finishing the proof.
\end{proof}

In the following, we formally use weak orientations to compute a path decomposition. This lemma will later be iterated in \Cref{cor:iterative}.
\begin{lemma}
\label[lemma]{lemma:newDecomp}
	 Assume that there exists a deterministic distributed algorithm that finds a weak $\bigl(\bigl(\frac{1}{2}-\eps\bigr)d(v)- 2\bigr)$-orientation in time $T(n,\Delta)$.

Then, there is a deterministic distributed algorithm that finds a $\bigl(\bigl(\frac{1}{2}+\eps\bigr)d(v)+4,\, 2\bigr)$-path decomposition in time $\bigO(T(n,\Delta))$.
\end{lemma}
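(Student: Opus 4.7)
The plan is to start from the trivial $(d(v), 1)$-path decomposition — where each edge of $G$ is its own length-$1$ path — and apply a single round of parallel edge contractions guided by the hypothesized weak orientation. First I would invoke the given algorithm in $T(n,\Delta)$ rounds to produce an orientation $\pi$ in which every node $v$ has outdegree $k(v) \geq (1/2-\eps)d(v) - 2$.

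Next, each node $v$ arbitrarily groups $\lftwo k(v) \rftwo$ of its outgoing edges into $\lfloor k(v)/2 \rfloor$ disjoint pairs and marks each pair for contraction. Because every edge is outgoing at exactly one endpoint, the contractions proposed at different nodes are automatically edge-disjoint and can therefore be executed simultaneously in $O(1)$ communication rounds. A contracted pair $\{v,w_1\},\{v,w_2\}$ is replaced by a single virtual edge $\{w_1,w_2\}$ which is declared to represent the length-$2$ path $w_1\,v\,w_2$ of $G$, while every edge not belonging to any contracted pair stays as its own length-$1$ path. The resulting collection of paths is the output decomposition $\mathcal{P}$, and by construction no path exceeds length $2$, so $\lambda=2$.

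For the degree bound, the number of paths of $\mathcal{P}$ with $v$ as an endpoint is exactly the degree of $v$ in $G(\mathcal{P})$, namely $d(v) - 2\lfloor k(v)/2 \rfloor \leq d(v) - k(v) + 1$, using $2\lfloor k(v)/2 \rfloor \geq k(v)-1$. When $(1/2-\eps)d(v) - 2 \geq 0$, substituting the orientation hypothesis yields $(1/2+\eps)d(v) + 3 \leq (1/2+\eps)d(v)+4$; otherwise $d(v) < 4/(1-2\eps)$ is so small that the inequality $d(v) \leq (1/2+\eps)d(v)+4$ already holds and the node can simply skip contracting. In either case $\delta(v) \leq (1/2+\eps)d(v)+4$ as required, and the total running time is $T(n,\Delta)$ for the orientation plus $O(1)$ rounds for pairing and rewiring, i.e.\ $\bigO(T(n,\Delta))$.

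The main obstacle is merely arithmetic bookkeeping: I must verify that the $+4$ slack in the conclusion absorbs both the $+2$ loss coming from the hypothesis and the additional $+1$ coming from the parity of $k(v)$, and I must confirm that the small-degree regime (where the weak-orientation guarantee is vacuous) does not break the bound. Beyond this, there is little to check: disjointness of the parallel contractions is immediate from the orientation, and the length-$2$ bound on the resulting paths is built into the construction.
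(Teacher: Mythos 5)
Your proof is correct and follows essentially the same approach as the paper: compute the weak orientation, have each node contract disjoint pairs of its outgoing edges (disjointness across nodes being automatic since each non-loop edge is outgoing at exactly one endpoint), and bound the remaining degree by $(1/2+\eps)d(v)+4$. Your explicit handling of the small-degree case where the orientation guarantee is vacuous is a sensible addition of rigor that the paper leaves implicit.
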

\begin{proof}
	Let $G$ be a multigraph with a weak $\bigl(\bigl(\frac{1}{2}-\eps\bigr)d(v)-2\bigr)$-orientation given by the algorithm promised in the lemma statement.
	Now every node $v$ arbitrarily divides the outgoing edges into pairs and contracts these pairs yielding a multigraph with degree at most 
	\[\textstyle d(v) - \lftwo\bigl(\frac{1}{2}-\eps\bigr)d(v)\rftwo + 2 \leq \bigl(\frac{1}{2} + \eps\bigr)d(v) + 4.\]
	Observing that all of the chosen edge pairs are disjoint yields that the constructed multigraph is a $\bigl(\bigl(\frac{1}{2}+\eps\bigr)d(v)+4,\, 2\bigr)$-path decomposition. 
	The contraction operation requires one round of communication.
\end{proof}

In the following lemma we iterate \Cref{lemma:newDecomp} to obtain an even better path decomposition. Furthermore, more care is required in the details to avoid rounding errors and to obtain the correct result when the degrees get small. \Cref{cor:iterative} will be applied many times in proceeding subsections.
\begin{lemma}
\label[lemma]{cor:iterative}
Let $0<\eps\leq 1/6$.
Assume that $T(n,\Delta)\geq \log n$ is the running time of an algorithm $\mathcal{A}$ that finds a weak $\bigl((1/2-\eps)d(v)-2\bigr)$-orientation.
Then for any positive integer~$i$, there is a deterministic distributed algorithm $\mathcal{B}$ that finds a $\bigl((1/2+\eps)^id(v)+4,\, 2^{i+5} \bigr)$-path decomposition $\mathcal{P}$ in time $\bigO(2^i\cdot T(n,\Delta))$.
\end{lemma}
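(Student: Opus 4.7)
The plan is to iterate \Cref{lemma:newDecomp} exactly $i$ times, treating the auxiliary multigraph $G(\mathcal{P}^{(j)})$ produced after iteration $j$ as the input for iteration $j+1$. Concretely: set $\mathcal{P}^{(0)}$ to the trivial decomposition where each edge is its own length-$1$ path, so $G(\mathcal{P}^{(0)})=G$. Then, for $j=0,1,\dots,i-1$, run algorithm $\mathcal{A}$ on $G(\mathcal{P}^{(j)})$ to obtain a weak $((1/2-\eps)d_j(v)-2)$-orientation, and use the contraction argument of \Cref{lemma:newDecomp} to build a length-$2$ path decomposition of $G(\mathcal{P}^{(j)})$. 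Concatenating each of these length-$2$ ``virtual'' paths with the $G$-paths represented by its two edges yields $\mathcal{P}^{(j+1)}$, which is a genuine path decomposition of the original graph $G$. Note that $G(\mathcal{P}^{(j)})$ is a multigraph with possible self-loops, but this is exactly the setting in which \Cref{lemma:newDecomp} and the underlying \Cref{lemma:weakDelta3} apply.

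For the degree bound, by \Cref{lemma:newDecomp} each iteration satisfies $d_{j+1}(v)\le (1/2+\eps)d_j(v)+4$, so by induction $d_i(v)\le (1/2+\eps)^i d(v)+C$ for some universal constant $C$ (arising from the geometric sum, bounded by $4/(1/2-\eps)\le 12$ for $\eps\le 1/6$). To sharpen this to the stated additive $+4$, I would insert a short clean-up phase at the end, based on the same contraction trick but applied to the at most constantly many ``excess'' units of degree: once the recursion produces a graph of bounded maximum degree, a constant number of additional weak $2$-orientations (via \Cref{lemma:weakDelta3}) suffices to shave the residual constant down to $4$. This clean-up multiplies the path length by an absolute constant, which is exactly where the extra $2^5$ factor in $2^{i+5}$ is absorbed: the body of the recursion produces path length at most $2^i$, and the final cleanup multiplies this by at most $2^5$.

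For the runtime, each application of $\mathcal{A}$ at iteration $j$ uses $T(n,\Delta)$ rounds in the virtual graph $G(\mathcal{P}^{(j)})$. Since each edge of $G(\mathcal{P}^{(j)})$ corresponds to a path in $G$ of length at most $2^j$, one round in the virtual graph can be simulated in $2^j$ rounds in $G$, so iteration $j$ costs $2^j\cdot T(n,\Delta)$ rounds. Summing over $j=0,\dots,i-1$ gives a geometric series bounded by $2^{i+1}\cdot T(n,\Delta)=\bigO(2^i\cdot T(n,\Delta))$, and the constant-depth cleanup adds only an $\bigO(T(n,\Delta))$ term (already dominated since $T(n,\Delta)\ge \log n$).

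The main obstacle, as flagged in the lemma's informal statement, is keeping the additive error tight: if one naively tracks the recurrence $d_{j+1}\le (1/2+\eps)d_j+4$, the residual constant blows up to $4/(1/2-\eps)$ rather than $4$. The care required is to argue either (i) that nodes which have already dropped below the threshold $4$ are effectively ``done'' because contractions performed by neighbors do not raise their degree, so the invariant $d_j(v)\le\max\{4,(1/2+\eps)^j d(v)+4\}$ can be maintained by having low-degree nodes abstain from the contraction in each iteration; or (ii) that a bounded clean-up using \Cref{thm:decompExpDeltaRuntime}-style contractions on the final virtual graph removes the remaining slack without affecting the asymptotic runtime. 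Either route yields the claimed $((1/2+\eps)^i d(v)+4,\,2^{i+5})$-path decomposition.
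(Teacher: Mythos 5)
Your overall structure matches the paper: iterate \Cref{lemma:newDecomp} for $i$ steps to reach additive constant $\le 12$, then append a constant-depth clean-up phase whose extra rounds are absorbed into the $2^{i+5}$ path-length budget. The gap is in the clean-up. You propose to finish with ``a constant number of additional weak $2$-orientations (via \Cref{lemma:weakDelta3}),'' but \Cref{lemma:weakDelta3} only gives a weak $\lfloor d(v)/3\rfloor$-orientation, which for a node of degree $5$ yields outdegree $1$, not $2$. So that node cannot contract, and its degree is stuck at $5$. After the $i$ main iterations a node $v$ with $(1/2+\eps)^i d(v)<1$ may well sit at an odd degree up to $13$; four rounds of the weak-$\lfloor d(v)/3\rfloor$ contraction bring it down as $13\to 11\to 9\to 7\to 5$, and then no further progress is possible with \Cref{lemma:weakDelta3}. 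The required bound for such a node is strictly below $5$, so the argument does not close. The paper resolves exactly this case with a fifth clean-up step that invokes \Cref{lemma:firstoutdeg-2} (degree~$5$ nodes get outdegree $2$), which is a genuinely separate, non-trivial result proved in \Cref{sec:outdegtwo} via half-path decompositions; your proof as written does not use it and cannot reach the stated additive $+4$.

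Your alternative route (i) also fails: having nodes abstain once they drop to degree $\le 4$ does nothing for nodes that are still above that threshold, and for those the recurrence $d_{j+1}(v)\le (1/2+\eps)d_j(v)+4$ still accumulates $4\sum_{k<j}(1/2+\eps)^k$, which exceeds $4$ already after two iterations. So the invariant $d_j(v)\le\max\{4,(1/2+\eps)^j d(v)+4\}$ is not maintained by the iteration as you describe it. Route (ii) is the correct idea, but it needs the degree-$5$ lemma to be complete.
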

\begin{proof}
Let $i$ be a positive integer. We define algorithm $\mathcal{B}$ such that it uses algorithm $\mathcal{A}$ to recursively compute graphs $H_0,H_1,\dotsc,H_i, H_{i+1}, \dotsc, H_{i+5}$ and path decompositions $\mathcal{P}_1,\mathcal{P}_2,\dotsc,\allowbreak\mathcal{P}_{i},\allowbreak \mathcal{P}_{i+1},\allowbreak\dotsc,$ $ \mathcal{P}_{i+5}$. Let $G=(V,E)$ be a multigraph. For $j=0,\ldots, i-1$ we set $H_0=G$ and $H_{j+1}=H_j(\mathcal{P}_{j+1})$, where $\mathcal{P}_{j+1}$ is the path decomposition which is returned by applying \Cref{lemma:newDecomp} with algorithm $\mathcal{A}$ on $H_j$. This guarantees that path decomposition $\mathcal{P}_i$ has maximum degree $(\frac{1}{2}+\eps)^id(v)+12$. The remaining five graph decompositions are computed afterwards (see the end of this proof) and reduce the additive $12$ to an additive $4$.

\subparagraph{Properties of \boldmath$\mathcal{P}_{1}, \ldots, \mathcal{P}_{i}$.}
We first show that for $j=1,\ldots,i$ the path decomposition $\mathcal{P}_j$ is a $(z_j(v), 2^j)$-path decomposition with
\[z_j(v)=\bigl(\tfrac{1}{2}+\eps\bigr)^jd(v)+4\sum_{k=0}^{j-1}\bigl(\tfrac{1}{2}+\eps\bigr)^k.\]
With every application of \Cref{lemma:newDecomp} the length of the paths at most doubles in length which implies that the path length of $\mathcal{P}_j$ is upper bounded by $2^j$.
We now prove by induction that the variables $z_j(v)$, $j=1,\ldots i$ behave as claimed:
\begin{itemize}
	\item \emph{Base case:} $z_1(v)=\big(\frac{1}{2}+\eps\big)d(v)+4$ follows from the invocation of \Cref{lemma:newDecomp} with $\mathcal{A}$ on $H_0=G$.
	\item \emph{Inductive step:} Using the properties of \Cref{lemma:newDecomp} we obtain
\begin{align*}
z_{j+1}(v) & =\bigl(\tfrac{1}{2}+\eps\bigr)z_j(v)+4
\leq \bigl(\tfrac{1}{2}+\eps\bigr)\biggl(\bigl(\tfrac{1}{2}+\eps\bigr)^jd(v)+4\sum_{k=0}^{j-1}\bigl(\tfrac{1}{2}+\eps\bigr)^k\biggr)+4\\
& =\bigl(\tfrac{1}{2}+\eps\bigr)^{j+1}d(v)+4\sum_{k=0}^{j}\bigl(\tfrac{1}{2}+\eps\bigr)^k\text{.}
\end{align*}
\end{itemize}
Using the geometric series to bound the last sum and then $\eps\leq 1/6$ we obtain that
\[z_i(v)\leq \bigl(\tfrac{1}{2}+\eps\bigr)^id(v)+12.\]

\subparagraph{Reducing the Additive Term.}
Now, we compute the five further path decompositions $\mathcal{P}_{i+1}, \ldots, \mathcal{P}_{i+5}$ to reduce the additive term in the degrees of the path decomposition from $12$ to $4$; in each path decomposition this additive term is reduced by two for certain nodes. In each of the first four path decompositions nodes with degree at least six in the current path decomposition reduce the additive term by at least two: we compute a weak $\lfloor d(v)/3\rfloor$-orientation (using \Cref{lemma:weakDelta3}) and then every node with degree at least six contracts two outgoing edges.
In the last path decomposition we compute an orientation in which every node with degree at least five in the current path decomposition has two outgoing edges (using \Cref{lemma:firstoutdeg-2}) and then each of them contracts two incident edges. Thus in the last path decomposition the additive term of nodes with degree five is reduced by two.

To formally prove that we obtain the desired path decomposition let $x_{i+j}(v)$ be the actual degree of node $v$ in $G(\mathcal{P}_{i+j})$ for $j=0,\ldots, 5$. First note that the degree of a node never increases due to an edge contraction, not even due to an edge contraction which is performed by another node.

\subparagraph{Constructing \boldmath$\mathcal{P}_{i+1}, \ldots, \mathcal{P}_{i+4}$.}
To determine path decomposition $\mathcal{P}_{i+j+1}$ for $j=0,\ldots,3$, we compute an orientation of $G(\mathcal{P}_{i+j})$ in which every node $v$ with $x_{i+j}(v) \geq 6$ has outdegree at least two (one can use the algorithm described in \Cref{lemma:weakDelta3}). Then $\mathcal{P}_{i+j+1}$ is obtained if every node with $x_{i+j}(v)\geq 6$ contracts two of its incident outgoing edges.  So, whenever $x_{i+j}(v)\geq 6$ we obtain that $x_{i+j+1}(v)= x_{i+j}(v)-2$, that is $x_{i+j+1}\leq z_i(v)-2(j+1)$. If $x_{i+j}(v)\geq 6$ for all $j=0,\ldots,3$ we have
\[x_{i+5}(v)\leq x_{i+4}(v)\leq (1/2+\eps)^id(v)+4.\]
Otherwise, for some $j=0,\ldots,3$, we have $x_{i+j}(v)\leq 5$, that is, $x_{i+4}(v)\leq 4$ or $x_{i+4}(v)=5$. If $x_{i+4}(v)\leq 4$ we have
\[x_{i+5}(v)\leq x_{i+4}(v)\leq 4\leq (1/2+\eps)^id(v)+4.\]

\subparagraph{Constructing \boldmath$\mathcal{P}_{i+5}$.}
For nodes with $x_{i+4}(v)=5$ we compute one more path decomposition.
We use \Cref{lemma:firstoutdeg-2} to compute an orientation of $G(\mathcal{P}_4)$ in which each node with degree at least five has two outgoing edges; then each node with at least two outgoing edges contracts one pair of its incident outgoing edges.
Thus the degree of nodes with degree five reduces by two and we obtain that the path decomposition  $\mathcal{P}_{i+5}$  is a  $\bigl((\frac{1}{2}+\eps)^id(v)+4,\, 2^{i+5}\bigr)$-path decomposition.

\subparagraph{Running Time.}
The time complexity to invoke algorithm $\mathcal{A}$ or the algorithms from \Cref{lemma:weakDelta3} or \Cref{lemma:firstoutdeg-2}  on graph $H_j$ is $\bigO(2^jT(n,\Delta))$ because the longest path in $H_j$ has length $2^j$ and $T(n,\Delta)\geq \log n$. Thus, the total runtime is
\[\bigO\biggl(\,\sum_{j=0}^{i+5}2^jT(n,\Delta)\biggr)=\bigO\bigl(2^{i}T(n,\Delta)\bigr).\qedhere\]
\end{proof}
The reduction of the additive term in the proof of \Cref{cor:iterative} is most likely not helpful for edge coloring applications as constant degree graphs can be colored quickly anyways. However, for theoretical reasons it is interesting to see how close we can get to optimal splits with regard to the discrepancy. The splits that we obtain for directed splitting are optimal; the undirected splitting result leaves a bit of space for improvement. 

\subsection{Amplifying Weak Orientation Algorithms}
Now, we use \Cref{cor:iterative} to iterate a given weak orientation algorithm $\mathcal{A}$ to obtain a new weak orientation algorithm $\mathcal{B}$. The goal is that $\mathcal{B}$ has an outdegree guarantee which is much closer to $(1/2) d(v)$ than the guarantee provided by algorithm $\mathcal{A}$.

\begin{lemma}\label[lemma]{lemma:weakTransformation}
Let $0<\eps_2<\eps_1\leq \frac{1}{6}$.
Assume that there is a deterministic algorithm  $\mathcal{A}$ which computes a  weak $\left(\left( \frac{1}{2}-\eps_1 \right) d(v)-2\right)$-orientation and runs in time $T(n,\Delta)$.
Then there is a deterministic weak $\left(\left( \frac{1}{2}-\eps_2 \right)d(v)-2 \right)$-orientation  algorithm $\mathcal{B}$ with running time
\begin{align}
\label[equation]{eqn:runtimeweakTransformation}
	\bigO\Bigl(\eps_2^{\log_2^{-1}( \frac{1}{2}+\eps_1 )}\cdot T(n,\Delta)\Bigr)=\bigO\Bigl(\eps_2^{-(1+24\eps_1)}\cdot T(n,\Delta)\Bigr).
\end{align}
\end{lemma}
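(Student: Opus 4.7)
The strategy is to amplify $\mathcal{A}$ by running it inside the path-decomposition framework of \Cref{cor:iterative} and then converting the resulting decomposition back into an orientation via \Cref{lemma:fromPathDecompToStrongOrient}. Concretely, for any positive integer $i$, \Cref{cor:iterative} (whose hypothesis $\eps_1\leq 1/6$ we have) turns $\mathcal{A}$ into an algorithm that produces a $\bigl((1/2+\eps_1)^i d(v)+4,\ 2^{i+5}\bigr)$-path decomposition in time $\bigO(2^i\cdot T(n,\Delta))$. Feeding this decomposition into \Cref{lemma:fromPathDecompToStrongOrient} gives, in $\bigO(2^{i+5})$ additional rounds, a strong orientation with outdegree at every node $v$ at least $\tfrac{1}{2}\bigl(d(v)-(1/2+\eps_1)^i d(v)-4\bigr)$; this is in particular a weak orientation with the same outdegree guarantee, which is what algorithm $\mathcal{B}$ must output.

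Next I choose $i$ so that the outdegree meets the target. The outdegree just displayed is at least $(1/2-\eps_2)d(v)-2$ exactly when $(1/2+\eps_1)^i\leq 2\eps_2$, and setting $i:=\lceil \log_2(2\eps_2)/\log_2(1/2+\eps_1)\rceil$ (a positive integer, since both logarithms are negative) suffices. With this choice $2^i\leq 2\,(2\eps_2)^{1/\log_2(1/2+\eps_1)}$, which plugged into $\bigO(2^i\cdot T(n,\Delta))$ matches the first form of the running time in \Cref{eqn:runtimeweakTransformation}.

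For the second form I need $1/\log_2(1/2+\eps_1)\geq -(1+24\eps_1)$; since $\eps_2<1$ makes $x\mapsto \eps_2^x$ decreasing, this inequality between the exponents yields $\eps_2^{1/\log_2(1/2+\eps_1)}\leq \eps_2^{-(1+24\eps_1)}$. Writing $\log_2(1/2+\eps_1)=-1+\log_2(1+2\eps_1)$, bounding $\log_2(1+2\eps_1)\leq 2\eps_1/\ln 2$, and using $1/(1-y)\leq 1+2y$ for $y\leq 1/2$ (valid because $\eps_1\leq 1/6$) give $1/\log_2(1/2+\eps_1)\geq -(1+(4/\ln 2)\eps_1)\geq -(1+24\eps_1)$ with plenty of slack. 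The main obstacle in the proof is really just this bookkeeping: making sure that the additive $+4$ in the path-decomposition degree lines up (after the halving in \Cref{lemma:fromPathDecompToStrongOrient}) with the additive $-2$ in the target outdegree, and that the elementary exponent inequality is loose enough to be robust for all $\eps_1\in(0,1/6]$.
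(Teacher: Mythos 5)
Your proof is correct and follows essentially the same route as the paper: amplify $\mathcal{A}$ via \Cref{cor:iterative}, convert the resulting path decomposition to a strong orientation with \Cref{lemma:fromPathDecompToStrongOrient}, and choose $i$ logarithmically in $1/\eps_2$ with base $1/(1/2+\eps_1)$. The only (minor, arguably improving) differences are that you explicitly take a ceiling to make $i$ an integer and aim for $(1/2+\eps_1)^i\leq 2\eps_2$ rather than the paper's $=\eps_2$, and that your elementary bound $1/\log_2(1/2+\eps_1)\geq -(1+(4/\ln 2)\eps_1)$ replaces the paper's Taylor-series computation in \Cref{lemma:taylor}, reaching the same $-(1+24\eps_1)$ with room to spare.
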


Let $\alpha = \frac{1}{2}-\eps_1$ and $\beta = \frac{1}{2} + \eps_1$. The roadmap for the proof of Lemma \ref{lemma:weakTransformation} is as follows:
\begin{enumerate}[label=(\arabic*)]
	\item Execute $i$ iterations of a weak $(\alpha d(v) - 2)$-orientation algorithm, for an $i$ that will be chosen later, and after each iteration, perform disjoint edge contractions. Thus, we obtain a $\bigl(\beta^{i} d(v) + 4,\, 2^{i + 5} \bigr)$-path decomposition using \Cref{cor:iterative}.
	\item Apply \Cref{lemma:fromPathDecompToStrongOrient} to obtain a strong (and thus also a weak) $\bigl( \frac{1}{2} ( 1 -  \beta^i ) d(v) - 2\bigr)$-orientation.
	\item By setting $i = \log(\eps_2) / \log(\beta)$ we get that $\beta^{i} = \eps_2$ and the running time of steps 1--2 is
	\[
		\bigO(2^i T(n,\Delta)) = \bigO\bigl(\eps_2^{\log_2^{-1}{\beta }}\cdot T(n,\Delta)\bigr) = \bigO\bigl(\eps_2^{-(1 + 24\eps_1)}\cdot T(n,\Delta)\bigr),
	\]
	where $T(n, \Delta)$ is the runtime of the weak $(\alpha d(v) - 2)$-orientation algorithm. The last equality holds because with \Cref{lemma:taylor}, we obtain that $-\log_2^{-1} \beta \leq 1+24\eps_1$ when $\eps_1 \leq 1/6$.
\end{enumerate}
With \Cref{lemma:weakTransformation} at hand we can amplify the quality of splitting algorithms and obtain the following theorem.
\begin{theorem}
\label[theorem]{thm:orientationAlgorithms}
	Let $\discr$ be a positive integer. There exist the following deterministic weak orientation algorithms.
\begin{enumerate}[label=(\alph*)]
\item $\mathcal{A}$: weak $\left( \left( \frac{1}{2}-1/\log\log \frac{\Delta}{\discr} \right)d(v)- 2 \right)$-orientation in time $\bigO\bigl( \bigl(\log\log \frac{\Delta}{\discr} \bigr)^{1.71} \cdot \log n \bigr)$.
\item $\mathcal{B}$: weak $\left( \left( \frac{1}{2}-1/\log \frac{\Delta}{\discr} \right)d(v)- 2 \right)$-orientation in time $\bigO \bigl( \log \frac{\Delta}{\discr} \cdot \bigl(\log\log{\frac{\Delta}{\discr}}\bigr)^{1.71} \cdot \log n \bigr)$.
\item $\mathcal{C}$: weak $\left( \left( \frac{1}{2}-\frac{\discr}{\Delta} \right)d(v)- 2 \right)$-orientation in time $\bigO \bigl(\frac{\Delta}{\discr} \cdot \log \frac{\Delta}{\discr} \cdot \bigl( \log\log{\frac{\Delta}{\discr}} \bigr)^{1.71} \cdot \log n \bigr)$.
\end{enumerate}
\end{theorem}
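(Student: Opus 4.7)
The plan is to apply the amplification lemma (\Cref{lemma:weakTransformation}) in three successive stages, starting from the basic weak $\lfloor d(v)/3 \rfloor$-orientation algorithm of \Cref{lemma:weakDelta3}. Since $\lfloor d(v)/3 \rfloor \geq (1/2 - 1/6)\,d(v) - 2$, that base algorithm already qualifies as a weak $\bigl((1/2 - 1/6)\,d(v) - 2\bigr)$-orientation algorithm with runtime $T_0(n,\Delta) = O(\log n)$, which provides the seed $\eps_1 = 1/6$ for the first amplification.

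For part (a) I would apply \Cref{lemma:weakTransformation} once to this base algorithm with $\eps_1 = 1/6$ and $\eps_2 = 1/\log\log(\Delta/\discr)$. The exponent in the first (tight) form of the runtime bound is $-\log_2^{-1}(1/2 + 1/6) = -\log_2^{-1}(2/3)$, which evaluates numerically to $1.71\ldots$, yielding $\eps_2^{-1.71} \cdot T_0(n,\Delta) = O\bigl((\log\log(\Delta/\discr))^{1.71} \cdot \log n\bigr)$, exactly as claimed.

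For parts (b) and (c) I would feed the output of the previous stage back into \Cref{lemma:weakTransformation}, this time using the looser second form of the runtime bound, $\eps_2^{-(1+24\eps_1)} \cdot T$. The key observation that keeps the calculation clean is that once $\eps_1$ is already small, the extra penalty factorizes as $\eps_2^{-24\eps_1} = 2^{24\,\eps_1\,\log(1/\eps_2)}$, which collapses to a constant whenever the product $\eps_1 \cdot \log(1/\eps_2)$ is $O(1)$. Plugging in $\eps_1 = 1/\log\log(\Delta/\discr)$ and $\eps_2 = 1/\log(\Delta/\discr)$ for (b), and $\eps_1 = 1/\log(\Delta/\discr)$ and $\eps_2 = \discr/\Delta$ for (c), the product is exactly $1$ in both cases, so the penalty reduces to $2^{24} = O(1)$. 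Multiplying the leading $\eps_2^{-1}$ by the runtime of the previous stage then produces precisely the bounds stated in the theorem.

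The main obstacle is not really conceptual but one of bookkeeping: I need to carefully verify the side conditions $0 < \eps_2 < \eps_1 \leq 1/6$ of \Cref{lemma:weakTransformation} at each invocation. If $\Delta/\discr$ is too small then one or more of these inequalities may fail; in that regime, however, the stated runtime bounds are themselves $\Omega(\log n)$-trivial, so one may assume $\Delta/\discr$ is at least a suitable constant and otherwise simply invoke \Cref{lemma:weakDelta3} directly, absorbing any slack into the hidden constants.
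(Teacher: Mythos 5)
Your proposal is correct and follows essentially the same route as the paper: seed \Cref{lemma:weakTransformation} with the weak $\lfloor d(v)/3\rfloor$-orientation algorithm of \Cref{lemma:weakDelta3} (so $\eps_1 = 1/6$), then chain three applications with $\eps_2 = 1/\log\log(\Delta/\discr)$, $1/\log(\Delta/\discr)$, and $\discr/\Delta$ respectively, each time feeding the previous stage's algorithm back in. Your observation that $\eps_2^{-24\eps_1}$ collapses to $O(1)$ because $\eps_1\log(1/\eps_2)=1$ at each of the later stages is a nice explicit articulation of why the runtimes compose cleanly, and your note on the side conditions $0 < \eps_2 < \eps_1 \le 1/6$ and the fallback for small $\Delta/\discr$ is a reasonable (and correct) piece of bookkeeping the paper leaves implicit.
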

In the proof of \Cref{thm:orientationAlgorithms} we perform the following steps:
\begin{enumerate}[resume*]
	\item Use \Cref{lemma:weakTransformation} with $\eps_1 = 1/6$ and $\eps_2 = 1/\log \log \Delta$ to obtain an algorithm which computes a weak $\bigl( \bigl( \frac{1}{2} - 1/\log \log \Delta \bigr) d(v) - 2 \bigr)$-orientation and runs in time $\bigO((\log \log \Delta)^{1.71} \cdot \log n)$. In this step, we plug in $\eps_1 = 1/6$ to obtain the exponent
	\[-\log_2^{-1} \beta = -\log_2^{-1} \bigl(\tfrac{1}{2} + \tfrac{1}{6}\bigr) < 1.71.\]
	\item Using the construction twice more, once with $\eps_1=1/\log\log\Delta$ and $\eps_2=1/\log\Delta$ and once with  $\eps_1 = 1/\log \Delta$ and $\eps_2 = 1 / \Delta$, yields a  
	\[\text{weak }\left( \left( \frac{1}{2} - \frac{1}{\Delta} \right) d(v) - 2 \right)\text{-orientation algorithm}\] that runs in time $\bigO(\Delta \cdot \log \Delta \cdot (\log \log \Delta)^{1.71} \cdot \log n)$.
\end{enumerate}

Before we continue with the formal proofs of \Cref{lemma:weakTransformation,thm:orientationAlgorithms} we prove the following technical result that we use to simplify running times; it is proved with a Taylor expansion.
\begin{lemma}
	\label[lemma]{lemma:taylor}
	Let $0<\eps \leq 1/6$. Then, $-\log^{-1}_2{(\frac{1}{2} + \eps)} \leq 1 + 24\eps$.
\end{lemma}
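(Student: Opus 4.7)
The plan is to reduce the claim to two elementary ingredients that can both be viewed as low-order Taylor bounds: the first-order bound $\ln(1+t)\leq t$ for $t>-1$, and the geometric-series bound $1/(1-x)\leq 1+2x$ valid on $x\in[0,1/2]$ (which follows immediately from $1/(1-x)=\sum_{k\geq 0}x^k\leq 1+x\cdot\frac{1}{1-x}\leq 1+2x$ once $x\leq 1/2$). Combining these in sequence gives a clean proof without any heavy calculation.

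First I would rewrite the target quantity in a form that isolates a ``small'' correction. Using $\log_2(1/2+\eps)=-1+\log_2(1+2\eps)$ and converting the second logarithm to the natural logarithm,
\[
-\log_2\!\bigl(\tfrac{1}{2}+\eps\bigr) \;=\; 1-\frac{\ln(1+2\eps)}{\ln 2}.
\]
Applying $\ln(1+2\eps)\leq 2\eps$ then yields $-\log_2(1/2+\eps)\geq 1-x$ where I set $x:=2\eps/\ln 2$. Since this quantity is positive, I can take reciprocals to obtain
\[
-\log_2^{-1}\!\bigl(\tfrac{1}{2}+\eps\bigr) \;\leq\; \frac{1}{1-x}.
\]

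Next I would verify that the hypothesis $\eps\leq 1/6$ places $x$ inside the interval where the geometric-series bound applies. Indeed $x\leq (1/3)/\ln 2$, and since $\ln 2>2/3$ this gives $x\leq 1/2$. Thus $1/(1-x)\leq 1+2x=1+4\eps/\ln 2$, and since $4/\ln 2<6<24$, I conclude
\[
-\log_2^{-1}\!\bigl(\tfrac{1}{2}+\eps\bigr) \;\leq\; 1+\frac{4\eps}{\ln 2} \;\leq\; 1+24\eps,
\]
which is exactly the claim.

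There is no real obstacle: the whole argument is a three-step chain of one-line inequalities, with the only subtle point being to check that $x\leq 1/2$ so that the geometric-series bound is legitimate. That check in turn reduces to the standard fact $\ln 2>2/3$. The constant $24$ in the statement is in fact very loose for the argument given here (any constant larger than $4/\ln 2\approx 5.77$ would suffice), which is consistent with the fact that the paper only needs the lemma as a convenient simplification of the running-time expression in \Cref{lemma:weakTransformation}.
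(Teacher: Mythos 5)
Your proof is correct, and it takes a more streamlined route than the paper's while following the same high-level plan: bound $-\log_2\bigl(\tfrac12+\eps\bigr)$ from below by $1$ minus an $O(\eps)$ correction, then invert with a geometric-series-type estimate. The paper substitutes $z = 2\eps/(\tfrac12+\eps)$ so that $(\tfrac12+\eps)^{-1} = 2-z$, expands $\ln(2-z)$ as a power series in $z$, and bounds the tail term by term (first dropping the $1/k$ factors, then using $z^k\leq z$). Your factoring $\tfrac12+\eps = \tfrac12(1+2\eps)$, which gives $\log_2\bigl(\tfrac12+\eps\bigr)=\log_2(1+2\eps)-1$, compresses all of that into the single first-order bound $\ln(1+t)\leq t$; it also yields a slightly tighter intermediate estimate, $1/(1-2\eps/\ln 2)$, compared to the paper's $1/(1-z/\ln 2)$ (tighter because $z>2\eps$ for $\eps<1/2$). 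The final step $1/(1-x)\leq 1+2x$ on $x\in[0,1/2]$ plays the same role as the paper's manipulation of $(1-z\ln^{-1}2)^{-1}$, and you are right that the constant $24$ is loose --- anything above $4/\ln 2\approx 5.77$ would work with your argument. The only point to check, $2\eps/\ln 2\leq 1/2$, you correctly reduce to $\ln 2>2/3$.
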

\begin{proof}
Let $z = 2\eps/\bigl(\frac{1}{2} + \eps\bigr) \leq 4\eps$.
Notice that $2 - z = \bigl(\frac{1}{2} + \eps\bigr)^{-1}$.
By writing $\log_2^{-1}(2 - z)$ using Taylor series at $2$, we get that
\begin{align*}
-\log_2^{-1}\bigl(\tfrac{1}{2} + \eps \bigr) & = \log_2^{-1}(2-z) = \ln(2) \ln^{-1}(2 - z) \\
& = \ln(2)\biggl(\ln(2)-\sum_{k=1}^{\infty}\frac{1}{k\cdot2^k}z^k\biggr)^{-1}\\
& \leq \biggl(1-\ln^{-1} 2 \sum_{k=1}^{\infty}\frac{1}{2^k }z^k\biggr)^{-1}\\
& \stackrel{|z|<1}{\leq} \biggl(1-z\ln^{-1} 2\sum_{k=1}^{\infty}\frac{1}{2^k }\biggr)^{-1}\\
& = (1-z\ln^{-1}2)^{-1} \leq 1+z\cdot\frac{\ln^{-1}2}{1-z\ln^{-1}2}\\
& \stackrel{\eps \leq 1/6}{\leq} 1+6z\stackrel{z \leq 4\eps} \leq 1+24\eps.
\qedhere
\end{align*}
\end{proof}

In the following proof we perform steps 1--3 of the aforementioned agenda.

\begin{proof}[Proof of \Cref{lemma:weakTransformation}]
Let $i = \log_2(\eps_2) / \log_2{(1/2+\eps_1)}$ which is bounded by $(1 + 24\eps_1) \log_2(1/\eps_2)$ due to \Cref{lemma:taylor}; thus it is sufficient to show the left hand side of $(\ref{eqn:runtimeweakTransformation})$.
By applying \Cref{cor:iterative} with parameter $i$ and algorithm $\mathcal{A}$, we get a distributed algorithm that finds a 
\[\left(\left(1/2 + \eps_1\right)^r d(v) + 4,\, 2^{i+5} \right)\text{-path decomposition}\] in time
\[
 	\bigO\bigl(2^i \cdot T(n, \Delta)\bigr) = \bigO\Bigl(\eps_2^{\log_2^{-1}( \frac{1}{2}+\eps_1 )}\cdot T(n,\Delta)\Bigr).
\]
The degree of node $v$ in the path decomposition is upper bounded by
\[\bigl(\tfrac{1}{2} + \eps_1\bigr)^i d(v)+4=\eps_2d(v)+4.\]
Now \Cref{lemma:fromPathDecompToStrongOrient} yields a weak $\bigl( \frac{1}{2}\bigl(1 - \eps_2\bigr) d(v) - 2\bigr)$-orientation algorithm with the same running time; in particular, this is a weak $\bigl( \bigl(\frac{1}{2} - \eps_2\bigr) d(v) - 2\bigr)$-orientation algorithm.
\end{proof}

We close the section by performing steps 4--5 of the agenda. Note that the theorem is more general than what was outlined in the agenda as it contains an additional parameter $\discr$ which can be used to tune the running time at the cost of the quality of the weak orientation algorithm.

\begin{proof}[Proof of \Cref{thm:orientationAlgorithms}]Each statement is proven by applying \Cref{lemma:weakTransformation} with different values for $\eps_1$ and $\eps_2$.
\begin{enumerate}[label=(\alph*)]
\item
We obtain the algorithm $\mathcal{A}$ by applying \Cref{lemma:weakTransformation} with the weak $\lfloor \Delta/3\rfloor$-orientation algorithm from \Cref{lemma:weakDelta3}, that is with $\eps_1=1/6$, and with $\eps_2 = 1/\log\log (\Delta / \delta)$.

\item Algorithm $\mathcal{B}$ is obtained by applying \Cref{lemma:weakTransformation} with the algorithm $\mathcal{A}$ from part (a) as input (i.e., $\eps_1 = 1/\log\log (\Delta / \delta)$) and with $\eps_2 = 1/\log (\Delta / \delta)$.
\item Algorithm $\mathcal{C}$ is obtained by applying \Cref{lemma:weakTransformation} with the algorithm $\mathcal{B}$ from part (b) as input (i.e., $\eps_1=1/\log (\Delta / \delta)$) and with $\eps_2=1/(\Delta / \delta) = \delta / \Delta$.
\qedhere
\end{enumerate}
\end{proof}

\subsection{Short and Low Degree Path Compositions Fast}

Our higher level goal is to compute a path decomposition where the degree is as small as possible to obtain a directed split with the discrepancy as small as possible (with methods similar to \Cref{lemma:fromPathDecompToStrongOrient}, also see the proof of \Cref{thm:mainSplitting}).
As we will show in the next theorem, with the methods introduced in this section and the appropriate choice of parameters, we can push the maximum degree of the path decomposition down to $\eps d(v) + 4$ for any $\eps>0$. This is the true limit of this approach because we cannot compute weak $2$-orientations of $4$-regular graphs in sublinear time (see \Cref{thm:weak2orientation-lb}).
\newcommand{\Z}{\alpha}

\begin{theorem}
\label[theorem]{lemma:mainPathDecomposition}
Let $G=(V,E)$ be a multigraph with maximum degree $\Delta$. For any $\eps>0$  there is a deterministic distributed algorithm which computes a $(\delta(v),\bigO(1/\eps))$-path decomposition in time
$\bigO\bigl(\Z \cdot \log \Z \cdot (\log\log\Z)^{1.71} \cdot \log n\bigr)$, where $\Z=2/\eps$ and $\delta(v)=\eps d(v)+3$ if $\eps d(v)\geq1$ and $\delta(v)=4$ otherwise.

\end{theorem}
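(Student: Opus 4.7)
The plan is to instantiate \Cref{cor:iterative} with the weak orientation algorithm provided by \Cref{thm:orientationAlgorithms}(b), tuned so that the resulting path decomposition has both the desired degree and the desired length. I write $\eta$ for the slackness parameter fed into \Cref{cor:iterative} to avoid clashing with the $\eps$ of the theorem statement.

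First I invoke \Cref{thm:orientationAlgorithms}(b) with parameter $\discr := \eps \Delta$, so that $\Delta/\discr = 1/\eps$ and the resulting algorithm $\mathcal{A}'$ computes a weak $\bigl((\tfrac{1}{2} - \eta)\, d(v) - 2\bigr)$-orientation with $\eta := 1/\log_2(1/\eps)$, in time
\[
T(n, \Delta) \;=\; \bigO\bigl(\log(1/\eps)\cdot (\log\log(1/\eps))^{1.71}\cdot \log n\bigr).
\]
For $\eps$ above a universal constant the target runtime is $\bigO(\log n)$ and the claim follows from a constant number of iterations of \Cref{lemma:newDecomp} seeded by \Cref{lemma:weakDelta3}, so assume $\eps$ is small enough that $\eta \le 1/6$.

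Next I feed $\mathcal{A}'$ into \Cref{cor:iterative} with iteration count
\[
i \;:=\; \bigl\lceil (1 + 24\eta)\cdot \log_2(1/\eps) \bigr\rceil.
\]
By \Cref{lemma:taylor} we have $-\log_2(\tfrac{1}{2}+\eta) \ge 1/(1+24\eta)$, hence $(\tfrac{1}{2} + \eta)^i \le \eps$. Since $24\eta \log_2(1/\eps) = 24$, we also get $2^i \le (1/\eps)^{1+24\eta} = 2^{24}/\eps = \bigO(1/\eps)$. Thus \Cref{cor:iterative} outputs a path decomposition with path length at most $2^{i+5} = \bigO(1/\eps)$, in time
\[
\bigO(2^i \cdot T(n,\Delta)) \;=\; \bigO\bigl((1/\eps)\log(1/\eps)(\log\log(1/\eps))^{1.71}\log n\bigr) \;=\; \bigO\bigl(\alpha \log \alpha (\log\log\alpha)^{1.71}\log n\bigr),
\]
which is the claimed runtime.

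It remains to establish the degree bound. The written statement of \Cref{cor:iterative} would give only $(\tfrac{1}{2}+\eta)^i d(v) + 4 \le \eps\cdot d(v) + 4$, which is one too large. A closer look at its proof however yields the tighter invariant $x_{i+5}(v) \le \max(z_i(v) - 10,\, 4)$, where $z_i(v) \le (\tfrac{1}{2}+\eta)^i d(v) + 12$ is the degree entering the five clean-up iterations. Substituting gives
\[
x_{i+5}(v) \;\le\; \max\bigl(\eps\cdot d(v) + 2,\; 4\bigr).
\]
If $\eps\cdot d(v) \ge 1$ then both arguments of the maximum are at most $\eps\cdot d(v) + 3$; otherwise they are both at most $4$. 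This matches $\delta(v)$ exactly and completes the argument. The main obstacle is precisely this last mild tightening: the fifth clean-up iteration inside the proof of \Cref{cor:iterative} already uses \Cref{lemma:firstoutdeg-2} to bring degree-$5$ nodes to degree $3$, and this saves exactly the extra unit needed to hit the additive $3$ in the regime $\eps\cdot d(v) \ge 1$, so no new technical machinery is required beyond careful bookkeeping.
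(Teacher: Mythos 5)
Your proposal is correct, but it takes a different route to the degree bound than the paper, and the two are worth comparing. The paper applies \Cref{cor:iterative} with $\discr=\Delta/\Z$ (so $\Delta/\discr=\Z=2/\eps$) and chooses $i$ so that the multiplicative factor driven down by the iterations becomes $\Z^{-1}=\eps/2$ rather than $\eps$. The stated conclusion of \Cref{cor:iterative} then gives a degree bound of $(\eps/2)d(v)+4$, and since the degree is an integer one can take the floor and observe that $\lfloor \eps d(v)/2 + 4\rfloor \le \eps d(v)+3$ whenever $\eps d(v)\ge 1$, and $\le 4$ otherwise; \Cref{cor:iterative} is used entirely as a black box. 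You instead set the multiplicative factor to $\eps$, concede that the black-box conclusion $\eps d(v)+4$ is one unit too large, and then recover the missing unit by re-opening the proof of \Cref{cor:iterative}: you correctly observe that the five clean-up iterations actually give $x_{i+5}(v)\le\max\bigl(z_i(v)-10,\,4\bigr)$ rather than the weaker $x_{i+5}(v)\le z_i(v)-8$ that is all the stated lemma promises, because the fifth round (\Cref{lemma:firstoutdeg-2}) also contracts when $x_{i+4}(v)\ge 5$. This tighter invariant is true (a straightforward case check on $x_i(v)\ge 13$, $5\le x_i(v)\le 12$, $x_i(v)\le 4$ confirms it) and does yield $\max(\eps d(v)+2,\,4)\le\delta(v)$. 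Both arguments are valid; the paper's has the advantage of not touching \Cref{cor:iterative}'s internals, trading it for the small integrality observation and the factor-of-two slack built into $\Z=2/\eps$, while yours shows that the same conclusion is reachable with the "natural" target $\eps$ at the cost of a finer bookkeeping pass through the clean-up phase. Your runtime and path-length calculations match the paper's up to constants.
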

\begin{proof} Apply \Cref{cor:iterative} with algorithm $\mathcal{B}$ from \Cref{thm:orientationAlgorithms}, $\delta=\Delta/\Z$, and
\[i=\frac{\log{\Z^{-1}}}{\log(1/2+1/\log (\Z))}.\]
This implies a path decomposition with degrees  $\lfloor\Z^{-1} d(v)+4\rfloor=\lfloor \eps d(v)/2+4\rfloor$. If $\eps d(v)\geq 1$ this is smaller than $\eps d(v)+3$. If $\eps d(v)<1$ this is at most $4$.
The length of the longest path is upper bounded by $\bigO(2^i)=\bigO\bigl(\Z^{1+24/\log{\Z}}\bigr)=\bigO(\alpha)$ where we used \Cref{lemma:taylor}.
The runtime is bounded by
\[\bigO\bigl(2^i \cdot T_{\mathcal{B}}(n,\Delta)\bigr) = \bigO\bigl(\Z \cdot \log \Z \cdot \left(\log\log \Z\right)^{1.71} \cdot \log n \bigr),\]
 where $T_{\mathcal{B}}(n,\Delta)$ is the running time of algorithm $\mathcal{B}$.
\end{proof}

Choosing $\eps=1/(2\Delta)$ in \Cref{lemma:mainPathDecomposition} yields the following corollary.
\begin{corollary}[Constant Degree Path Decomposition]
There is a deterministic algorithm which computes a
$(4,\bigO(\Delta))$-path decomposition in time $\bigO\bigl(\Delta\cdot \log\Delta\cdot (\log\log\Delta)^{1.71}\cdot \log n\bigr)$.
\end{corollary}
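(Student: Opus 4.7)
The plan is to directly invoke \Cref{lemma:mainPathDecomposition} with the parameter choice $\eps=1/(2\Delta)$ and then simplify the resulting bounds. There is no real obstacle here: all the work has already been done in proving \Cref{lemma:mainPathDecomposition}, and the corollary is essentially a specialization. The only things to verify are that (i) the per-vertex degree bound $\delta(v)$ given by the lemma collapses to the constant $4$ for every node under this choice of $\eps$, (ii) the path-length bound $\bigO(1/\eps)$ translates to $\bigO(\Delta)$, and (iii) the running-time expression in terms of $\Z=2/\eps$ translates to the claimed expression in $\Delta$.

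For (i), with $\eps=1/(2\Delta)$ and any node $v$ we have $\eps\cdot d(v)\leq \eps\cdot\Delta=1/2<1$, so we are in the second case of the definition of $\delta(v)$ in \Cref{lemma:mainPathDecomposition}, which gives $\delta(v)=4$ for every $v\in V$. Therefore the output of the algorithm is a $(4,\lambda)$-path decomposition (for the $\lambda$ computed next). For (ii), the lemma guarantees $\lambda=\bigO(1/\eps)=\bigO(2\Delta)=\bigO(\Delta)$, which is exactly the path-length bound claimed in the corollary.

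For (iii), setting $\Z=2/\eps=4\Delta$, the running time from \Cref{lemma:mainPathDecomposition} is
\[
\bigO\bigl(\Z\cdot\log\Z\cdot(\log\log\Z)^{1.71}\cdot\log n\bigr)
=\bigO\bigl(4\Delta\cdot\log(4\Delta)\cdot(\log\log(4\Delta))^{1.71}\cdot\log n\bigr).
\]
Absorbing the constant $4$ and using $\log(4\Delta)=\bigO(\log\Delta)$ and $\log\log(4\Delta)=\bigO(\log\log\Delta)$ (valid for $\Delta$ at least a constant; for $\Delta=\bigO(1)$ the statement is trivial since a single-edge path decomposition already has constant degree and constant length), this simplifies to the claimed bound $\bigO(\Delta\cdot\log\Delta\cdot(\log\log\Delta)^{1.71}\cdot\log n)$. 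Putting the three observations together completes the proof.
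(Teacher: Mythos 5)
Your proposal is correct and follows exactly the paper's approach: the paper derives this corollary by the one-line observation of choosing $\eps=1/(2\Delta)$ in \Cref{lemma:mainPathDecomposition}, and your three verification steps (degree bound collapses to $4$, path length is $\bigO(\Delta)$, running time simplifies) fill in the same routine arithmetic. Nothing is missing.
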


\begin{remark}
For any positive integer $k$ smaller than $\logStar(\alpha)\pm \bigO(1)$ one can improve the runtime of \Cref{lemma:mainPathDecomposition}  to
$\bigO\bigl(\alpha\cdot (\log^{(k)}\alpha)^{0.71}\cdot\log n\cdot \Pi_{j=1}^k\log^{(j)} \alpha\bigr)$, where $\log^{(j)}(\cdot)$ denotes the $j$ times iterated logarithm, $\alpha=2/\epsilon$ and the constant in the $\bigO$-notation grows exponentially in $k$. This essentially follows from a version of \Cref{thm:orientationAlgorithms} that turns a weak $\bigl((1/2-1/\log^{(k)}\alpha)d(v)- 2\bigr)$-orientation algorithm into a weak $\bigl((1/2-1/\log \alpha)d(v)- 2\bigr)$-orientation algorithm in $k-1$ iterations.
\end{remark}


\section[\texorpdfstring{Degree $3$: Sinkless and Sourceless Orientations}{Degree 3: Sinkless and Sourceless Orientations}]{\texorpdfstring{Degree \boldmath$3$: Sinkless and Sourceless Orientations}{Degree 3: Sinkless and Sourceless Orientations}}\label[section]{sec:basecase}
The results of this section are used in the proof of \Cref{thm:mainSplitting} and in \Cref{sec:outdegtwo}. 

First note that an arbitrary consistent orientation of the paths in the best path decomposition of \Cref{sec:shortPathsDecompositions} would result in a  splitting in which each node $v$ has discrepancy at most $\eps\cdot d(v) + 4$. In the case of directed splitting we slightly tune this in the proof of \Cref{thm:mainSplitting} by consistently orienting the paths in such a way that each node has at least one outgoing and one incoming path. As the graph corresponding to the path decomposition is a low degree graph this is the same as finding sinkless and sourceless orientations in low-degree graphs; in this section we show how to compute these. Thereby the most challenging case is to make sure that also nodes of degree three will have one outgoing and one incoming edge. 

The main results of this section are \Cref{lemma:sinkless-sourceless} and the immediate \Cref{corollary:sinkless-sourceless}.
To prove the lemma we will first concentrate on high-girth graphs; then, in \Cref{sec:shortCycles}, we show how to handle short cycles and complete the proof of the following lemma.
\begin{lemma}[Sinkless and Sourceless Orientation]\label[lemma]{lemma:sinkless-sourceless}
The following problem can be solved in time $O(\log n)$ with deterministic algorithms and $O(\log \log n)$ with randomized algorithms:
given a $3$-regular multigraph, find a sinkless and sourceless orientation.
\end{lemma}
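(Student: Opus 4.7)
The plan is to adapt the sinkless orientation algorithm of \Cref{lemma:weakmulti} so that it rules out both sinks and sources at once. Since every vertex of a $3$-regular multigraph has exactly three incident edges, the target condition $\mathrm{outdeg}(v) \in \{1,2\}$ forbids only two of the eight local orientation patterns (all-out and all-in), which leaves essentially the same local slack as ordinary sinkless orientation. I therefore expect both the $O(\log n)$ deterministic and the $O(\log\log n)$ randomized complexities to transfer.

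First I would restrict attention to the case where $G$ has sufficiently large girth (at least some constant $g_0$). Starting from an arbitrary orientation, I would define a potential $\Phi$ counting the number of bad vertices (sinks or sources) and run a local flip procedure in which every bad vertex initiates a flip-request along one of its incident edges. In the tree-like neighborhoods that high girth provides, the Ghaffari--Su-style analysis of sinkless orientation should carry over: flip-requests propagate along vertex-disjoint chains of depth $O(\log n)$ in the BFS tree rooted at each bad vertex, driving $\Phi$ to zero in $O(\log n)$ deterministic rounds, and in $O(\log\log n)$ rounds with randomization via the standard exponential speed-up on tree-like neighborhoods.

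The main obstacle is the two-sided nature of the constraint: reversing an edge to eliminate a sink at $v$ may create a source at its neighbor $u$, and symmetrically. I expect to handle this via augmenting chains: starting at a sink, repeatedly follow an outgoing flip-request into the next vertex, each step exchanging one incoming edge for one outgoing one, until the chain terminates at a source; then flip the whole chain simultaneously, killing one sink and one source while leaving the outdegree of every intermediate vertex unchanged. On a tree-like local neighborhood such chains have depth $O(\log n)$, matching the desired running time, and this is the place where all of the technical work will concentrate.

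Finally, short cycles are handled separately in Section 3.1. Any cycle of length at most $g_0$ can be discovered in $O(1)$ rounds; such small cycles can then be pre-oriented by hand or absorbed by a constant-size gadget on which a sinkless and sourceless orientation is straightforward, after which the high-girth algorithm above is applied to the remaining multigraph to complete the proof.
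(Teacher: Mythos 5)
Your proposal shares the paper's high-level split into a high-girth case and a short-cycle case, and your treatment of short cycles is in the right spirit (the paper also pre-orients all cycles of length at most $g$ in $O(1)$ rounds). But the argument you sketch for the high-girth case is a genuinely different route, and it contains a real gap. The paper does \emph{not} run any potential-function or augmenting-chain flipping procedure in the high-girth case. Instead, it reduces the two-sided constraint to the known one-sided primitive: it alternately splits nodes (so all degrees are $1$ or exactly $d$) and contracts along a maximal matching of internal edges (raising internal degree from $3$ to $4$ and then from $4$ to $6$), applies \Cref{corollary:deg6-sinkless-sourceless} so that each degree-$6$ node gets one outgoing and one incoming edge, and then carefully reverts the contractions while preserving this property. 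The reverting step works because if a contracted node is happy, then after uncontracting the edge $e=\{u,v\}$ at most one of $u,v$ can be unhappy, and $e$ can always be oriented to fix it since its partner supplies the missing direction.

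This contraction trick is the key new idea of \Cref{sec:basecase}, and your proposal does not find it. You instead propose to redo the technically hard part of the sinkless-orientation algorithm from scratch with the two-sided constraint, and you flag this yourself as the place where all the work concentrates. That flagged step is exactly the gap: the deterministic $O(\log n)$ bound of \Cref{lemma:weak1} does not come from a simple potential/flip analysis (that route only readily gives the randomized bound), and it is not obvious that a derandomized augmenting-chain scheme that must simultaneously avoid creating both new sinks and new sources can be organized so that chains can be found and flipped in parallel on tree-like neighborhoods. The paper sidesteps this difficulty entirely by invoking the existing one-sided algorithm as a black box and wrapping a purely local, combinatorial split/contract/uncontract reduction around it.
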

With a simple reduction (similar to \Cref{lemma:weakmulti}), we can generalise these results to non-regular graphs as well:
\begin{corollary}[Sinkless and Sourceless Orientation]\label[corollary]{corollary:sinkless-sourceless}
The following problem can be solved in time $O(\log n)$ with deterministic algorithms and $O(\log \log n)$ with randomized algorithms:
given any multigraph, find an orientation such that all nodes of degree at least $3$ have outdegree and indegree at least $1$.
\end{corollary}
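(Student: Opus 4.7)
The plan is to reduce the problem to the $3$-regular case already handled by \Cref{lemma:sinkless-sourceless}, via a construction in the spirit of the one in the proof of \Cref{lemma:weakmulti}. First I would handle self-loops and multi-edges locally, which immediately satisfies every incident node. Second, I would reduce the remaining simple subgraph to a $3$-regular multigraph by splitting the high-degree nodes that still need to be served and padding the low-degree stubs with $3$-regular gadgets. Finally, I would apply \Cref{lemma:sinkless-sourceless} to the resulting $3$-regular multigraph and read off the orientation on the original edges.

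For the first step, orient every self-loop at a node $v$ arbitrarily; since a self-loop contributes one to both the indegree and the outdegree of $v$, this already satisfies the requirement at $v$. For each pair of nodes $u,v$ joined by at least two parallel edges, orient two of them in opposite directions (one from $u$ to $v$, one from $v$ to $u$) and orient any further parallel edges between them arbitrarily; this immediately gives both $u$ and $v$ an incoming and an outgoing edge. After this, every node that was incident to a self-loop or to at least one parallel edge pair is already served.

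Let $H$ be the simple graph of still-unoriented edges and let $W$ be the set of nodes of degree at least $3$ in the original multigraph that were not incident to a self-loop or to a parallel edge pair; these are exactly the nodes that still need attention, and each $v \in W$ satisfies $d_H(v)=d(v) \geq 3$. Following the partitioning idea of \Cref{lemma:weakDelta3}, I would split each $v \in W$ into $\lceil d_H(v)/3 \rceil$ virtual copies so that $\lfloor d_H(v)/3 \rfloor$ of them receive exactly three original edges, and at most one ``remainder'' copy receives the remaining $d_H(v) \bmod 3 \in \{1,2\}$ edges. To every remainder copy, as well as to every other node of degree $1$ or $2$ in $H$, I attach $3-d$ copies of the $5$-node $3$-regular gadget from the proof of \Cref{lemma:weakmulti}, producing an auxiliary $3$-regular multigraph $H'$ that can be simulated on the original graph with $O(1)$ overhead.

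Running the algorithm from \Cref{lemma:sinkless-sourceless} on $H'$ yields a sinkless and sourceless orientation; discarding the gadget edges gives an orientation of $H$. The key observation is that for every $v \in W$, since $d_H(v) \geq 3$ the partition produces at least one ``full'' virtual copy of $v$ whose three incident edges all come from $H$, and by the sinkless/sourceless guarantee that copy has at least one incoming and one outgoing edge among those three. Hence $v$ itself inherits both indegree and outdegree at least one, and combined with the orientations from the first step every node of degree at least $3$ in the input multigraph is satisfied. The running time is dominated by the single invocation of \Cref{lemma:sinkless-sourceless}, giving $O(\log n)$ deterministic and $O(\log\log n)$ randomized rounds. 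The only subtle point to check is the existence of a ``full'' three-edge copy for each $v \in W$, which follows immediately from $\lfloor d_H(v)/3 \rfloor \geq 1$; this is exactly where the assumption ``degree at least $3$'' in the corollary is used.
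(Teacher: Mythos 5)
Your proposal follows the same basic strategy as the paper's proof: reduce to the $3$-regular case by splitting high-degree nodes into degree-$3$ pieces plus a remainder, pad low-degree stubs with $3$-regular gadgets, run the algorithm of \Cref{lemma:sinkless-sourceless}, and observe that every original node of degree $\ge 3$ owns at least one ``full'' degree-$3$ piece whose sinkless-and-sourceless guarantee transfers back. So the overall route matches.

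Two points to flag. First, a small gap: you build $H'$ by splitting only nodes in $W$ and padding only nodes of $H$-degree $1$ or $2$, but a node $v \notin W$ (say, one that carried a self-loop but also several simple edges) can still have $d_H(v) > 3$, and such a node is neither split nor padded. Then $H'$ is not $3$-regular and \Cref{lemma:sinkless-sourceless} does not directly apply. The fix is trivial — split every node of $H$-degree $\ge 3$ into degree-$3$ copies plus a remainder, not just the $W$-nodes — but as written the claim ``producing an auxiliary $3$-regular multigraph $H'$'' is not justified. Second, the preprocessing pass that orients self-loops and antiparallel pairs of multi-edges is unnecessary: \Cref{lemma:sinkless-sourceless} is already stated for $3$-regular \emph{multigraphs}, so the paper skips that step entirely and simply splits every node (of any degree, multi-edges and self-loops included) into a degree-$3$ node plus degree-$1$ stubs, pads, and invokes the lemma. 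Your extra step is sound and self-contained, but it buys you nothing here; it is the analogue of the preprocessing in \Cref{lemma:weakmulti}, which was needed there only because \Cref{lemma:weak1} is stated for \emph{simple} graphs.
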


\begin{proof}
Let $G$ be any multigraph. First, we split any node of degree $k+3$ into $k$ nodes of degree $1$ and one node of degree $3$. We also split each node of degree $k < 3$ into $k$ nodes of degree $1$. Now we are left with a graph $G'$ in which each node has degree $1$ (these are \emph{leaf nodes}) or $3$ (these are \emph{internal nodes}). Finally, we augment each leaf node with a gadget in order to obtain a $3$-regular graph $G''$ (\Cref{fig:basecase-regularity}).

\begin{figure}
\centering
\includegraphics[scale=0.5]{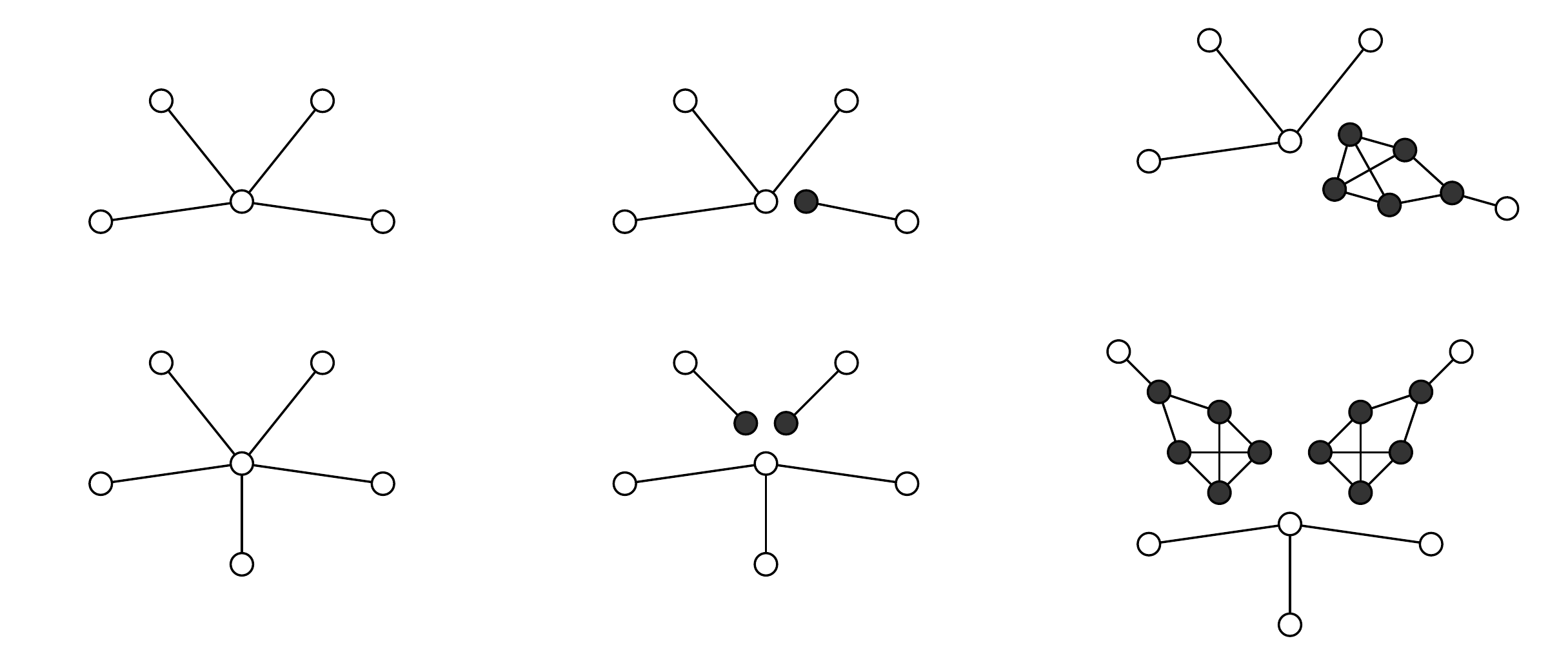}
\caption{Given a general graph $G$, we first split the nodes to obtain a graph $G'$ with degrees $1$ and $3$, and then add gadgets to obtain a $3$-regular graph $G''$.}\label[figure]{fig:basecase-regularity}
\end{figure}

We then invoke \Cref{lemma:sinkless-sourceless} to find a sinkless and sourceless orientation in $G''$. We delete the gadgets to get back to graph $G'$; now each internal node has outdegree at least $1$. Finally, we revert the splitting to get back to graph $G$; now for each node of degree at least $3$, there is one internal node that contributes at least one outgoing edge. Furthermore, computation in $G''$ can be simulated in $G$ with only constant-factor overhead.
\end{proof}

In the proof of \Cref{lemma:sinkless-sourceless}, we will use the following observations:
\begin{itemize}
    \item It is easier to find a sinkless and sourceless orientation if we only care about nodes of degree at least $6$.
    \item In high-girth graphs, we can make the degrees larger with the help of contractions, and then it is sufficient to find orientations that make nodes of degree at least $6$ happy. (These contractions are different from the contractions in \Cref{sec:shortPathsDecompositions} and are like the contractions that are known from building minors)
    \item In low-girth graphs, we can exploit short cycles to find orientations, eliminate them, and then it is sufficient to find orientations in high-girth graphs.
\end{itemize}

\subsection{Degree 6: Sinkless and Sourceless Orientation}

Let us now start with simple observations related to the case of nodes of degree at least $6$. For brevity, let us write $\Tso$ for the time complexity of sinkless orientations in the model that we study: $\Tso = O(\log n)$ for deterministic algorithms and $\Tso = O(\log \log n)$ for randomized algorithms. We start with the following simple lemma (cf.\ \Cref{lemma:weakDelta3}).

\begin{lemma}[Degree 6, Outdegree 2]\label[lemma]{lemma:deg6-outdeg2}
The following problem can be solved in time $O(\Tso)$:
given a graph, find an orientation such that all nodes of degree at least $6$ have outdegree at least $2$.
\end{lemma}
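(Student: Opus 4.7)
The plan is to reduce this problem to a sinkless orientation computation via the node-splitting trick that is already used in the proof of \Cref{lemma:weakDelta3}. The point of splitting each ``interesting'' node into several copies of degree exactly three is that every copy then receives at least one outgoing edge from a sinkless orientation, and the edge orientations glue together consistently when the copies are merged back.

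Concretely, I would proceed as follows. For every node $v$ with $d(v) \ge 6$, partition its incident edges into groups so that $v$ is split into $k(v) \ge 2$ virtual copies, each incident to at least three edges. This is always possible when $d(v) \ge 6$: write $d(v) = 3q + r$ with $q \ge 2$ and $r \in \{0,1,2\}$, and use $q-1$ copies of degree $3$ together with one copy of degree $3 + r$. Nodes with $d(v) < 6$ are left untouched. Self-loops are allowed to become edges between two copies of the same original node, which is fine since the construction operates on multigraphs. Let $H$ denote the resulting multigraph and let $W$ be the set of virtual copies of nodes of degree at least $6$; every node in $W$ has degree at least three in $H$.

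Now invoke \Cref{lemma:weakmulti} on $H$ with this set $W$ to obtain, in $O(\Tso)$ rounds, an orientation of $H$ in which every node of $W$ has outdegree at least one. Undoing the split gives an orientation of the original graph. Each node $v$ with $d(v) \ge 6$ receives one outgoing edge from each of its $k(v) \ge 2$ virtual copies, so its outdegree is at least two as required. The whole construction is local: each node can decide its own split using its incident edge identifiers, the computation on $H$ is simulated with constant overhead in the original graph, and undoing the split is free. Hence the total running time is $O(\Tso)$.

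There is no real obstacle here; the only minor bookkeeping point is to check that the split is always feasible for $d(v) \ge 6$ (which is what forces the threshold of six in the statement) and that \Cref{lemma:weakmulti} applies to the resulting multigraph even in the presence of self-loops and multi-edges created by the split, both of which are handled directly by that corollary.
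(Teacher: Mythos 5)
Your proof is correct and follows essentially the same route as the paper: split each high-degree node into pieces of degree at least three, invoke \Cref{lemma:weakmulti} to give every piece an outgoing edge, then merge. The only cosmetic difference is that the paper splits a degree-$\geq 6$ node into exactly two parts (which is all that is needed to certify outdegree $\geq 2$), whereas you split into $\lfloor d(v)/3\rfloor$ parts as in \Cref{lemma:weakDelta3}; both are fine here.
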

\begin{proof}
Split all nodes of degree at least $6$ into two nodes of degree at least $3$. Apply \Cref{lemma:weakmulti} to find an orientation in which all nodes of degree at least $3$ have outdegree at least $1$. Merge the nodes back.
\end{proof}
A useful interpretation of the above lemma is that each node with degree at least 6 \emph{owns} at least two of its incident edges, i.e., the outgoing edges. Now each node can freely re-orient the edges that it owns whichever way it wants. In particular, each node of degree at least $6$ can make sure that there is at least one outgoing edge and at least one incoming edge:

\begin{corollary}[Degree 6, Sinkless and Sourceless]\label[corollary]{corollary:deg6-sinkless-sourceless}
The following problem can be solved in time $O(\Tso)$:
given a graph, find an orientation such that all nodes of degree at least $6$ have outdegree and indegree at least $1$.
\end{corollary}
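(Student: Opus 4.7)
The plan is to bootstrap directly from \Cref{lemma:deg6-outdeg2}. First I would invoke that lemma to compute, in $O(\Tso)$ rounds, an orientation $\pi$ of the input graph in which every node of degree at least $6$ has outdegree at least $2$. At this point the outdegree requirement of the corollary is already satisfied; only the indegree requirement still needs to be enforced.

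Next, in one additional round of communication, every node $v$ with $d(v) \geq 6$ selects exactly one of its outgoing edges in $\pi$ (breaking ties by identifiers, say) and reverses its orientation; call the resulting orientation $\pi'$. The key structural observation, already foreshadowed in the paragraph preceding the corollary, is that each directed edge in $\pi$ is outgoing from exactly one of its endpoints, so each edge is a candidate for reversal by at most one node. Consequently, the reversal operations of different nodes cannot conflict on the same edge, and $\pi'$ is well-defined after one round in which each node announces its choice to the affected neighbor.

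To verify correctness, fix any node $v$ with $d(v) \geq 6$ and let $o \geq 2$ denote its outdegree under $\pi$. After $v$ reverses one of its outgoing edges, at least $o - 1 \geq 1$ of its edges remain outgoing in $\pi'$, and none of $v$'s neighbors is able to undo this because those edges are not owned by any neighbor (they are outgoing from $v$ in $\pi$). Hence $v$ has outdegree at least $1$ in $\pi'$. Simultaneously, the edge that $v$ just reversed is now incoming to $v$, giving $v$ indegree at least $1$. The total running time is $O(\Tso) + O(1) = O(\Tso)$, as required.

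I do not anticipate a real obstacle: the only delicate point is the non-conflict of simultaneous reversals, and this is handled cleanly by the ownership observation. No additional global analysis is needed, and the argument remains valid in the multigraph setting (self-loops and parallel edges do not affect the ownership argument, since each directed instance of an edge or self-loop is still outgoing from a single designated endpoint in $\pi$).
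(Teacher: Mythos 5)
Your proposal is correct and matches the paper's approach: the paper likewise reads off the corollary from \Cref{lemma:deg6-outdeg2} by observing that each node of degree at least $6$ ``owns'' its two or more outgoing edges and can therefore re-orient one of them inward, keeping at least one pointing outward. The only minor point worth noting is the self-loop case in a multigraph, where a reversal is a no-op; but since the paper treats a self-loop as contributing one to both indegree and outdegree, such a node is already happy, so the conclusion still holds.
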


\subsection{High-Girth: Sinkless and Sourceless Orientation}

Now we amplify the result of \Cref{corollary:deg6-sinkless-sourceless} so that we can find sinkless and sourceless orientations also in low-degree graphs---at least if we have a high-girth graph. We will now prove the following result in this section:

\begin{lemma}[High Girth, Sinkless and Sourceless]\label[lemma]{lemma:high-girth-sinkless-sourceless}
There is a constant $g$ such that the following problem can be solved in time $O(\Tso)$:
given a graph of girth at least $g$, find an orientation such that all nodes of degree at least $3$ have outdegree and indegree at least $1$.
\end{lemma}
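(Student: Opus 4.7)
The plan is to prove \Cref{lemma:high-girth-sinkless-sourceless} by reducing to \Cref{corollary:deg6-sinkless-sourceless} via minor-style edge contractions. The high-level idea is to carve the graph into disjoint constant-radius clusters, each of which, when contracted to a super-node, has degree at least $6$; then \Cref{corollary:deg6-sinkless-sourceless} orients the boundary edges so that each super-node is sinkless and sourceless, and we propagate this orientation inside each cluster. Throughout, nodes of degree at most $2$ impose no constraint, so I would orient their incident edges arbitrarily in a final cleanup pass.

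The core step is the clustering. Choose $g$ large enough (say $g\ge 5$) so that radius-$1$ balls are induced stars and radius-$2$ balls are induced trees. I would compute a maximal set $S$ of degree-$\ge 3$ nodes that are pairwise at distance at least $3$ (a ruling-set / MIS computation on an appropriate power graph), and form one cluster $C_v$ per $v\in S$, consisting of $v$ together with a carefully chosen collection of neighbors. By girth, each $C_v$ is a tree, and the number of edges leaving $C_v$ is
\[\textstyle\sum_{u\in N(v)\cap C_v}\bigl(d(u)-1\bigr)\ +\ \bigl(d(v)-|N(v)\cap C_v|\bigr),\]
which is $\ge 6$ whenever $v$ has three neighbors of degree $\ge 3$ to include. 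When some neighbor of $v$ has degree $\le 2$, I would walk down the corresponding degree-$\le 2$ chain, appending it to $C_v$ until either the chain terminates or sufficiently many boundary edges have been collected; the high-girth assumption keeps the cluster's radius constant. Nodes not covered by any such cluster are necessarily of degree $\le 2$ and can be ignored for the orientation requirement.

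Once the clusters are built, I would contract each one to a super-node, obtaining a multigraph $H$ in which every super-node has degree $\ge 6$, and apply \Cref{corollary:deg6-sinkless-sourceless} to $H$. This orients every boundary edge so that each super-node has at least one in-edge and at least one out-edge. Expanding back, each cluster $C_v$ is a constant-size tree with at least one incoming and one outgoing boundary edge; I would then root the tree at $v$ and orient internal edges top-down so that every internal degree-$\ge 3$ node inherits at least one in-edge and one out-edge from a parent-child pair, which is straightforward since each such node has at least $3$ tree-edges incident to it and the boundary gives the required ``in'' and ``out'' tokens to distribute. This expansion takes $O(1)$ rounds.

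The main obstacle, in my view, is the clustering step: building a partition that is simultaneously computable in $O(\Tso)$ rounds, has constant radius, covers every degree-$\ge 3$ node, and has boundary at least $6$ per cluster. The subtlety is that the boundary bound relies on having three neighbors of degree $\ge 3$, which may fail on sparse fringes of the graph; handling this via extension along degree-$\le 2$ chains and via careful tie-breaking between adjacent potential centers are the main places where the high-girth hypothesis does real work. Everything else---the degree-$6$ orientation, the super-node contraction, and the local expansion---is either already established or a one-round local computation.
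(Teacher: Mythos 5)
Your high-level idea---contract to a graph of minimum degree $6$, apply \Cref{corollary:deg6-sinkless-sourceless}, and expand back---is the same as the paper's, but the specific mechanism you propose has two genuine gaps that the paper's construction is carefully designed to avoid.

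\textbf{Coverage and well-definedness of the clustering.} A distance-$3$ ruling set $S$ of degree-$\ge 3$ nodes does not cover every degree-$\ge 3$ node: nodes of degree $\ge 3$ at distance $1$ or $2$ from some center $v\in S$ are excluded from $S$ by the ruling-set constraint, yet your cluster $C_v$ is only guaranteed to contain $v$ and a chosen subset of $N(v)$. You assert that ``nodes not covered by any such cluster are necessarily of degree $\le 2$,'' but that does not follow; it would require that every degree-$\ge 3$ node is adopted into some cluster, and for nodes at distance $2$ from two different centers this adoption is neither defined nor obviously conflict-free. Extending along degree-$\le 2$ chains raises the same issue: two adjacent potential centers may compete for the same chain. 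The paper sidesteps all of this by avoiding a cluster decomposition entirely: it first normalizes degrees to exactly $3$ by splitting, then computes a maximal matching $M$ on internal edges, lets each unmatched internal node attach to a matched neighbor to form a constant-diameter star forest $Y$, decomposes $Y$ into constant-many matchings $Y_0,\dots,Y_k$, and contracts them one matching at a time. Because every internal node participates in at least one contraction that merges two internal nodes, degree exactly $d$ becomes degree $\ge 2d-2$, and iterating $3\to 4\to 6$ suffices. No coverage argument is needed since every internal node is automatically covered by the matching-or-attached structure.

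\textbf{Expanding back.} Your expansion step is not just a routine local computation: \Cref{corollary:deg6-sinkless-sourceless} only guarantees that the \emph{super-node} has one incoming and one outgoing boundary edge, not that the boundary edges incident to each original node inside the cluster are well balanced. For a star cluster $\{v,u_1,u_2,u_3\}$ with all six boundary edges incident to the $u_i$, it can happen that all boundary edges of $u_1$ and $u_2$ are incoming; then $u_1$ and $u_2$ both need their tree edge to $v$ to be outgoing, which you can arrange, but then $v$ needs its third tree edge to be outgoing, and this kind of bookkeeping must be proved to always succeed. It does not follow from ``root the tree at $v$ and orient top-down.'' The paper avoids this by un-contracting \emph{one edge at a time}: when a single contracted edge $e=\{u,v\}$ is restored, all other edges incident to $u$ and $v$ are already oriented, $x$ had at least one in and one out, so at most one of $u,v$ can be stuck (all-in or all-out), and the one free degree of freedom in $e$ is enough to fix it while keeping the other endpoint happy. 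This one-edge-at-a-time argument is clean precisely because it never has to distribute a single ``in token'' and a single ``out token'' among several internal nodes at once.

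In short: the route you sketch could perhaps be made to work with considerably more care (a covering clustering with per-cluster boundary structure strong enough to support the expansion), but as written the two central steps you acknowledge as ``obstacles'' really are unresolved. The paper's iterated split-and-matching-contraction plus one-edge-at-a-time reversal is the mechanism that makes both issues disappear, and it is the substantive content of this proof.
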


\paragraph{Proof Overview.}
Our overall plan is as follows. Given any graph $G$ of girth at least $g$, we perform a sequence of modifications (both types of modifications are explained in detail below this proof overview) that change the degree distribution:
\begin{itemize}[noitemsep]
    \item Splitting for $d=3$: all nodes have degree 1 or exactly 3.
    \item Contraction from $d=3$ to $d'=4$: all nodes have degree 1 or at least 4.
    \item Splitting for $d=4$: all nodes have degree 1 or exactly 4.
    \item Contraction from $d=4$ to $d'=6$: all nodes have degree 1 or at least 6.
    \item Splitting for $d=6$: all nodes have degree 1 or exactly 6.
\end{itemize}
Then we apply \Cref{corollary:deg6-sinkless-sourceless} to find an orientation such that degree-$6$ nodes have outdegree and indegree at least $1$. Finally, we revert all splitting and contraction steps to recover an orientation of the original graph with the desired properties.

We will assume that $g$ is sufficiently large so that each contraction is applied to a tree-like neighborhood (in particular, contractions do neither lead to multiple parallel edges nor to self-loops). The splitting step does not create any short cycles.

\paragraph{Splitting Step.}
Given any graph and any value $d > 1$, we can apply the splitting idea from \Cref{corollary:sinkless-sourceless} to obtain a graph in which we have \emph{leaf nodes} of degree $1$ and \emph{internal nodes} of degree $d$.

The edges that join a pair of internal nodes are called \emph{internal edges}; all other edges are \emph{leaf edges}. If at any point we obtain a connected component that does not contain any internal edges, such a component is a star and we can find a valid orientation trivially in constant time. Hence let us focus on the components that contain some internal edges.

\paragraph{Contraction Step.}
Let $d' = 2d-2$. We assume that we have a graph in which all nodes are either leaf nodes or internal nodes of degree $d$, and we will show how to modify the graph so that the internal nodes have degree at least $d'$.

First, find a maximal matching $M$ of the internal edges (this is possible in time $O(\log^* n) = o(\Tso)$ with a deterministic algorithm, as we have a constant maximum degree). Then each internal node $u$ that is not matched picks arbitrarily one of its matched neighbors $v$, and adds the edge to $v$ to a set $X$. Now, $Y = M \cup X$ is a collection of internal edges that covers all internal nodes. Furthermore, each connected component in the graph induced by $Y$ has a constant diameter; it consists of an edge $e \in M$ and possibly some edges adjacent to $e$.

Now each edge $e \in M$ labels the edges of $X$ adjacent to $e$ arbitrarily with distinct labels $1, 2, \dotsc$. This way we obtain a partitioning of $Y$ in subsets $Y_0, Y_1, \dotsc, Y_k$ for some $k = O(1)$, where $Y_0 = M$ and $Y_i$, $i > 0$, consists of the edges of $X$ with label $i$.

The key observation is that each $Y_i$ is a matching. Now we do a sequence of $k+1$ edge contractions: we contract first all edges of $Y_k$, then all edges of $Y_{k-1}$, etc. For each edge that we contract, we delete the edge and identify its endpoints.

Note that all internal nodes take part in at least one edge contraction that merges a pair of internal nodes. Hence all internal nodes will have degree at least $d' = 2d-2$ after contractions. Furthermore, just before we contract the edges of $Y_i$ the edges of $Y_i$ still form a matching despite the contractions for $Y_k,\ldots,Y_{i+1}$ that we have already performed (for this property to hold it is crucial that we begin  contracting edges in $Y_k$ and not the edges in $Y_0$). Thus  we only shorten distances by a constant factor; the new graph $G'$ that we obtain can be still simulated efficiently with a distributed algorithm that runs in the original graph $G$.

\paragraph{Orientation.}
After a sequence of split and contract operations, we have a graph $H$ in which each node has degree $1$ or at least $6$. Then we apply \Cref{corollary:deg6-sinkless-sourceless} on $H$ and obtain an orientation of $H$ in which every node with degree at least $6$ has outdegree and indegree at least $1$.

\paragraph{Reverting Splits \& Contractions.}
Now we need to revert the splittings and contraction operations to turn the orientation of $H$ into an orientation of $G$. Reverting a split is trivial, but reverting a contraction needs more care to make sure that we maintain the property that all internal nodes have at least one outgoing and one incoming edge.

Consider an edge $e = \{u,v\}$ that was contracted to a single node $x$. Node $x$ is incident to at least one outgoing edge and at least one incoming edge. Revert the contraction of edge $e$ (preserving the orientations, but leaving the new edge $e$ unoriented; note that all other edges incident to $u$ or $v$ are oriented). Now if both $u$ and $v$ are happy we can orient $e$ arbitrarily. Otherwise at least one of them is unhappy; assume that $u$ is unhappy. We have the following cases:
\begin{itemize}
    \item Node $u$ is incident to only outgoing edges. Then node $v$ is incident to at least one incoming edge. Orient $e$ from $v$ to $u$. Now both $u$ and $v$ have both incoming and outgoing edges, and hence both of them are happy.
    \item Node $u$ is incident to only incoming edges. Orient $e$ from $u$ to $v$; again, both of them will be happy.
\end{itemize}
Hence we only need to invoke \Cref{corollary:deg6-sinkless-sourceless} once, in a virtual graph that can be simulated efficiently in the original network, and then do a constant number of additional operations. This completes the proof of \Cref{lemma:high-girth-sinkless-sourceless}.

\subsection{Short Cycles: Sinkless and Sourceless Orientation}\label[section]{sec:shortCycles}

The only concern that remains to prove \Cref{lemma:sinkless-sourceless} is the existence of short cycles (a special case of which is a $2$-cycle formed by a pair of parallel edges in a multigraph, and a $1$-cycle formed by a self-loop). As we will see, the existence of short cycles actually makes the problem easier to solve; only nodes that are not part of any short cycle need nontrivial computational effort.

\paragraph{Identification of Short Cycles.}
Let $g = O(1)$ be the constant from \Cref{lemma:high-girth-sinkless-sourceless}. Given a $3$-regular multigraph $G$, we first identify all cycles of length at most $g$. This is possible in time $O(1)$. Then for each cycle $C$, we assign a unique numerical identifier $i(C)$. Each cycle can for example be uniquely labelled by the sequence of node identifiers that result when starting at the highest ID node of the cycle and traversing the cycle in one of the two directions. We also pick arbitrarily an orientation $d(C)$ for the cycle.

Now let $S \subseteq E$ be the set of the edges that are involved in at least one cycle of length at most $g$, and let $X \subseteq V$ be the set of nodes involved in at least one cycle of length at most $g$. We will first orient the edges of $S$ so that all nodes in $X$ become happy, i.e., they have at least one outgoing edge and at least one incoming edge in $S$. To achieve this, we will first design a simple centralized, sequential algorithm $A$ that solves this, and then observe that we can develop an efficient distributed algorithm $A'$ that calculates in constant time the same result as what $A$ would output.

\paragraph{Centralized Algorithm.}
Algorithm $A$ proceeds as follows. We take the list of all short cycles, order them by the unique identifiers $i(C)$, and process the cycles in this order. Whenever we process some cycle $C$, we orient all edges of $C \subseteq S$ in a consistent manner, using orientation $d(C)$. While doing this, we may re-orient some edges that we had already previously oriented. Nevertheless, we make progress:
\begin{itemize}
    \item After processing cycle $C$, all nodes along $C$ are happy (regardless of whether they were already happy previously).
    \item All nodes not along $C$ that were happy before this step are also happy after this step (we did not touch any of their incident edges).
\end{itemize}
Hence after going through the list of all cycles, all edges of $S$ are oriented and all nodes of $X$ are happy.

\paragraph{Distributed Algorithm.}
The centralized algorithm is clearly inefficient for our purposes, but for each edge $e \in S$, we can directly compute what is its final orientation in the output of algorithm $A$: simply consider all cycles $C$ with $e \in C$, pick the cycle $C^*_e$ that has the largest identifier among all cycles that pass through $e$, and orient $e$ according to $d(C^*_e)$. This is easy to implement in constant time, as all cycles of interest are of constant length.

\paragraph{Remaining Nodes.}
Now all nodes of $X$ are happy. We delete all edges of $S$ and also delete all isolated nodes; this way we obtain a graph $G'$ in which all nodes have degree $1$ or $3$ and all edges are unoriented. Then we can apply \Cref{lemma:high-girth-sinkless-sourceless} to make nodes of degree $3$ happy. Finally, we put back the edges of $S$ to make all other nodes happy. This completes the proof of \Cref{lemma:sinkless-sourceless}.


\section{Degree 5: Outdegree Two}\label[section]{sec:outdegtwo}

One final piece is still missing: in \Cref{sec:shortPathsDecompositions} we used the following result but postponed its proof:

\lemmaFirstoutdegtwo*

We will simplify the problem slightly by first focusing on regular graphs. In this section we will prove the following statement:

\begin{lemma}\label[lemma]{lemma:outdeg-2}
The following problem can be solved in time $O(\log n)$ with deterministic algorithms and $O(\log \log n)$ with randomized algorithms:
given a $5$-regular multigraph, find an orientation such that all nodes have outdegree at least $2$.
\end{lemma}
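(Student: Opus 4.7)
The plan is to follow the high-level strategy of \Cref{sec:basecase}: first reduce the general case to the high-girth case by orienting short cycles explicitly and separately, and then handle the high-girth case via a reduction to sinkless-and-sourceless orientation (\Cref{lemma:sinkless-sourceless}) in an auxiliary $3$-regular multigraph.

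For the high-girth reduction, I would split each node $v$ of the $5$-regular graph $G$ into two virtual nodes $v_a$ and $v_b$, where $v_a$ is incident to $3$ of $v$'s original edges and $v_b$ to the other $2$. To turn the result into a $3$-regular multigraph so that \Cref{lemma:sinkless-sourceless} applies, I would add virtual edges raising each $v_b$ to degree $3$---for example by computing an arbitrary pairing of the nodes in $\bigO(\log n)$ rounds (via a maximal matching on a bounded-degree auxiliary graph) and adding one virtual edge between the $v_b$'s of each pair. Applying \Cref{lemma:sinkless-sourceless} then yields a sinkless-and-sourceless orientation in which every virtual node has both indegree and outdegree at least $1$. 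Restricting this orientation to the original edges, each $v_a$ already contributes an outgoing edge at $v$, because $v_a$ has only original edges and its outdegree is at least $1$.

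The main obstacle, and the delicate part of the plan, is forcing $v_b$ to also contribute an outgoing original edge at $v$. Naively, $v_b$'s outdegree constraint could be satisfied entirely by its virtual edge, which would leave both of $v_b$'s two original edges incoming and give $v$ outdegree only $1$. To overcome this I plan one of two options: either (i) replace the single virtual edge by a small $3$-regular sub-gadget attached at $v_b$ whose internal sinkless-and-sourceless constraints force the virtual edges incident to $v_b$ to be oriented \emph{into} $v_b$, so that outdegree $\geq 1$ at $v_b$ must be realized on one of the two original edges; or (ii) use a pairwise analysis followed by a second, complementary pairing---in a pair $(v,u)$ linked by a virtual edge, the endpoint on the incoming side of that virtual edge is guaranteed outdegree $2$ in original edges, so running the construction a second time with a disjoint pairing guarantees every node is on the ``good'' side at least once and one combines the two outcomes edge by edge. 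Getting either version to work distributedly with the $\bigO(\log n)$ time bound is the crux.

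Short cycles in the auxiliary graph are handled in the spirit of \Cref{sec:shortCycles}: identify all cycles of length at most a suitable constant $g$, process them in a globally consistent order based on their unique identifiers, orient each locally so that the nodes lying on short cycles are already satisfied, and then invoke the high-girth algorithm on the residual subgraph. The randomized $\bigO(\log\log n)$ bound follows by plugging in the randomized version of \Cref{lemma:sinkless-sourceless} in the same construction.
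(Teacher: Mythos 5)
You have correctly zeroed in on the crux of the difficulty: after an \emph{arbitrary} split of $v$ into $v_a$ (degree $3$) and $v_b$ (degree $2$) plus a virtual edge, a sinkless-and-sourceless orientation of the padded $3$-regular graph may satisfy $v_b$'s outdegree constraint using only the virtual edge, leaving both of $v_b$'s original edges incoming. Neither of your two proposed remedies closes this gap. Option (i), a gadget forcing the virtual edge into $v_b$, cannot work: the sinkless-and-sourceless constraint is invariant under reversing all orientations, so any gadget attached via a single port edge that admits an orientation with the port going one way also admits the opposite one by global reversal; no such gadget can force the port's direction. Option (ii), rerunning with a disjoint pairing, is also not justified: which endpoint of a virtual edge ends up on the ``good'' side is decided by the orientation the algorithm happens to return, so a node can land on the ``bad'' side in both runs, and there is no well-defined way to combine two orientations of the same edge set ``edge by edge'' so that every node keeps outdegree at least $2$.

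The paper sidesteps the problem by refusing to pick the degree-$2$/degree-$3$ split arbitrarily. It first computes a \emph{half-path decomposition} (\Cref{lemma:halfpath}): a selection of exactly two ``red'' half-edges per node such that the red half-edges form paths of bounded constant length. This is built from a maximal matching, removal of matched edges and self-loops, a sinkless orientation of the residual multigraph, and a short sequence of split/contract steps. Only then is each node split into a degree-$2$ red part and a degree-$3$ black part. Because the red half-edges form short paths, the degree-$2$ red parts are themselves covered by short paths of \emph{original} edges; these are contracted to single edges, producing a genuinely $3$-regular multigraph with no virtual padding. Applying \Cref{corollary:sinkless-sourceless} there and undoing the contractions gives every red part indegree and outdegree exactly $1$ on original edges, and every black part indegree and outdegree at least $1$, hence every original node has both at least $2$. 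The essential trick you are missing is to \emph{engineer} the degree-$2$ pieces so that they can be contracted away rather than padded with virtual edges, and the half-path decomposition is exactly the tool that makes this possible.
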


The same reduction as in the proof of \Cref{corollary:sinkless-sourceless} then generalizes this result to non-regular graphs, and \Cref{lemma:firstoutdeg-2} follows directly.

\paragraph{Half-Path Decompositions.}
To prove \Cref{lemma:outdeg-2}, we start by introducing the concept of \emph{half-path decompositions}. In such a decomposition, each edge is divided in two \emph{half-edges} and we require that, for each node, exactly two incident half-edges are labeled with the color \emph{red}; all other half-edges are \emph{black}. We say that we have a \emph{decomposition with half-paths of length $k$} if the red half-edges form paths (never cycles), and each such path consists of at most $k$ half-edges but there is no requirement for black half-edges.

\begin{figure}
\centering
\includegraphics[scale=0.5]{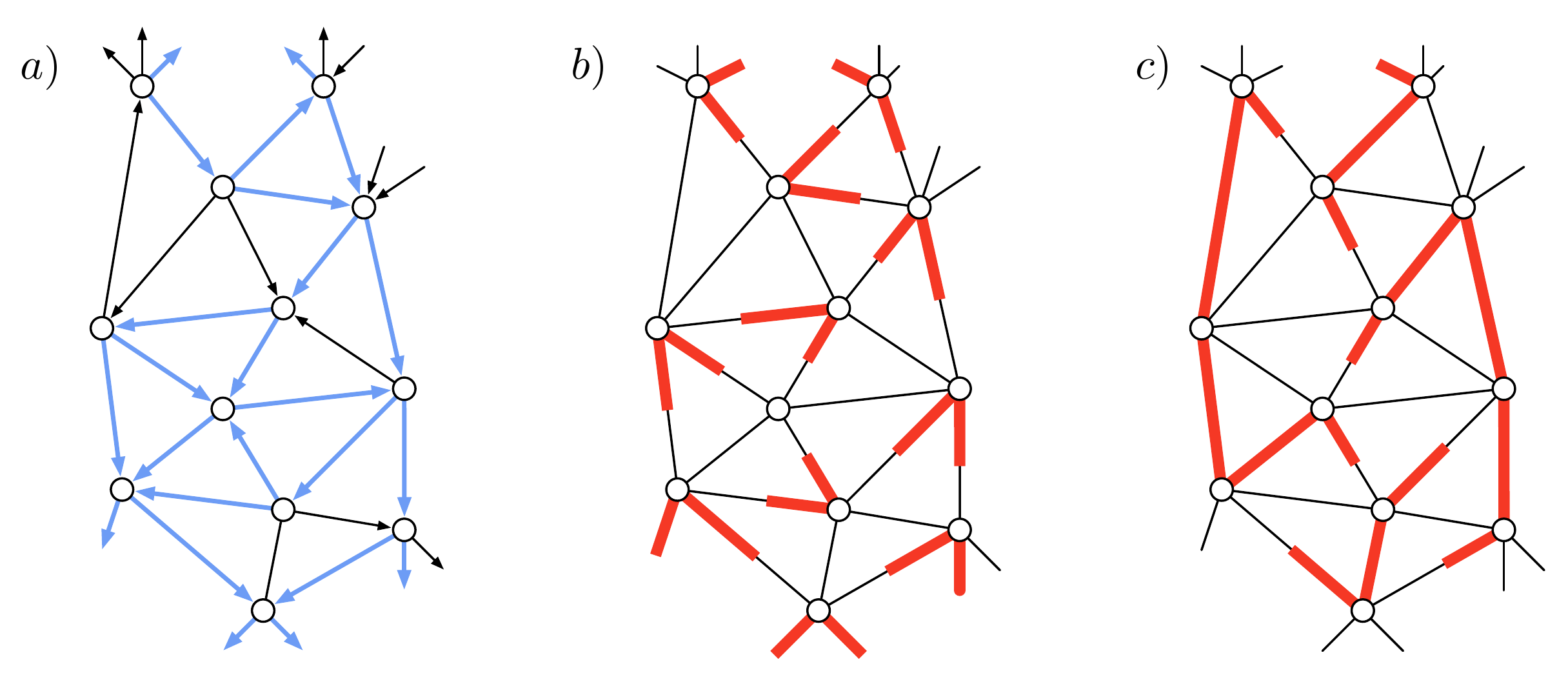}
\caption{Orientations and half-path decompositions in a $5$-regular graph. (a)~A weak $2$-orientation. Each node has selected exactly $2$ incident edges; these are indicated with a blue color. (b)~A decomposition with half-paths of length $2$. (c)~A decomposition with half-paths of length at most $8$.}\label[figure]{fig:basecase-halfpath}
\end{figure}

Half-path decompositions are closely related to weak $2$-orientations; see \Cref{fig:basecase-halfpath}. If we could find a weak $2$-orientation, each node could simply pick two outgoing edges, label their sides of these edges red, and we would have a decomposition with half-paths of length $2$. Conversely, given a decomposition with half-paths of length $2$, we could easily find a weak $2$-orientation: an edge that is half-red is oriented from red half to black half, and all other edges (which are fully black) are oriented arbitrarily.

\paragraph{Proof Idea and Intuition.}
Half-paths of length $k > 2$ can be interpreted as a relaxation of weak $2$-orientations. To find a weak $2$-orientation, we will proceed in two steps:
\begin{itemize}
    \item Find a decomposition with half-paths of length at most $8$.
    \item Use such a decomposition to find a weak $2$-orientation (in the proof of \Cref{lemma:outdeg-2}).
\end{itemize}

To get some intuition on the basic idea for computing half-path decompositions, let us first consider a simplified setting. Assume that we have a simple $5$-regular graph $G$, and assume that we are given a \emph{perfect matching} $M$. Now we could simply remove $M$, and we would be left with a $4$-regular graph $G'$. Then we could apply \Cref{lemma:weak1} to find a sinkless orientation in $G'$. Finally, we could color the half-edges of $G$ as follows:
\begin{itemize}
    \item For each edge $e \in M$, label both of its half-edges red. This contributes one red half-edge per node.
    \item For each node $v$, pick one outgoing edge in the orientation $G'$, and color the end of $v$ red and the other half black. This also contributes one half-edge per node.
\end{itemize}
We would now have a decomposition with half-paths of length $4$. Unfortunately, we cannot find a perfect matching efficiently. However, in the following lemma we will show that it is sufficient to find a \emph{maximal matching} $M$. This may result in some unmatched nodes, but the key insight is that such nodes form an independent set, and we can apply a split-and-contract trick to label those nodes; this will result in half-paths of length at most $8$.

\begin{lemma}[Half-Path Decomposition]\label[lemma]{lemma:halfpath}
The following problem can be solved in time $O(\Tso)$:
given a $5$-regular multigraph, find a decomposition with half-paths of length at most $8$.
\end{lemma}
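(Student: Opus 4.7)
Proof plan. My overall strategy follows the hint outlined in the excerpt. The first step is to compute a maximal matching $M$ of $G$, which takes $O(\log^* n)$ rounds deterministically because $G$ has constant maximum degree $5$ — well within the $O(\Tso)$ budget. I then color both half-edges of every matching edge red; this already gives each matched node one of its two required red half-edges, and the set $U$ of unmatched nodes forms an independent set by maximality of $M$.

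The remaining task is to produce one more red half-edge at every matched node and two red half-edges at every $u \in U$. I plan to do this by orienting $G' := G \setminus M$ so that every matched node has outdegree at least one and every $u \in U$ has outdegree at least two, and then letting each node color red the $v$-half of each of its outgoing edges. In $G'$ the matched nodes have degree $4$ and the unmatched nodes have degree $5$. To compute the orientation I would build an auxiliary multigraph $H$ using a split-and-contract gadget at each $u \in U$: split $u$ into two virtual copies $u^{(1)}, u^{(2)}$ that partition $u$'s five $G'$-edges (for example $3$ and $2$), and contract each copy with a distinct matched neighbor so that every virtual node has degree at least $3$. Because the bipartite graph between $U$ and $V \setminus U$ has constant maximum degree, distinct contraction partners can be assigned conflict-free in $O(\log^* n)$ rounds. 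Applying \Cref{lemma:weakmulti} on $H$ then produces in $O(\Tso)$ rounds an orientation with outdegree at least $1$ at every virtual node, and undoing the contractions and splits translates this to an orientation of $G'$ with the desired outdegree guarantees.

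For the length bound, each matched node's two red halves are its half of the matching edge and its half of its outgoing $G'$-edge; each unmatched node's two red halves lie on two of its outgoing $G'$-edges. A matching edge always has both halves red and so is traversable by a red half-path, but a $G'$-edge is traversable only when both endpoints happen to claim it — which in a strict proper orientation of $G'$ never happens. In the clean case a red half-path takes the shape $[G'\text{-half}]-[M\text{-half}]-[M\text{-half}]-[G'\text{-half}]$ of length $4$, or consists of the two $G'$-halves at a single $u\in U$ and has length $2$. The split-and-contract step can cause a small number of $G'$-edges near $U$ to become fully red, and each such edge extends a half-path by at most one more length-$4$ segment, giving the claimed bound of length $\leq 8$.

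The main obstacle is the split-and-contract gadget itself: after contracting $u^{(i)}$ with a matched neighbor $w$, the sinkless orientation on $H$ might place the outgoing edge of the merged node among $w$'s original edges rather than among $u^{(i)}$'s, leaving $u$ without its required second red half-edge. Handling this requires choosing the split sizes, the contraction partners, and the back-translation carefully (and possibly invoking \Cref{corollary:sinkless-sourceless} on $H$ so that every virtual node has both an outgoing and an incoming edge, increasing the flexibility of the back-translation). This is also precisely where the factor-of-two slack in the path-length bound — going from the idealized $4$ to the actual $8$ — comes from.
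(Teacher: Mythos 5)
Your plan diverges from the paper at the crucial step of handling the unmatched nodes, and the divergence opens a gap you have noticed but not closed. You split $u\in U$ into $u^{(1)}$ (degree 3) and $u^{(2)}$ (degree 2), and then \emph{merge} $u^{(2)}$ into a matched neighbor $w$ so that the merged virtual node has degree $\geq 3$. After applying a sinkless (or sinkless-and-sourceless) orientation, the single guaranteed outgoing edge of that merged node may be one of $w$'s original edges rather than one of $u^{(2)}$'s two edges. In that case $u$ ends up with only one red half-edge while $w$ ends up with three (one from $M$, one from its own outgoing pick, and one from the merged node's outgoing pick), so \emph{both} the ``exactly two'' requirement at $u$ and at $w$ are violated. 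You flag this as ``the main obstacle'' and suggest invoking \Cref{corollary:sinkless-sourceless}, but a sinkless-and-sourceless orientation only guarantees one outgoing and one incoming edge at the merged node, and neither of them is guaranteed to lie on $u^{(2)}$'s side; so the attribution problem is unresolved. You would also need to argue that a system of distinct contraction partners always exists (a matched $w$ of degree $4$ in $G'$ could have all its $G'$-edges go to the same unmatched $u$), and you do not handle self-loops, which cannot participate in $M$ and must be sequestered before matching.

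The paper's proof sidesteps the attribution problem by never letting the degree-2 fragment $u^{(2)}$ participate in the sinkless orientation at all. Instead of merging $u^{(2)}$ into a neighbor, it \emph{series-contracts} it: the two edges of $u^{(2)}$ are fused into a single edge directly joining $u^{(2)}$'s two neighbors, and $u^{(2)}$ is deleted. After $M$ (and one self-loop per self-loop node) is also removed, the remaining graph has $V_M$ at degree $4$, self-loop nodes at degree $3$, and the $u^{(1)}$ fragments at degree $3$; a sinkless orientation gives each node one outgoing edge, and only the degree-$5$ originals (i.e., $V_M$ and the self-loop nodes, \emph{not} the $u^{(1)}$ fragments) pick a red half-edge from it. The red half-edges at $u$ come entirely from reverting the series contraction with the rules ``black--black $\mapsto$ black--red--red--black'' and ``red--black $\mapsto$ red--red--red--black'', which deterministically give $u^{(2)}$ exactly two red half-edges regardless of how the orientation went. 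That is the idea you are missing: make the small fragment disappear, then restore it with a local recoloring rule that always colors its two half-edges red, rather than trying to win those two half-edges from an orientation it shares with a high-degree neighbor.
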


\begin{proof}
Let $G = (V,E)$ be a $5$-regular multigraph. Let $V_L$ be the set of nodes that have at least one self-loop; for each such node, we pick one loop and add it to $L\subseteq E$.

Then find a maximal matching $M$ in the graph induced by the nodes $V \setminus V_L$; this is possible in time $O(\log^* n) = o(\Tso)$ with a deterministic algorithm, as we have a constant maximum degree. Let $V_M$ be the set of matched nodes, and let $V_U$ be the set of unmatched nodes. Note that $V_U$ is an independent set of nodes and none of these have any self-loops.

We split each node of $V_U$ arbitrarily into two parts: a node of degree $2$ and a node of degree $3$. Let $V_2$ be the set of degree-$2$ nodes, and let $V_3$ be the set of degree-$3$ nodes formed this way, and write $V_5 := V_L \cup V_M$ for all other nodes (which have degree $5$).

Note that for each $v \in V_2$, both of its neighbors are in $V_5$. Now we eliminate the nodes of $V_2$ by contracting each path of the form $V_5$--$V_2$--$V_5$; let $C$ be the set of edges that result from such contractions. We have the following setting:
\begin{itemize}[noitemsep]
    \item $C$, $L$, and $M$ are disjoint sets of edges.
    \item The endpoints of $C$, $L$, and $M$ are in $V_5$.
\end{itemize}
Now we remove the edges of $L$ and $M$. We have a multigraph $G'$ with the following sets of nodes:
\begin{itemize}[noitemsep]
    \item $V_L$: nodes of degree $3$ (they lost two endpoints when we eliminated self-loops).
    \item $V_M$: nodes of degree $4$ (they lost one endpoint when we eliminated the matching).
    \item $V_3$: nodes of degree $3$.
\end{itemize}
We find a sinkless orientation in $G'$, using e.g. \Cref{corollary:sinkless-sourceless}. Then all nodes of $V_5=V_L\cup V_M$ pick one outgoing edge and label this half-edge red. We have:
\begin{itemize}[noitemsep]
    \item Nodes of $V_L$ and $V_M$ are incident to exactly one red half-edge.
    \item Nodes of $V_3$ are not incident to any red half-edges.
    \item Each edge has at most one red half.
    \item The longest red path has length $1$.
\end{itemize}
Then we put back $M$ and label both halves of these edges red. We also put back $L$ and label exactly one endpoint of these edges red. We have:
\begin{itemize}[noitemsep]
    \item Nodes of $V_L$ and $V_M$ are incident to exactly two red half-edges.
    \item Nodes of $V_3$ are not incident to any red half-edges.
    \item The longest red path has length $4$ (an edge from $M$ plus two half-edges).
\end{itemize}
Then we revert the contractions and put back the nodes of set $V_2$. Note that each edge of $C$ had at most one red half-edge. We apply the following rules to color the new half-edges:
\begin{itemize}[noitemsep]
    \item black--black becomes black--red--red--black,
    \item red--black becomes red--red--red--black.
\end{itemize}
We obtain:
\begin{itemize}[noitemsep]
    \item Nodes of $V_L$ and $V_M$ are incident to exactly two red half-edge.
    \item Nodes of $V_3$ are not incident to any red half-edges.
    \item Nodes of $V_2$ are incident to exactly two red half-edges.
    \item The longest red path has length $8$.
\end{itemize}
Finally, we combine each pair of $u \in V_2$ and $v \in V_3$ to restore the original multigraph $G$. Here $u$ contributes two red half-edges and $v$ does not contribute any red half-edges. Overall, all nodes of $G$ are incident to exactly two red half-edges.
\end{proof}

Now we are ready to prove \Cref{lemma:outdeg-2}. Thanks to a half-path decomposition, this is straightforward. Incidentally, we get a strong $2$-orientation for free here, even though we only need a weak $2$-orientation.

\begin{proof}[Proof of \Cref{lemma:outdeg-2}]
Given a $5$-regular multigraph $G$, we first find a decomposition with half-paths of length at most $8$. Split each node into red and black parts: a degree-$2$ node incident to two red half-edges, and a degree-$3$ node incident to three black edges. Now each path of degree-$2$ nodes consists of at most $4$ nodes. We contract such paths to single edges to obtain a $3$-regular multigraph. We apply \Cref{corollary:sinkless-sourceless} to orient it (this also orients the edges that represent paths), and then undo the contractions where edges in a path are oriented according to the orientation of the edge representing the path. Now we have an oriented multigraph $G'$ in which degree-$3$ nodes have outdegree and indegree at least $1$, and degree-$2$ nodes have outdegree and indegree equal to $1$. Undo the split to get back to multigraph $G$; now each original node has outdegree and indegree at least $2$.
\end{proof}


\section{Directed and Undirected Splits}\label{sec:mainsplitting}

We are now ready to prove our main result:

\thmMainSplitting*

\begin{proof}
For both parts apply
\Cref{lemma:mainPathDecomposition}, which provides a $\big(\delta(v),\bigO(1/\eps)\big)$-path decomposition $\mathcal{P}$ with  $\delta(v)=\eps d(v)+3$ if $\eps d(v)\geq1$ and $\delta(v)=4$ otherwise.

\subparagraph{Proof of (b).}
Nodes color each path of $\mathcal{P}$ alternating with red and blue. Because the length of a path in $\mathcal{P}$ is bounded by $\bigO(1/\eps)$ this can be done in $\bigO(1/\eps)$ rounds.

Consider some node $v$ and observe that $v$ has one red and one blue edge for any path where $v$ is not a startpoint or endpoint. Thus the discrepancy of node $v$ is bounded above by $\delta(v)\leq \eps d(v)+4$.

\subparagraph{Proof of (a).}
Use \Cref{corollary:sinkless-sourceless} to compute an orientation $\pi_{\mathcal{P}}$ of $G(\mathcal{P})$ in which all nodes which have degree at least three in $G(\mathcal{P})$ have at least one incoming and one outgoing edge. Then orient paths in the original graph according to $\pi_{\mathcal{P}}$ as in the proof of \Cref{lemma:fromPathDecompToStrongOrient} and denote the resulting orientation of the edges of $G$ with  $\pi_G$.

Consider some node $v$ and observe that orienting any path that contains $v$ but where $v$ is not a startpoint or endpoint adds exactly one incoming edge and one outgoing edge for $v$.
Therefore, the discrepancy of the indegrees and outdegrees of $v$ in $\pi_{\mathcal{P}}$ bounds from above the discrepancy of the indegrees and outdegrees in $\pi_G$. The goal is to upper bound this discrepancy as desired.

Therefor let $d_{\mathcal{P}}(v)$ denote the degree of $v$ in $G(\mathcal{P})$.
If $d_{\mathcal{P}}(v)$ is at least three  then its discrepancy in $\pi_{\mathcal{P}}$ is bounded by $d_{\mathcal{P}}(v)-2$ as the algorithm from \Cref{corollary:sinkless-sourceless} provided one incoming and one outgoing edge for $v$ in $G(\mathcal{P})$. 
Furthermore we obtain that $d_{\mathcal{P}}(v)$ and $d(v)$ have the same parity because $d(v)=d_{\mathcal{P}}(v)+2x$ holds where $x$ is the number of paths that contain $v$ but where $v$ is neither a startpoint nor an endpoint.  
Thus we have the following cases.

\begin{itemize}
\item $d_{\mathcal{P}}(v)\geq 3$:
\begin{itemize}
\item$\eps d(v)\geq 1$: $v$'s  discrepancy in $\pi_G$ is bounded by $d_{\mathcal{P}}(v)-2\leq \eps d(v)+1$.
\item$\eps d(v)< 1$, $d(v)$ even: $v$'s discrepancy in $\pi_G$ is bounded by $d_{\mathcal{P}}(v)-2\leq 2$.
\item$\eps d(v)< 1$,  $d(v)$ odd:  As $d_{\mathcal{P}}(v)$ has to be odd and $3\leq d_{\mathcal{P}}(v)\leq \delta(v)=4$ holds we have $d_{\mathcal{P}}(v)= 3$. Thus $v$'s  discrepancy in $\pi_G$ is bounded by $d_{\mathcal{P}}(v)-2\leq 1$.
\end{itemize}

\item $d_{\mathcal{P}}(v)< 3$:
\begin{itemize}
\item$d(v)$ even: We have $d_{\mathcal{P}}\in \{0,2\}$ and $v$'s discrepancy in $\pi_G$ is also $0$ or $2$.
\item$d(v)$ odd: We have  $d_{\mathcal{P}} = 1$ and  $v$'s discrepancy in $\pi_G$ is also $1$.
\end{itemize}
\end{itemize}
In all cases we have that the discrepancy of node $v$ is upper bounded by $\eps d(v)+1$ if $d(v)$ is even and by $\eps d(v)+2$ if $d(v)$ is even, which proves the result.
\qedhere
\end{proof}


\section[\texorpdfstring{$((2+o(1))\Delta)$-Edge Coloring via Degree Splitting}{((2+o(1))Delta)-Edge Coloring via Degree Splitting}]{\texorpdfstring{\boldmath$((2+o(1))\Delta)$-Edge Coloring via Degree Splitting}{((2+o(1))Delta)-Edge Coloring via Degree Splitting}}\label{sec:edgeColoring}

In this section we will show how to use the undirected edge splitting algorithm to find an edge coloring:

\crlEdgeColoring*

\begin{proof}
The coloring is achieved by iterated application of the undirected splitting result of \Cref{thm:mainSplitting}. Set $\gamma = \frac{\eps}{20\log \Delta}$. In each of $h=\log \frac{\epsilon\Delta}{18}$ recursive iterations we apply the splitting of \Cref{thm:mainSplitting} with parameter $\gamma$ to each of the parts in parallel, until we reach parts with degree $O(1/\eps)$. If the maximum degree of each part before iteration $i$ is upper bounded by $\Delta_{i-1}$ the maximum degree of the parts is upper bounded by 
$\Delta_i\leq \frac{1}{2}(\Delta_{i-1}+\gamma \Delta_{i-1}+4)$ after iteration $i$. An induction on the number of iterations shows that the maximum degree of each part after iteration $h$ is upper bounded by 
\begin{align*}\left(\frac{1+\gamma}{2}\right)^h\Delta+2\sum_{i=0}^{h-1}\left(\frac{1+\gamma}{2}\right)^i\leq \left(\frac{1+\gamma}{2}\right)^h\Delta +5=:\Delta_h,\end{align*} where the last inequality follows with the geometric sum formula and with $\gamma\leq 1/10$. 
In the end, we have partitioned the edges into $2^h$ classes of maximum degree at most $\Delta_h=O(1/\eps)$. We can easily compute a $(2\Delta_h-1)$-edge coloring of each of these classes, all in parallel and with different colors, in $O(\Delta_h + \log^* n)=O(1/\eps + \log^* n)$ rounds, using the classic edge coloring algorithm of Panconesi and Rizzi~\cite{panconesi-rizzi}. Hence, we get an edge coloring of the whole graph with 
\begin{align*}2^h\cdot (2\Delta_h-1) & \leq \big((1+\gamma)^{\log \Delta}\big)2\Delta +9\cdot 2^h\leq 2e^{\eps/20}\Delta+\frac{\eps}{2}\Delta \\
& \leq \left(2+\frac{\eps}{2}\right)\Delta +\frac{\eps}{2}\Delta\leq (2+\eps)\Delta
\end{align*} colors.
Each iteration has round complexity
\begin{align*}
O\biggl(\frac{1}{\gamma}\cdot\log\frac{1}{\gamma}\cdot\log^{1.71}\log\frac{1}{\gamma}\cdot \log n\biggr) & = O\biggl(\frac{\log \Delta}{\eps}\cdot\log\frac{\log \Delta}{\eps}\cdot\log^{1.71}\log\frac{\log \Delta}{\eps}\cdot \log n \biggr) \\
& = O\biggl(\frac{\log \Delta}{\eps}\cdot\log\log \Delta\cdot \log^{1.71}\log\log \Delta \cdot \log n \biggr).
\end{align*}
 The total round complexity, over all the $\log \Delta$ iterations and the last coloring step, is
\[
\begin{split}
\log \Delta \cdot O\biggl(\frac{\log \Delta}{\eps} \cdot\log\log\Delta\cdot (\log\log\log \Delta)^{1.71} \cdot \log n \biggr) + O\biggl(1/\eps + \log^* n\biggr) = \\
\bigO\biggl(\frac{\log^2\Delta}{\epsilon} \cdot \log\log \Delta \cdot \big(\log\log\log\Delta\big)^{1.71} \cdot \log n\biggr).
\qedhere
\end{split}
\]
\end{proof}


\section[Lower Bound for Weak 2-Orientation in 4-Regular Graphs]{Lower Bound for Weak 2-Orientation \newline in 4-Regular Graphs}\label{sec:weak2orientation-lb}

We have seen that we can efficiently find, e.g., weak and strong 1-orientations in 3-regular graphs and weak and strong 2-orientations in 5-regular graphs. We will now prove that it is not possible to find weak or strong 2-orientations in 4-regular graphs efficiently.

\begin{figure}
\centering
\includegraphics[scale=0.4]{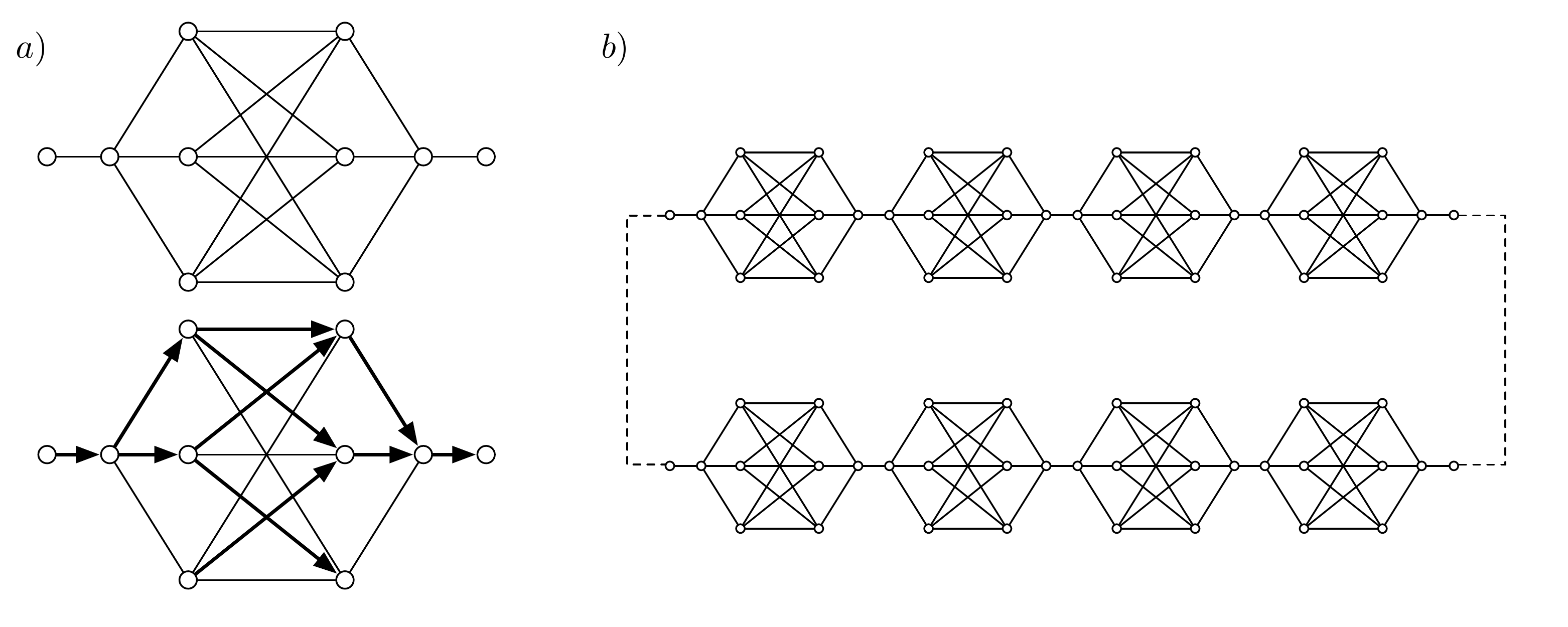}
\caption{Gadget $U$ consists of eight nodes. If $u_1$ has an incoming edge from outside the gadget, then $u_8$ must have an edge going out of the gadget. The reduction to sinkless orientation is by constructing a cycle of gadgets. The edges between the gadgets must be oriented in a consistent manner.} \label{fig:weak2orientation-gadget}
\end{figure}

\begin{theorem} \label[theorem]{thm:weak2orientation-lb}
    Weak 2-orientation in 4-regular graphs requires $\Omega(n)$ time.
\end{theorem}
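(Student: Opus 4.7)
The plan is to reduce from the problem of finding a sinkless orientation on an $n$-node cycle---equivalently, a globally consistent orientation of $C_n$---which requires $\Omega(n)$ rounds in the $\LOCAL$ model by a standard indistinguishability argument: for any $o(n)$-round algorithm, two cycles with carefully chosen identifier reassignments can be made to look identical in the local view of almost every node yet demand different outputs for global consistency.

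First I would establish the one-in/one-out property of the gadget $U$: in any weak $2$-orientation of a $4$-regular graph containing $U$ as a subgraph, with external ports $u_1, u_8$ each incident to exactly one external edge (and hence $3$ internal edges), if the external edge at $u_1$ is incoming, then the external edge at $u_8$ must be outgoing. Because the host graph is $4$-regular and every node has outdegree at least $2$, every node in fact has outdegree \emph{exactly} $2$ (total outdegree equals total number of edges). Starting from the given incoming edge at $u_1$, the outdegree-exactly-two constraint forces two of $u_1$'s three internal edges outward, and tracing the chain of forced implications through the eight internal nodes of $U$ ultimately leaves $u_8$ in a state where its only way to attain outdegree $2$ is to orient its external edge outward. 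This verification is a finite case analysis local to the gadget and does not depend on the surrounding graph.

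Next, given $n$ divisible by $8$, I would build the lower-bound graph $G_n$ by arranging $n/8$ copies $U^{(1)}, \dotsc, U^{(n/8)}$ of the gadget in a cycle, placing a \emph{linking edge} between $u_8^{(i)}$ and $u_1^{(i+1)}$ (indices mod $n/8$). Because each port contributes degree $3$ internally plus $1$ external linking edge, $G_n$ is $4$-regular. Iterating the gadget property around the cycle shows that the linking edges must be consistently oriented: either every linking edge is directed from $u_8^{(i)}$ to $u_1^{(i+1)}$, or every one points in the reverse direction. Either way, the restriction of the weak $2$-orientation to the linking edges constitutes a sinkless orientation of the $(n/8)$-cycle formed by the gadget backbone.

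To conclude, suppose for contradiction that some algorithm $\mathcal{A}$ computes a weak $2$-orientation in $4$-regular graphs in $T(n) = o(n)$ rounds. Given any cycle $C_k$ with unique identifiers, we assign to each cycle node an $8$-node gadget and simulate $\mathcal{A}$ on $G_{8k}$ with constant-factor overhead per round. Reading off the induced orientation of the backbone edges then yields a sinkless orientation of $C_k$ in $O(T(8k)) = o(k)$ rounds, contradicting the $\Omega(k)$ lower bound for consistent cycle orientation. The main obstacle is the gadget verification: one must design the eight internal edges of $U$ so that the outdegree-exactly-two constraint really does propagate orientation information from $u_1$ to $u_8$ in the claimed way, with no ``escape'' configuration that would decouple the two ports. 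The cycle construction and the reduction itself are otherwise routine.
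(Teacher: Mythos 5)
Your high-level plan mirrors the paper exactly: a constant-size degree-$4$ gadget with a ``one-in/one-out'' propagation property between two ports, a ring of such gadgets, and a reduction from consistent cycle orientation, which requires $\Omega(n)$ rounds. Your observation that in a $4$-regular graph every node must have outdegree \emph{exactly} $2$ (since the sum of outdegrees equals $|E|=2n$) is correct and is a clean way to phrase the constraint, though the paper manages with one-sided inequalities.

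The genuine gap is that you never construct the gadget, and you explicitly defer this as ``the main obstacle.'' But the gadget is the entire technical content of the theorem; without it the reduction is vacuous. For reference, the paper's $U$ has $V(U)=\{u_1,\dots,u_8\}$ with $U_L=\{u_2,u_3,u_4\}$ and $U_R=\{u_5,u_6,u_7\}$ forming a $K_{3,3}$, $u_1$ adjacent to all of $U_L$, and $u_8$ adjacent to all of $U_R$; this makes each port degree $3$ internally, so attaching one external edge gives a $4$-regular host graph. The propagation then follows from a short counting argument: if $u_1$'s external edge is incoming, then at least two of its three edges into $U_L$ are outgoing; the third node of $U_L$ must also send at least one edge into $U_R$ (it either receives $u_1$'s remaining edge, or it sends an edge back to $u_1$ and still needs one more outgoing edge). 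Thus at least $5$ of the $9$ edges of the $K_{3,3}$ go from $U_L$ to $U_R$, leaving at most $4$ pointing back. Each node of $U_R$ must send at least one edge into $U_L$ (else its outdegree is at most $1$), so at most one of them can send two; the remaining two nodes of $U_R$ have only one outgoing edge into $U_L$ and must therefore orient their edge to $u_8$ outward. Hence $u_8$ receives at least two internal edges, has at most one internal outgoing edge, and is forced to orient its external edge outward. Absent an argument of this kind, your proof attempt establishes only that a reduction \emph{would} work if a suitable gadget \emph{exists}, which is precisely what needs to be shown.
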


\begin{proof}
  The proof is by reduction to sinkless orientation on cycles. We construct a graph consisting of constant-sized gadgets connected into a cycle such that the edges between the gadgets must be oriented consistently. This is a global problem that requires $\Omega(n)$ time.

  The gadget $U$ consists of eight nodes $V(U) = \{u_1, u_2, \dots, u_8 \}$, with $U_L = \{u_2, u_3, u_4 \}$ and $U_R = \{u_5, u_6, u_7 \}$ forming the two sides of a complete bipartite graph $K_{3,3}$. In addition, $u_1$ is connected to all nodes in $U_L$ and $u_8$ to all nodes in $U_R$.

  Now for any $n$, we construct a graph $G$ on $8n$ nodes as follows. Take $n$ copies $U_1, U_2, \dots, U_n$ of $U$, and for every $i = 1, \dots n$, connect the $i$th copy of $u_8$ (denoted by $u_{i,8}$) to $u_{i+1,1}$ modulo $n$.
  See \Cref{fig:weak2orientation-gadget} for an illustration.

  Now consider an edge $\{u_{i,8}, u_{i+1,1} \}$ and assume that it is oriented from $u_{i,8}$ to $u_{i+1,1}$. We will show that the gadget $U$ propagates orientations, that is, then we must have that $\{u_{i+1,8}, u_{i+2,1} \}$ is also oriented from $u_{i+1,8}$ to $u_{i+2,1}$. In any weak 2-orientation, $u_{i+1,1}$ must have two outgoing edges. Assume w.l.o.g. that these are to $u_{i+1,2}$ and $u_{i+1,3}$. In addition, $u_{i+1,4}$ must have an outgoing edge, giving a total of five outgoing edges from $U_{i+1, L}$ towards $U_{i+1, R}$. Therefore there must be at least two nodes in $U_{i+1, R}$ that have an outgoing edge towards $u_{i+1, 8}$, and $u_{i+1,8}$ must then have an outgoing edge toward $u_{i+2,1}$.

  Sinkless orientation requires time $\Omega(n)$ in cycles, since all edges must be oriented consistently. If weak 2-orientation could be solved in time $o(n)$ on 4-regular graphs, then nodes could virtually add gadgets $U$ between each edge and there would be an $o(n)$ time algorithm for sinkless orientation on cycles.
\end{proof}

\bibliographystyle{plainurl}
\bibliography{references}

\begin{thebibliography}{10}

\bibitem{Barenboim:edge-coloring}
Leonid Barenboim and Michael Elkin.
\newblock Distributed deterministic edge coloring using bounded neighborhood
  independence.
\newblock In {\em Proc.\ PODC 2011}, pages 129--138, 2011.

\bibitem{beck1981integer}
J{\'o}zsef Beck and Tibor Fiala.
\newblock ``{I}nteger-making'' theorems.
\newblock {\em Discrete Applied Mathematics}, 3(1):1--8, 1981.

\bibitem{bednarchak1997note}
Debe Bednarchak and Martin Helm.
\newblock A note on the beck-fiala theorem.
\newblock {\em Combinatorica}, 17(1):147--149, 1997.

\bibitem{brandt2016lower}
Sebastian Brandt, Orr Fischer, Juho Hirvonen, Barbara Keller, Tuomo
  Lempi{\"a}inen, Joel Rybicki, Jukka Suomela, and Jara Uitto.
\newblock A lower bound for the distributed {L}ov{\'a}sz local lemma.
\newblock In {\em Proc.\ STOC 2016}, pages 479--488, 2016.

\bibitem{bukh2016improvement}
Boris Bukh.
\newblock An improvement of the {B}eck--{F}iala theorem.
\newblock {\em Combinatorics, Probability {\&} Computing}, 25(03):380--398,
  2016.

\bibitem{chang2016exponential}
Yi-Jun Chang, Tsvi Kopelowitz, and Seth Pettie.
\newblock An exponential separation between randomized and deterministic
  complexity in the local model.
\newblock In {\em Proc.\ FOCS 2016}, pages 615--624, 2016.

\bibitem{chazelle2000discrepancy}
Bernard Chazelle.
\newblock {\em The Discrepancy Method: Randomness and Complexity}.
\newblock Cambridge University Press, 2000.

\bibitem{czygrinow2001coloring}
Andrzej Czygrinow, Micha{\l} Ha{\'n}{\'c}kowiak, and Micha{\l} Karo{\'n}ski.
\newblock Distributed {$O(\Delta\log n)$}-edge-coloring algorithm.
\newblock In {\em Proc.\ ESA 2001}, pages 345--355, 2001.

\bibitem{dinitz2006dinitz}
Yefim Dinitz.
\newblock {D}initz' algorithm: The original version and {E}ven's version.
\newblock In {\em Theoretical Computer Science, Essays in Memory of Shimon
  Even}, pages 218--240. Springer, 2006.

\bibitem{FOCS17}
Manuela Fischer, Mohsen Ghaffari, and Fabian Kuhn.
\newblock Deterministic distributed edge-coloring via hypergraph maximal
  matching.
\newblock In {\em 58th {IEEE} Annual Symposium on Foundations of Computer
  Science, {FOCS} 2017, Berkeley, CA, USA, October 15-17, 2017}, pages
  180--191, 2017.

\bibitem{FOCS18}
Mohsen Ghaffari, David~G. Harris, and Fabian Kuhn.
\newblock On derandomizing local distributed algorithms.
\newblock In {\em 59th {IEEE} Annual Symposium on Foundations of Computer
  Science, {FOCS}}, 2018.

\bibitem{DISC17}
Mohsen Ghaffari, Juho Hirvonen, Fabian Kuhn, Yannic Maus, Jukka Suomela, and
  Jara Uitto.
\newblock Improved distributed degree splitting and edge coloring.
\newblock In {\em Proc.\ 31st Symp.\ on Distributed Computing (DISC)}, pages
  19:1--19:15, 2017.

\bibitem{GKMU17}
Mohsen Ghaffari, Fabian Kuhn, Yannic Maus, and Jara Uitto.
\newblock Deterministic distributed edge-coloring with fewer colors.
\newblock In {\em Proceedings of the 50th Annual {ACM} {SIGACT} Symposium on
  Theory of Computing, {STOC} 2018, Los Angeles, CA, USA, June 25-29, 2018},
  pages 418--430, 2018.

\bibitem{ghaffari17}
Mohsen Ghaffari and Hsin-Hao Su.
\newblock Distributed degree splitting, edge coloring, and orientations.
\newblock In {\em Proc.\ SODA 2017}, pages 2505--2523, 2017.

\bibitem{hanckowiak01}
Micha{\l} Ha{\'n}{\'c}kowiak, Micha{\l} Karo{\'n}ski, and Alessandro Panconesi.
\newblock On the distributed complexity of computing maximal matchings.
\newblock {\em SIAM J.\ Discrete Math.}, 15(1):41--57, 2001.

\bibitem{HarrisDerandomization}
David~G. {Harris}.
\newblock {Distributed Approximation Algorithms for Maximum Matching in Graphs
  and Hypergraphs}.
\newblock {\em ArXiv e-prints}, July 2018.
\newblock \href {http://arxiv.org/abs/1807.07645} {\path{arXiv:1807.07645}}.

\bibitem{israeli1986improved}
Amos Israeli and Yossi Shiloach.
\newblock An improved parallel algorithm for maximal matching.
\newblock {\em Inf.\ Process.\ Lett.}, 22(2):57--60, 1986.

\bibitem{karloff1987efficient}
Howard~J. Karloff and David~B. Shmoys.
\newblock Efficient parallel algorithms for edge coloring problems.
\newblock {\em J.\ Algorithms}, 8(1):39--52, 1987.

\bibitem{linial1987LOCAL}
Nathan Linial.
\newblock Distributive graph algorithms---global solutions from local data.
\newblock In {\em Proc.\ FOCS 1987}, pages 331--335, 1987.

\bibitem{naor1993can}
Moni Naor and Larry Stockmeyer.
\newblock What can be computed locally?
\newblock In {\em Proc.\ STOC 1993}, pages 184--193, 1993.

\bibitem{panconesi-rizzi}
Alessandro Panconesi and Romeo Rizzi.
\newblock Some simple distributed algorithms for sparse networks.
\newblock {\em Distributed Computing}, 14(2):97--100, 2001.

\bibitem{Peleg:2000}
David Peleg.
\newblock {\em Distributed Computing: A Locality-Sensitive Approach}.
\newblock SIAM, 2000.

\end{thebibliography}

\end{document}